\newcommand{\tr}{\mathrm{Tr}}
\newcommand{\psc}{\mathrm{\Omega}}
\newcommand{\parity}{\mathrm{\Pi}}
\newcommand{\indLnorm}{L^2}
\newcommand{\indslnorm}{\ell^2}
\newcommand{\indsupnorm}{\mathrm{sup}}
\newcommand{\indhsnorm}{\mathrm{HS}}
\newcommand{\fockx}[1]{\psi^{\mathrm{Fock}}_{#1}(x)}
\newcommand{\al}{\hat{a}_{\lambda}}
\newcommand{\bl}{ \hat{a}^\dagger_{\lambda}}
\newcommand{\ahat}{\hat{a}}
\newcommand{\ahatdagg}{\hat{a}^\dagger}
\newcommand{\F}{\mathcal{F}}
\newcommand{\FF}{F}
\newcommand*{\affaddr}[1]{#1}
\newcommand*{\affmark}[1][*]{\textsuperscript{#1}}
\journalname{Communications in Mathematical Physics}
\begin{document}

\title{Phase Spaces, Parity Operators, 
	and the Born-Jordan Distribution}	



\author{Bálint Koczor\protect\affmark[1,2,3]\,\href{https://orcid.org/0000-0002-4319-6870}{\includegraphics[height=10pt]{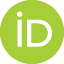}}       
\and
Frederik vom~Ende\protect\affmark[1,4]\,\href{https://orcid.org/0000-0002-2738-6893}{\includegraphics[height=10pt]{ORCID-iD_icon-64x64.png}}          \and
Maurice de Gosson\protect\affmark[5]\,\href{https://orcid.org/0000-0001-8721-1078}{\includegraphics[height=10pt]{ORCID-iD_icon-64x64.png}}
 \and
Steffen J.~Glaser\protect\affmark[1,4]\,\href{https://orcid.org/0000-0003-4099-3177}{\includegraphics[height=10pt]{ORCID-iD_icon-64x64.png}}
\and
Robert Zeier\protect\affmark[6,7]\,\href{https://orcid.org/0000-0002-2929-612X}{\includegraphics[height=10pt]{ORCID-iD_icon-64x64.png}}}

\authorrunning{Bálint Koczor et al.} 

\institute{
\affaddr{\affmark[1] Technische Universität München, Department Chemie,
              Lichtenbergstrasse 4, 85747 Garching, Germany, \email{frederik.vom-ende@tum.de, glaser@tum.de}}\\
\affaddr{\affmark[2] Munich Centre for Quantum Science and Technology (MCQST), Schellingstra{\ss}e 4, 80799 M{\"u}nchen, Germany}\\
\affaddr{\affmark[3] University of Oxford, Department of Materials, Parks Road,
			Oxford OX1 3PH, United Kingdom, \email{balint.koczor@materials.ox.ac.uk}}\\
\affaddr{\affmark[4] Munich Centre for Quantum Science and Technology (MCQST) \& 
Munich Quantum Valley (MQV), Schellingstra{\ss}e 4, 80799 M{\"u}nchen, Germany}\\
\affaddr{\affmark[5] Faculty of Mathematics (NuHAG), University of Vienna,
              Oskar-Morgenstern-Platz 1, 1090 Wien, Austria, \email{maurice.de.gosson@univie.ac.at}}\\
\affaddr{\affmark[6] Adlzreiterstrasse 23, 80337 München, Germany}\\
\affaddr{\affmark[7] Forschungszentrum Jülich GmbH, Peter Grünberg Institute,
           Quantum Control (PGI-8), 54245 Jülich, Germany, \email{r.zeier@fz-juelich.de}}}

\date{September 27, 2022}

\maketitle

\begin{abstract}
Phase spaces as given by the Wigner distribution function
provide a natural description of infinite-dimensional
quantum systems. They are an important tool in quantum optics and have been widely 
applied in the context of time-frequency analysis and pseudo-differential operators.
Phase-space distribution functions are usually specified via
integral transformations or convolutions
which can be averted and subsumed by (displaced) parity operators
proposed in this work. Building on earlier work 
for Wigner distribution functions [A.~Grossmann, Comm.\ Math.\ Phys.\ 48(3), 191 (1976)],
parity operators give rise to a general class of distribution functions
in the form of quantum-mechanical expectation values.
This enables us to precisely characterize the mathematical existence of
general phase-space distribution functions.
We then relate these distribution functions to the so-called Cohen class [L.~Cohen, J.\ Math.\ Phys. 7(5), 781 (1966)]
and recover various quantization schemes and distribution functions 
from the literature.
The parity-operator approach is also applied to the Born-Jordan
distribution which originates from the Born-Jordan quantization
[M.~Born, P.~Jordan, Z.\ Phys.\ 34(1), 858 (1925)].
The corresponding parity operator is written as a weighted average of both displacements
and squeezing operators and we determine its generalized spectral decomposition.
This leads to an efficient computation of the Born-Jordan parity operator 
in the number-state basis and example quantum states reveal unique features of the Born-Jordan
distribution.\vfill
\end{abstract}

\section{Introduction \label{intro}}

There are at least three logically independent descriptions of quantum mechanics:
the Hilbert-space formalism \cite{cohen1991quantum}, 
the path-integral method \cite{feynman2005}, and the 
phase-space approach such as given by the Wigner function
\cite{carruthers1983,hillery1997,kim1991,lee1995,gadella1995,zachos2005,schroeck2013,SchleichBook,Curtright-review}.
The phase-space formulation of quantum mechanics was initiated by Wigner in his
ground-breaking work \cite{wigner1932} from 1932, in which the Wigner function
of a spinless  non-relativistic quantum particle was introduced as
a quasi-probability distribution. 
The Wigner function can be used to express quantum-mechanical expectation values as classical
phase-space averages. More than a decade later, Groenewold
\cite{Gro46} and Moyal \cite{Moy49} formulated quantum mechanics
as a statistical theory on a classical phase by mapping a quantum state
to its Wigner function and they interpreted this correspondence 
as the inverse of the Weyl quantization \cite{Wey27,Weyl31,Weyl50}.

Coherent states have become a natural way to extend phase spaces to more general
physical systems \cite{1bayen1978,2bayen1978,berezin74,berezin75,brif98,perelomov2012,gazeau,ali2000,BERGERON2013}.
In this regard, a new focus on phase-space representations for
coupled, finite-dimensional quantum systems (as spin systems) 
\cite{rundle2021overview,thesis,DROPS,koczor2016,koczor2017,koczor2018,KdG10,tilma2016,RTD17,LZG18,koczor2021,koczor2020}
and their tomographic reconstructions \cite{rundle2017,leiner17,Leiner18,koczor2017}
has emerged recently.
A spherical phase-space representation of a single, finite-dimensional quantum system
has been used to naturally recover the infinite-dimensional phase space in the
large-spin limit \cite{koczor2017,koczor2018}.
These spherical phase spaces have been defined in terms of quantum-mechanical
expectation values of rotated parity operators \cite{tilma2016,RTD17,LZG18,koczor2017,koczor2018,thesis}
(as discussed below)
in analogy with displaced reflection operators in flat phase spaces.
But in the current
work, we exclusively focus on the (usual) infinite-dimensional case 
which has Heisenberg-Weyl symmetries
\cite{brif98,perelomov2012,gazeau,LI94}.
This case has been playing a
crucial role in characterizing the quantum theory of light \cite{glauber2006nobel}
via coherent states and displacement operators \cite{Cahill68,cahill1969,agarwal1970,Agarwal68}
and has also been
widely used in the context of time-frequency analysis and pseudo-differential operators
\cite{cohen1966generalized,Cohen95,boggiatto2010time,
boggiatto2010weighted,boggiatto2013hudson,bornjordan,thewignertransform,groechenig2001foundations}.
Many particular phase spaces have been unified under the concept
of the so-called Cohen class \cite{cohen1966generalized,Cohen95,thewignertransform}
(see Definition~\ref{defofcohenclass} below),
i.e.~all functions which are related to the
Wigner function via a convolution with a distribution 
(which is also known as the Cohen kernel).

Phase-space distribution functions
are mostly described by one of the following three forms:
(a) convolved derivatives of the Wigner function \cite{bornjordan,thewignertransform},
(b) integral transformations of a pure state (i.e.~a rapidly decaying, complex-valued function)
\cite{wigner1932,cohen1966generalized,Cohen95,boggiatto2010time,
	boggiatto2010weighted,boggiatto2013hudson,bornjordan,thewignertransform},
or (c) as integral transformations of quantum-mechanical expectation values \cite{Cahill68,cahill1969,agarwal1970,Agarwal68}.
Also, Wigner functions (and the corresponding Weyl quantization)
are usually described by integral transformations.
But the seminal work of Grossmann \cite{Grossmann1976,thewignertransform}
(refer also to \cite{Royer77}) allowed for a direct interpretation of the Wigner function 
as a quantum-mechanical expectation value of a displaced parity operator
$\parity$ (which reflects coordinates $\parity\, \psi(x) = \psi(-x)$ of a quantum state $\psi$).
In particular, Grossmann \cite{Grossmann1976}
showed that the Weyl quantization of the delta distribution determines the parity operator $\parity$.
This approach has been widely adopted
\cite{dahl1982group,dahl1988morse,LIParity,bishop1994,gadella1995,Royer1989,Royer96,chountasis1999}.

However, parity operators similar to the one
by Grossmann and Royer \cite{Grossmann1976,Royer77,thewignertransform}
have still been lacking for general phase-space distribution 
functions. (Note that such a form appeared implicitly for $s$-parametrized 
distribution functions in \cite{moya1993,cahill1969}.)
In the current work, we generalize 
the previously discussed parity operator $\parity$ \cite{Grossmann1976,Royer77,thewignertransform}
for the Wigner function by introducing a family of parity operators $\parity_\theta$ 
(refer to Definition~\ref{def_parity_operator})
which is parametrized 
by a function or distribution $\theta$.
This enables us to
specify general phase-space distribution functions
in the form of quantum-mechanical expectation values
(refer to Definition~\ref{def_phase_space_function}) as
\begin{equation*}
\FF_\rho(\psc,\theta)
:= (\pi \hbar)^{-1} \, \tr\,[ \, \rho \, \mathcal{D}(\psc)  \parity_\theta  \mathcal{D}^\dagger(\psc) ].
\end{equation*}
We will refer to the above operator $\parity_\theta $ as a parity operator
	following the lead of Grossmann and Royer \cite{Grossmann1976,Royer77}
	and
	given its resemblance
	and close analogy to the reflection operator $\parity$ discussed in prior work 
	\cite{bishop1994,tilma2016,RTD17,LZG18,koczor2017,koczor2018,thesis}.
Here, $\mathcal{D}(\psc)$ denotes the displacement operator and $\Omega$ describes 
suitable phase-space coordinates (see Sec.~\ref{translations}). 
(Recall that $\hbar = h / (2\pi)$ is defined as the Planck constant $h$ divided by $2\pi$.)
The quantum-mechanical
expectation values  in the preceding equation
give rise to a rich family of phase-space distribution functions $\FF_\rho(\psc,\theta)$
which represent arbitrary (mixed) quantum states as given by their
density operator $\rho$. In particular,
this family of phase-space representations contains all elements from the 
(above mentioned) Cohen class 
and naturally includes the pivotal Husimi Q
and Born-Jordan distribution functions.

We would like to emphasize that our approach to phase-space representations
averts the use of integral transformations, Fourier transforms, or convolutions
as these are subsumed in the parity operator $\parity_\theta$ which is independent
of the phase-space coordinate $\Omega$.
Although our definition also relies on an integral transformation
given by a Fourier transform, it
is only applied once and is completely absorbed into the definition of a parity
operator thereby avoiding redundant applications of Fourier transforms.
This leads to significant advantages
as compared to earlier approaches:
\begin{itemize}
	\item conceptual advantages (see also \cite{moya1993,KdG10,tilma2016,koczor2017,RTD17}):
	\begin{itemize}
	\item The phase-space distribution function is given as a quantum-mechanical 
	expectation value.
	And this form nicely fits with the experimental reconstruction of quantum states
	\cite{Banaszek99,heiss2000discrete,Bertet02,deleglise2008,rundle2017,koczor2017,Lutterbach97}.
	\item All the complexity from integral transformations (etc.) is condensed into
	the parity operator $\parity_\theta$.
	\item The dependence on the distribution $\theta$ and the particular phase space
	is separated from the displacement $\mathcal{D}(\psc)$. 
	\end{itemize}
	\item computational advantages:
	\begin{itemize}
	\item The repeated and expensive computation of integral transformations (etc.) in earlier approaches is avoided as
	$\parity_\theta$ has to be determined only once. Also,
	the effect of the displacement $\mathcal{D}(\psc)$ is relatively easy to calculate.
	\end{itemize}
\end{itemize}
In this regard, the current work can also be seen
as a continuation of \cite{koczor2017} where the parity-operator approach has been emphasized,
but mostly for finite-dimensional quantum systems.
Moreover, we connect  
results from quantum optics \cite{cahill1969,Cahill68,glauber2006nobel,leonhardt97},
quantum-harmonic analysis \cite{werner1984,thewignertransform,bornjordan,daubechies1980distributions,
daubechies1980coherent,daubechies1980I,daubechies1983II,keyl2016,cohen2012weyl}, and
group-theoretical approaches \cite{brif98,perelomov2012,gazeau,LI94}.
It is also our aim to narrow the gap between different communities
where phase-space methods have been successfully applied.

On the other hand, a major contribution of our work is the analysis of
existence properties of generalized phase-space distributions and their parity operators.
While the Wigner function has been known to exist for the general class of tempered distributions
(a class of generalized functions that includes the pivotal $L^2$ space), we further illuminate
which classes of Cohen kernels yield well-defined generalized phase-space distribution functions.
Such existence questions are 
fully absorbed into the parity operators and precise conditions are used to guarantee their mathematical
existence.

Similarly as the parity operator $\parity$ (which
is the Weyl quantization of the delta distribution),
we show that its generalizations $\parity_\theta$ are
Weyl quantizations of the corresponding Cohen kernel $\theta$
(refer to Sec.~\ref{reltoquant} for the precise definition 
of the Weyl quantization used in this work).
We discuss
how these general results reduce to well-known special cases
and discuss properties of phase-space distributions in relation
to their parity operators $\parity_\theta$.
In particular, we consider the class of $s$-parametrized
distribution functions \cite{glauber2006nobel,cahill1969,Cahill68,moya1993},
which 
include the Wigner, Glauber P, and
Husimi Q functions, as well as the $\tau$-parametrized family, 
which has been proposed in the context of time-frequency analysis and pseudo-differential operators
\cite{bornjordan,boggiatto2010time,boggiatto2010weighted,boggiatto2013hudson}.
We derive spectral decompositions of parity operators for all of these phase-space families,
including the Born-Jordan distribution.
Relations of the form $\parity_{\theta}=A_\theta \circ \parity$
motivate the name ``parity operator''
as they are in fact compositions of the usual parity operator $\parity$ followed by some operator
$A_\theta$ that usually corresponds to a geometric or physical operation (which commutes with $\parity$).
In particular, $A_\theta$ is a squeezing operator for the $\tau$-parametrized family and corresponds to photon
loss for the $s$-parametrized family (assuming $s<0$).
This structure of the parity operators $\parity_{\theta}$ connects phase spaces to
elementary geometric and physical 
operations (such as  reflection, squeezing operators, photon loss)
and these concepts are central to applications:
the squeezing operator models a non-linear optical process which
generates non-classical states of light in quantum optics
\cite{mandel1995,glauber2007,leonhardt97}.
These squeezed states of light have been widely used
in precision interferometry \cite{schnabel2017,Grangier87,Xiao87,McKenzie2002}
or for enhancing the performance of imaging \cite{lugiato2002,treps2003},
and the gravitational-wave detector GEO600 has been operating
with squeezed light since 2010 \cite{abadie2011,grote2013}.

The Born-Jordan distribution and its parity operator
constitute a most peculiar instance among the phase-space approaches.
This distribution function has convenient properties, e.g., 
it satisfies the marginal conditions and therefore allows for
a probabilistic interpretation \cite{bornjordan}.
The Born-Jordan distribution is however difficult to compute. 
But most importantly, the Born-Jordan
distribution and its corresponding quantization scheme have a
fundamental importance in quantum mechanics.
In particular, there have been several attempts in the literature to find the
``right'' quantization rule for observables
using either algebraic or analytical techniques. In a recent paper \cite{FPBJ},
one of us has analyzed the Heisenberg and Schr\"{o}dinger pictures of quantum
mechanics, and it is shown that the equivalence of both theories requires that one
must use the Born--Jordan quantization rule (as proposed by Born and Jordan \cite{born25})
\begin{equation*}
\text{(BJ)} \quad
x^{m}p^{\ell} \mapsto \frac{1}{m{+}1}
\sum_{k=0}^{m}
\hat{x}^{k} \hat{p}^{\ell} \hat{x}^{m-k},
\end{equation*}
instead of the Weyl rule
\begin{equation*}
\text{(Weyl) } \quad
x^{m} p^{\ell} \mapsto \frac{1}{2^{m}}
\sum_{k=0}^{m}
\binom{m}{k} \hat{x}^{k} \hat{p}^{\ell} \hat{x}^{m-k}
\end{equation*}
for monomial observables. The Born--Jordan and Weyl rules yield the same
result only if $m<2$ or $\ell<2$; for instance in both cases the quantization
of the product $xp$ is $\frac{1}{2}(\hat{x}\hat{p} + \hat{p}\hat{x})$.
It is however easy to find physical examples which give
different results. Consider for instance the square of 
the $z$ component of 
the angular momentum: it is given by
\begin{equation*}
\ell_{z}^{2}=x^{2}p_{y}^{2}+y^{2}p_{x}^{2}-2xp_{x}yp_{y}\label{lx2}
\end{equation*}
and its Weyl quantization is easily seen to be
\begin{equation} \label{lx2w}
\textup{Op}_{\textup{Weyl}}(\ell_{z}^{2})
=\hat{x}_{x}^{2} \hat{p}_{y}^{2}
+ \hat{x}_{y}^{2} \hat{p}_{x}^{2}
- \tfrac{1}{2}(\hat{x}_{x} \hat{p}_{x} {+} \hat{p}_{x} \hat{x}_{x} )
(\hat{x}_{y} \hat{p}_{y} {+} \hat{p}_{y} \hat{x}_{y})
\end{equation}
while its Born--Jordan quantization is the different expression
\begin{equation} \label{lx2bj}
\textup{Op}_{\textup{BJ}}(\ell_{z}^{2})
=\hat{x}_{x}^{2} \hat{p}_{y}^{2}
+ \hat{x}_{y}^{2} \hat{p}_{x}^{2}
- \tfrac{1}{2}(\hat{x}_{x} \hat{p}_{x} {+} \hat{p}_{x} \hat{x}_{x} )
( \hat{x}_{y} \hat{p}_{y} {+} \hat{p}_{y} \hat{x}_{y} ) - \tfrac{1}{6}\hbar^{2}.
\end{equation}
(Recall that the operators $\hat{x}_\eta$ and $\hat{p}_\kappa$ satisfy the canonical commutation relations
$[\hat{x}_\eta,\hat{p}_\kappa] = i\hbar \delta_{\eta \kappa}$ using the spatial coordinates 
$\eta,\kappa\in\{x,y,z\}$ and the Kronecker delta $\delta_{\eta \kappa}$.)
One of us has shown in \cite{dilemma} that the use of (\ref{lx2bj}) instead of
(\ref{lx2w}) solves the so-called ``angular momentum
dilemma'' \cite{dahl1,dahl2}. To a general observable $a(x,p)$, the
Weyl rule associates the operator
\begin{equation*}
\textup{Op}_{\textup{Weyl}}(a)
=(2\pi\hbar)^{-1}\int \F_{\sigma}a(x,p)
\mathcal{D}(x,p) \, \mathrm{d}x \, \mathrm{d}p
\end{equation*}
where $\F_{\sigma}a$ is the symplectic Fourier transform of $a$ and
$\mathcal{D}(x,p)$ the displacement operator (see Sec.~\ref{translations}); in the
Born--Jordan case this expression is replaced with
\begin{equation*}
\textup{Op}_{\textup{BJ}}(a)
=(2\pi\hbar)^{-1}\int \F_{\sigma}a(x,p) K_{\textup{BJ}}(x,p)
\mathcal{D}(x,p) \, \mathrm{d}x \, \mathrm{d}p
\end{equation*}
where the filter function $K_{\textup{BJ}}(x,p)$ 
is given by
\[
K_{\textup{BJ}}(x,p)=\mathrm{sinc}( \tfrac{ px }{2\hbar} ) = \frac{\sin(\tfrac{px}{2\hbar})}{px/(2 \hbar)}.
\]

We obtain significant, new results for the case of
Born-Jordan distributions and therefore substantially advance on previous
characterizations.
In particular, we derive its parity operator 
$\parity_{\textup{BJ}}$ in the form of a weighted average
of geometric transformations
\begin{equation} \label{eq:intro_bornjordan}
\parity_{\textup{BJ}} = 
\tfrac{1}{4 \pi \hbar} \int \mathrm{sinc}( \tfrac{ px }{2\hbar} ) \, \mathcal{D}(x,p) \, \mathrm{d}x \, \mathrm{d}p
=
[\,
\tfrac{1}{4}	 \int_{-\infty}^{\infty}
\,
\mathrm{sech}(\tfrac{\xi}{2})
S(\xi) 
\,\mathrm{d} \xi
\,] 
\; \parity,
\end{equation}
where $\mathcal{D}(x,p)$ is the displacement operator
and $S(\xi)$ is the squeezing operator (see Eq.~\eqref{sq_eq} below)
with a real squeezing parameter $\xi$. 
We have used the \emph{sinus cardinalis}
$\mathrm{sinc}(x):= \mathrm{sin}(x)/x$
and the \emph{hyperbolic secant}
$ \mathrm{sech}(x):=1/\mathrm{cosh}(x)$ functions.
The parity operator $\parity_{\textup{BJ}}$ in Eq.~\eqref{eq:intro_bornjordan}
decomposes into a product $\parity_{\theta}=A_\theta \circ \parity$ containing 
the usual reflection operator $\parity$. This is another example of the above-discussed 
motivation for our terminology of parity operators.
We prove in Proposition \ref{BJparityBound} that $\parity_{\textup{BJ}}$
is a bounded operator on the Hilbert space of square-integrable functions
and therefore gives rise to well-defined
phase-space distribution functions of arbitrary quantum states.
We derive a generalized spectral decomposition of this parity operator
based on a continuous family of generalized eigenvectors that satisfy
the following generalized eigenvalue equation for every real $E$
(see Theorem \ref{BJdecomposition}):
\begin{equation*}
\parity_{\textup{BJ}} \, | \psi^E_{\pm} \rangle	=  \pm   \pi/2  \, \mathrm{sech}(\pi E)  \,  | \psi^E_{\pm} \rangle.
\end{equation*}
Facilitating a more efficient computation of the Born-Jordan distribution,
we finally derive explicit matrix representations in the so-called
Fock or number-state basis, which constitutes a natural 
representation for bosonic quantum systems such as in quantum optics
\cite{mandel1995,glauber2007,leonhardt97}.
Curiously, the parity operator $\parity_{\textup{BJ}}$
of the Born-Jordan distribution is not diagonal
in the Fock basis as compared to the diagonal parity operators
of $s$-parametrized phase spaces (cf.\  \cite{koczor2017}) that
enable the experimental reconstruction of distribution
functions from photon-count statistics \cite{deleglise2008,Lutterbach97,Bertet02,Banaszek99}
in quantum optics.
We calculate the matrix elements $[\parity_{\textup{BJ}}]_{mn}$ in the Fock or number-state basis 
and provide a convenient formula for a direct recursion, for which we conjecture that the matrix elements  
are completely determined by eight rational initial values. 
This recursion scheme has significant computational advantages
for calculating Born-Jordan distribution functions as compared to previous approaches
and allows for an efficient implementation. In particular, large matrix representations of
the parity operator $\parity_{\textup{BJ}}$ can be well approximated using rank-9
matrices.
We finally illustrate our results for simple
quantum states by calculating their Born-Jordan distributions
and comparing them to other phase-space representations.
Let us summarize the main results of the current work:
\begin{itemize}
	\item quantum-mechanical expectation
	values of the parity operators $\parity_\theta$ 
	from Def.~\ref{def_parity_operator} define
	distribution functions (see Def.~\ref{def_phase_space_function}) and form
	the Cohen class
	(Theorem \ref{convolutionproposition});
	\item existence properties of parity operators and generalized phase-space functions
	are clarified in Sec.~\ref{parityopquantizationsec} and we refer in particular to 
	the crucial Lemma~\ref{lemma_bounded_op_sufficient};
	\item the parity operators $\parity_\theta$ are Weyl quantizations
	of the corresponding Cohen convolution kernels $\theta$ (Sec.~\ref{reltoquant});
	\item parity operators for important distribution functions are summarized in Sec.~\ref{explicitparitysection}
	along with their operator norms (Theorem \ref{tauparitycoordinate}) and 
	generalized spectral decompositions in Sec~\ref{BJdecomp};
	\item the Born-Jordan parity operator is a weighted
	average of displacements (Theorem \ref{BJdistributionTheorem}) or
	equivalently a weighted average of squeezing operators
	(Theorem \ref{BJparitySqueezing}), and it is bounded
	(Proposition \ref{BJparityBound});
	\item the Born-Jordan parity operator admits a generalized spectral decomposition
	(Theorem \ref{BJdecomposition});
	\item its matrix representation is calculated in the number-state basis
	in Theorem \ref{BJParityMatrix};
	and an efficient, recursion-based computation scheme is proposed in
	Conjecture~\ref{BJParityMatrixRecursion}.
\end{itemize}

Our work has significant implications: 
General (infinite-dimensional) phase-space functions can now be 
conveniently and effectively described as natural expectation values.
We provide a much more comprehensive understanding 
of Born-Jordan phase spaces and means for effectively
computing the corresponding phase-space functions. Working
in a rigorous mathematical framework, we also facilitate future discussions 
of phase spaces by connecting different communities in physics and mathematics.

We start by  recalling precise  definitions of distribution functions and quantum states
for infinite-dimensional
Hilbert spaces in Sec.~\ref{distributions_qs}.
In Sec.~\ref{phasespace}, we discuss phase-space translations
of quantum states using coherent states, state one known formulation of
translated parity operators, and relate a general class of phase spaces
to Wigner distribution functions and their properties.
We note that an experienced reader can skip
most of the introductory Sections~\ref{distributions_qs}
and \ref{phasespace} and jump directly to our results.
These preparations will however guide our study
of phase-space representations of quantum states as expectation
values of displaced parity operators in Section~\ref{parityopquantizationsec}.
We present and discuss our results for the case of the Born-Jordan distribution
and its parity operator in Section~\ref{bornjordansection}.
Formulas for the matrix elements of the Born-Jordan parity operator
are derived in Section~\ref{BJmatrix}. 
Explicit examples for simple quantum systems are discussed and visualized in
Section~\ref{sec_ex}, before we conclude. A larger part of the proofs have been relegated
to Appendices.

\section{Distributions and Quantum States \label{distributions_qs}}

All of our discussion and results in this work will strongly rely on precise notions
of distributions and related descriptions of quantum states 
in infinite-dimensional Hilbert spaces. Although most (or all) of this material is quite
standard and well-known \cite{ReedSimon1,kanwal2012generalized,thewignertransform,hall2013quantum},
we find it prudent to shortly summarize this background material
in order to fix our notation and keep our presentation self-contained.
This will also help to clarify differences and connections 
between divergent concepts and notations used in the literature.
We hope this will also contribute to narrowing the gap between 
different physics communities that are interested in this topic.

\subsection{Schwartz Space and Fourier Transforms \label{functionfourier}}

We will now summarize function spaces that are central
to this work, refer also to \cite[Ch.~1.1.3]{thewignertransform}.
The set of all smooth, complex-valued functions on $\mathbb R^n$ that decrease faster
(together with all of their partial derivatives)
than the reciprocal of any polynomial is called the Schwartz space and is usually denoted by
$\mathcal{S}(\mathbb{R}^n)$, refer to \cite[Ch.~V.3]{ReedSimon1}
or \cite[Ch.~6]{kanwal2012generalized}.
More precisely, a function $\psi:\mathbb R^n\to\mathbb C$ is called fast decreasing if the absolute values
$| x^{\beta} \partial_x^{\alpha} \psi(x) |$ are bounded 
for each multi-index of natural numbers $\alpha:=(\alpha_1, \dots, \alpha_n)$ and $\beta:=(\beta_1, \dots, \beta_n)$,
where by definition $x^{\beta}:=x_1^{\beta_1} \cdots x_n^{\beta_n}$ and 
$\partial_x^{\alpha}:=\partial_{x_1}^{\alpha_1} \cdots \partial_{x_n}^{\alpha_n}$,
refer to \cite[Ch.~1.1.3]{thewignertransform}. This gives rise to a family of seminorms 
$\|\psi\|_{\alpha,\beta}:=\sup_{x\in\mathbb R^n}|x^\beta\partial_x^\alpha\psi(x)|$ which turn $\mathcal S(\mathbb R^n)$ 
into a topological space which is even a Fréchet space \cite[Thm.~V.9]{ReedSimon1}.

The topological dual space $\mathcal S'(\mathbb R^n)$ of $\mathcal S(\mathbb R^n)$ is often referred to as the space of
tempered distributions, and we will
denote the distributional pairing for $\phi \in \mathcal S'(\mathbb R^n)$
and $\psi \in \mathcal S(\mathbb R^n)$ as $\langle \phi , \psi\rangle:=\phi(\psi)\in\mathbb C$.
In Sec.~\ref{distributions_qs}, we will consistently use the symbol $\phi$ to denote distributions
and $\psi,\psi'$ to denote Schwartz or square-integrable functions.
Also note that $\mathcal{S}(\mathbb{R}^n)$ is dense in $ L^2 (\mathbb{R}^n)$
and tempered distributions naturally include the usual function spaces
$\mathcal{S}(\mathbb{R}^n)\subset L^2 (\mathbb{R}^n) \subset \mathcal S'(\mathbb R^n)$
via distributional pairings in the form of an integral
$ \langle \phi , \psi\rangle = \int_{\mathbb R^n}\phi^*(x)\psi(x)\,\mathrm{d}x$, 
where $\phi^*(x)$ is the complex conjugate of $\phi(x) \in L^2(\mathbb{R}^n)$ or $\phi(x) \in \mathcal S(\mathbb{R}^n)$.
This inclusion is usually referred to as a rigged Hilbert space
\cite{vilenkin1964generalized,chruscinski2003} or the Gelfand triple. Recall that the Lebesgue spaces 
$L^q({\mathbb R^n})$ with $0<q<\infty$ are subspaces of equivalence classes of measurable functions $f:\mathbb R^n\to\mathbb C$ 
that differ only on a set of measure zero
such that 
the $q$-th power of their absolute value is  Lebesgue integrable, i.e.~$\int_{\mathbb R^n} |f(x)|^q \,\mathrm{d}x < \infty$ \cite{ReedSimon1}.

Remarkably, every tempered distribution is the derivative of some polynomially bounded continuous function, 
that is, given $\phi\in\mathcal S'(\mathbb R^n)$ there exists $g:\mathbb R^n\to\mathbb C$ continuous such that 
$|g(x)|\leq C(1{+}x^2)^m$ for some $C,m\geq 0$ and all $x\in\mathbb R^n$, as well as a multi-index $\alpha$ such that
$
\langle\phi,\psi\rangle =(-1)^{|\alpha|}\int_{\mathbb R^n}g^*(x) (\partial_x^\alpha\psi)(x)\,\mathrm{d}x
$
for all $\psi\in\mathcal S(\mathbb R^n)$---for short one can write $\phi=\partial_x^\alpha g$ \cite[Thm.~V.10]{ReedSimon1}.

In particular one can construct tempered distributions by considering smooth functions $\phi$
that (together with all of their partial derivatives)
grow slower than certain polynomials.
More precisely, a smooth map $\phi:\mathbb R^n\to\mathbb C$ is said to be slowly increasing or of slow growth if there exist
for every $\alpha=(\alpha_1, \dots, \alpha_n)$
constants $C$, $m$, and $A$ such that 
$|\partial_x^\alpha
\phi(x)|\leq C\|x\|^m$ 
for all $\|x\|>A$, 
where $\| x \|$ is the Euclidean norm in $\mathbb R^n$,
refer to \cite[Ch.~6.2]{kanwal2012generalized}.
A classical example of
such functions are polynomials.
In particular, every slowly increasing function $\phi(x)$ generates a tempered distribution
$\langle \phi , \psi \rangle  = \int_{\mathbb R^n} \phi^*(x) \psi(x)\,\mathrm{d}x$ for all
$\psi\in\mathcal S(\mathbb R^n)$,
and, therefore, such functions are usually denoted as $\phi(x) \in \mathcal{S}'(\mathbb{R}^n)$
(refer to \cite[Ch.~6.2]{kanwal2012generalized}).
\begin{example}
This motivates the delta distribution
$\langle  \delta_b , \psi \rangle := \psi(b)$ which is in its integral representation commonly
written as  $ \int_{\mathbb R^n} \delta(x{-}b) \psi(x) \,\mathrm{d}x = \psi(b) $. We emphasize that
the notation $\delta(x)$  is however only formal, cf.~\cite[Eq.~(V.3)]{ReedSimon1}. 
Moreover this tempered distribution is generated by the second derivative of the polynomially bounded 
continuous function $g(x):=x{-}b$ for $x\geq b$ and zero otherwise, 
i.e.~$\langle\delta_b,\psi\rangle=\int_\mathbb Rg(x)\psi''(x)\,\mathrm{d}x$ for all $\psi\in\mathcal S(\mathbb R)$  
\cite[Ch.~V, Ex.~8]{ReedSimon1}. But this generating function is not unique as, for example, 
one also has $\delta_b=({d^2}/{\mathrm{d}x^2})|x{-}b|/2$.
\end{example}

For the rest of our work, we will restrict the general space
of $\mathbb R^n$ with $n\geq 1$ to the case of $\mathbb R$ which is most relevant
for the applications we highlight. This simplifies our notation, even though many statements
could be generalized.

Recall that for all $a\in\mathcal S(\mathbb R^2)$ the symplectic Fourier transform $[\F_{\sigma}a](x,p)$
(see App.~B in \cite{thewignertransform})
is related to the usual Fourier
transform\footnote{A multitude of sign and normalization conventions
are commonly used throughout various fields
as characterized by the two parameters $q$ and $r$ in the generic
expression for the one-dimensional Fourier transform  $\F[a(\cdot)](x) =  \sqrt{|r| (2\pi)^{q-1}}
\int e^{i r x x'} a(x') \, \mathrm{d}x'$. In this work, $q=0$ (because then 
$\F[\F[a]](x)=a(-x)$ for all $a\in\mathcal S(\mathbb R)$) and $r=-\hbar^{-1}$ .}
\begin{equation*}
[\F a](x,p):=
(2 \pi \hbar)^{-1}\int e^{- \tfrac{i}{\hbar} (
	x'x + p'p
	)} a(x',p') \, \mathrm{d}x'\, \mathrm{d}p',
\end{equation*}
up to a coordinate transformation $[\F_{\sigma}a](x,p) = [\F a](p,-x)$ where
\begin{equation} \label{sympfourier}
[\F_{\sigma} a](x,p):=
(2 \pi \hbar)^{-1}\int e^{-\tfrac{i}{\hbar} (x'p - xp')} a(x',p') \, \mathrm{d}x'\, \mathrm{d}p'.
\end{equation}
Note that the square $[\F_{\sigma} \F_{\sigma} a](x,p) =  a(x,p)$ is equal to the identity, and
that the Fourier transform of every function in $\mathcal{S}(\mathbb{R}^n)$ is also
in $\mathcal{S}(\mathbb{R}^n)$, cf.~\cite[Ch.~IX.1]{ReedSimon1}.
The fact that $\F_\sigma$ is hermitian, 
i.e.~$\langle \F_\sigma\phi,\psi\rangle_{L^2}=\langle\phi,\F_\sigma\psi\rangle_{L^2}$ for all $\phi,\psi\in \mathcal S(\mathbb R^2)$ 
(see Sec.~\ref{quantumstates})
motivates us to define the symplectic Fourier
transform of tempered distributions via the distributional pairing
$\langle \F_{\sigma} \phi , \psi\rangle := \langle  \phi, \F_{\sigma} \psi \rangle = \phi(\F_{\sigma} \psi)$
for $\phi \in S'(\mathbb R^2)$ and $\psi\in \mathcal S(\mathbb R^2)$.
Thus this is the extension of $\F_\sigma$ with respect to the distributional pairing in our sense, cf.~also Appendix \ref{app_operator_extension}.
In particular the symplectic Fourier transform generalizes to phase-space
distribution functions $a(x,p)$ without further adjustment and all the properties
of $\F_\sigma$ on $\mathcal S(\mathbb R^2)$ transfer to $\mathcal S'(\mathbb R^2)$.

Let us come back to our previous example: the delta distribution can be identified formally via the brackets
$ \langle  \delta_0 , \F_{\sigma} \psi \rangle = [\F_{\sigma} \psi](0) =(2\pi\hbar)^{-1}  \langle  1 , \psi \rangle $
as the Fourier transform $\delta(x) = (2 \pi \hbar)^{-1} \F_{\sigma}[1] $ 
of the constant function, refer to \cite[Ch.~6.4]{kanwal2012generalized}.

\subsection{Quantum States and Expectation Values \label{quantumstates}}

Let us denote the abstract state vector of a quantum system by $| \psi \rangle$
which is an element of an abstract, infinite-dimensional, separable complex
Hilbert space (here and henceforth denoted by) $\mathcal H$.
The Hilbert space  $\mathcal H$
is known as the state space and it is
equipped with a
scalar product $\langle \,\cdot\,|\,\cdot\, \rangle$ \cite{hall2013quantum}. Considering projectors
$\mathcal{P}_{\psi}:=| \psi  \rangle \langle \psi |$ defined via
the open scalar products $\mathcal{P}_{\psi} = \langle \psi | \,\cdot\, \rangle  \, | \psi  \rangle$, an orthonormal
basis of $\mathcal H$ is given by $\{ | \phi_n \rangle, n\in \mathbb{N} \} $ if
$ \langle \phi_n | \phi_m \rangle = \delta_{nm}$ for all $m,n\in\mathbb N$ and 
$\sum_{n = 0}^\infty \mathcal{P}_{\phi_n} = \mathds{1}$ in the strong operator topology. 
For a broader introduction to this topic we refer to \cite{hall2013quantum}.

Depending on the given quantum system, explicit representations of the state space can be obtained
by specifying its Hilbert space \cite{gieres2000}. In the case of bosonic systems, the 
Fock (or number-state) representation is widely used. A quantum state $| \psi \rangle$ is
an element of the Hilbert space $\ell^2$ of square-summable sequences
of complex numbers \cite{hall2013quantum}. It is
characterized by its expansion $
| \psi \rangle = \sum_{n=0}^{\infty} \psi_n | n \rangle$
 into the orthonormal Fock basis
 $\{ | n \rangle, n=0,1, \dots \}$ of number states
using the expansion coefficients
$\psi_n = \langle n | \psi \rangle \in \mathbb{C}$,
refer to, e.g., \cite{Cahill68} and \cite[Ch.~11]{hall2013quantum}.
The scalar product $\langle  \psi | \psi' \rangle$ then corresponds to the
usual scalar product of vectors, i.e.~to the absolutely convergent sum
$\sum_{n=0}^{\infty} (\psi_n)^* \psi'_n =:  \langle \psi | \psi' \rangle_{\indslnorm}$.
The corresponding norm of vectors is then given by 
$\|\psi \|_{\indslnorm} = 
\| ( | \psi \rangle ) \|_{\indslnorm} = [\langle \psi | \psi \rangle_{\indslnorm}]^{1/2}$.

For a quantum state $| \psi \rangle$, the coordinate representation 
$\psi(x)\in \mathcal{S}(\mathbb{R})$ and
its Fourier transform (or momentum representation)
$\psi(p)\in \mathcal{S}(\mathbb{R})$
are given by complex, square-integrable,
and
smooth functions
that are also fast decreasing.
The quantum state 
$| \psi \rangle = \int_{\mathbb{R}} \psi(x) | x \rangle \, \mathrm{d}x$ of
$\psi(x) = \langle x | \psi \rangle$
is then defined via coordinate 
eigenstates\footnote{For the position operator $\hat {x}:\mathcal S(\mathbb R)\to \mathcal S(\mathbb R), 
\psi(x)\mapsto x\psi(x)$ one can consider the dual
$\hat x': \mathcal S'(\mathbb R)\to \mathcal S'(\mathbb R), \phi\mapsto\phi\circ\hat x$.
This map satisfies the generalized eigenvalue equation
$ \hat{x}' |x_0\rangle = x_0 |x_0\rangle$ for all $x_0\in\mathbb R$ where its
generalized eigenvector $|x_0\rangle \in \mathcal S'(\mathbb R)$ is the delta distribution,
which allows for the resolution of the position operator
$\hat{x} = \int_{\mathbb{R}} x | x \rangle \langle x |    \, \mathrm{d}x$. 
For more details, we refer to \cite{vilenkin1964generalized} or \cite[p.1906]{gieres2000}.}
$ | x \rangle $.
The coordinate representation of a coordinate eigenstate is given by the distribution
$\delta(x'{-}x) \in \mathcal{S}'(\mathbb{R})$, refer to \cite{hall2013quantum,gieres2000}.
The scalar product $\langle  \psi | \psi' \rangle$
is then fixed by
the usual $L^2$ scalar product, i.e.~by the convergent integral
$\int_{\mathbb{R}}\psi^*(x)  \psi'(x) \, \mathrm{d}x =: \langle \psi | \psi' \rangle_{\indLnorm}$.
This integral induces the norm of square-integrable functions via
$\|\psi(x)\|_{\indLnorm} = [\langle \psi | \psi \rangle_{\indLnorm}]^{1/2}$.

The above two examples are particular representations of the state space,
which are convenient for particular physical systems, however these representations
are equivalent via
\begin{equation}
\mathcal{H} \simeq \ell^2 \simeq  L^2(\mathbb{R},\mathrm{d}x) \simeq  L^2(\mathbb{R},\mathrm{d}p),
\end{equation}
refer to Theorem 2 in \cite{gieres2000}.
In particular, any coordinate representation $\psi(x)\in \mathcal{S}(\mathbb{R})$
of a quantum state
$| \psi \rangle$ can be expanded in the number-state basis
into $\psi(x) = \sum_{n=1}^{\infty}\psi_n\, \fockx{n}$
 via $\psi_n   = \int_{\mathbb{R}} [\fockx{n}]^*(x)  \psi(x) \, \mathrm{d}x$
where $\fockx{n} \in \mathcal{S}(\mathbb{R})$ are eigenfunctions of the
quantum-harmonic oscillator. 
For any $\psi(x),\psi'(x)\in \mathcal{S}(\mathbb{R})$, the $L^2$ scalar product is
equivalent to the $\ell^2$ scalar product
\begin{equation}
\int_{\mathbb{R}}\psi^*(x)  \psi'(x) \, \mathrm{d}x 
= \sum_{n,m=1}^{\infty}\psi^*_n \psi'_m \int_{\mathbb{R}}  [\fockx{n}]^* \fockx{m} \, \mathrm{d}x 
= \sum_{n=1}^{\infty}\psi^*_n \psi'_n,
\end{equation}
and it is invariant with respect to the choice of orthonormal basis, i.e.~any two orthonormal bases are related
via a unitary transformation. 
The Plancherel formula
$\int_{\mathbb{R}}\psi^*(x)  \psi'(x) \, \mathrm{d}x
=\int_{\mathbb{R}}\psi^*(p)  \psi'(p) \, \mathrm{d}p$
yields the
equivalence  $L^2(\mathbb{R},\mathrm{d}x) \simeq  L^2(\mathbb{R},\mathrm{d}p)$.

In the following, we will consistently use the notation $\langle \,\cdot\,|\,\cdot\, \rangle$
for scalar products in Hilbert space, without specifying the type of representation.
This is motivated by the invariance of the scalar product under the choice of
representation. However, in order to avoid confusion with different types
of operator or Euclidean norms, we 
will use in the following the explicit norms $\|\psi(x)\|_{\indLnorm}$
and $\|\, |\psi\rangle \|_{\indslnorm}$, despite their equivalence.

We will now shortly define the trace of operators on infinite-dimensional Hilbert spaces,
refer to \cite[Ch.VI.6]{ReedSimon1} for a comprehensive introduction.
Recall that the trace of a positive semi-definite operator\footnote{
Here, $\mathcal B(\mathcal H)$ denotes the set of bounded linear operators on $\mathcal H$,
and one has $| \langle x | A  y\rangle | < \infty$ for every $A \in \mathcal B(\mathcal H)$ and $x,y \in \mathcal H$ .
An operator $A\in\mathcal B(\mathcal H)$ is said to be
positive semi-definite if $A$ is self-adjoint and $\langle x | Ax\rangle\geq 0$ for all $x\in\mathcal H$.} 
$A\in\mathcal B(\mathcal H)$ is defined via
$
\tr{(A)}=\sum_{n=1}^\infty \langle\psi_n|A\psi_n \rangle
$,
where the sum of non-negative numbers on the right-hand side is independent of the chosen orthonormal basis
$\lbrace |\psi_n\rangle\,,\,n\in\mathbb N\rbrace$ of $\mathcal H$,
but it does not necessarily converge.
Moreover recall that the set of trace-class operators is given by
\begin{equation*}
\mathcal B^1(\mathcal H):=\lbrace A\in\mathcal B(\mathcal H)\,,\tr{(\sqrt{A^\dagger A})}<\infty\rbrace\subseteq\mathcal K(\mathcal H),
\end{equation*}
where $\mathcal K(\mathcal H)$ denotes the set of compact operators on 
$\mathcal H$ and $A^{\dagger}$ is the adjoint of $A$
(which is in finite dimensions given by the complex conjugated and transposed matrix). 
The expression $\tr{(\sqrt{A^\dagger A})}=:\|A\|_1$ is called the trace norm on $\mathcal B^1(\mathcal H)$ 
which turns the trace class into a Banach space.
Every $A \in\mathcal B^1(\mathcal H)$ has a finite trace via the absolutely convergent sum
(of not necessarily positive numbers)
$
\tr{(A)}:=\sum_{n=1}^\infty\langle \psi_n|A\psi_n \rangle.
$
For $A \in\mathcal B^1(\mathcal H)$,
the mapping $A\mapsto\tr{(A)}$ is linear, continuous with respect to the trace norm, and independent 
of the chosen orthonormal basis of $\mathcal H$.
Trace-class operators $A\in\mathcal B^1(\mathcal H)$ have the important property that their products
with bounded operators $B\in\mathcal B(\mathcal H)$ are also in the trace class,
i.e.~$A B,BA\in\mathcal B^1(\mathcal H)$. Using this definition, one can calculate the trace independently
from the choice of the orthonormal basis or representation that is used for evaluating scalar products.

A density operator or state $\rho\in\mathcal B^1(\mathcal H)$ is 
defined to be positive semi-definite with $\tr{(\rho)}=1$. It therefore
admits a spectral decomposition \cite[Prop.~16.2]{MeiseVogt},
i.e.~there exists an orthonormal system
$\lbrace |\psi_n\rangle,n\in \mathbb{N} \rbrace$ in $\mathcal H$ such that 
\begin{equation} \label{densityop}
\rho= \sum_{n = 1}^\infty \, p_n | \psi_n \rangle \langle \psi_n |.
\end{equation}
The probabilities $\lbrace p_n, n\in \mathbb{N} \rbrace$ satisfy $p_1\geq p_2\geq\ldots \geq0$
 and $\sum_{n = 1}^\infty p_n=1$.
Expectation values of observables are computed via the trace
$
\langle O \rangle_\rho = \tr{(\rho \,O)} = \sum_{n=1}^\infty p_n   \langle \psi_n | O  \psi_n \rangle
$
where $O\in\mathcal B(\mathcal H)$ is self-adjoint. The following is a simple consequence of, e.g., \cite[Lemma 16.23]{MeiseVogt}.
\begin{lemma} \label{lemma1}
The expectation value 
of an observable $O$ in a mixed quantum state is upper bounded
by the operator norm 
$
| \tr{(\rho \,O)}  | \leq \|O\|_\indsupnorm$
for arbitrary density operators $\rho$, where
we have used the definition 
$\|O\|_\indsupnorm := \sup_{\|  (|\psi\rangle) \|_{\indslnorm}=1}
\| O | \psi \rangle \|_{\indslnorm}$ for the Hilbert space $\ell^2$.
Equivalently, the definition $\|O\|_\indsupnorm :=\sup_{\| \psi(x) \|_{\indLnorm} =1 }
\| O \psi(x)\|_{\indLnorm}$
for square-integrable functions $\psi(x)$ can be used.
\end{lemma}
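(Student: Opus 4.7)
The plan is to exploit the spectral decomposition of $\rho$ given in Eq.~\eqref{densityop} together with the elementary bound $|\langle\psi|O|\psi\rangle|\leq \|O\|_{\mathrm{sup}}$ for unit vectors $\psi$. First I would write $\rho=\sum_{n=1}^\infty p_n |\psi_n\rangle\langle\psi_n|$ with an orthonormal system $\{|\psi_n\rangle\}$ and probabilities $p_n\geq 0$ summing to one. Since $O\in\mathcal B(\mathcal H)$ and $\rho\in\mathcal B^1(\mathcal H)$, the product $\rho O$ lies in $\mathcal B^1(\mathcal H)$, so $\tr(\rho O)$ is well defined and the series
\begin{equation*}
\tr(\rho O) = \sum_{n=1}^\infty p_n \, \langle \psi_n | O | \psi_n\rangle
\end{equation*}
converges absolutely, independently of the chosen basis, as recalled in the preceding paragraph.

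Next I would apply the triangle inequality termwise and use the estimate $|\langle\psi_n|O|\psi_n\rangle|\leq \|\,|\psi_n\rangle\,\|_{\ell^2}\,\|O|\psi_n\rangle\|_{\ell^2}$ (Cauchy--Schwarz). Since the $|\psi_n\rangle$ are orthonormal, $\|\,|\psi_n\rangle\,\|_{\ell^2}=1$, and by the very definition of the operator norm $\|O|\psi_n\rangle\|_{\ell^2}\leq \|O\|_{\mathrm{sup}}$. Combining these,
\begin{equation*}
|\tr(\rho O)| \leq \sum_{n=1}^\infty p_n\, |\langle\psi_n|O|\psi_n\rangle| \leq \|O\|_{\mathrm{sup}} \sum_{n=1}^\infty p_n = \|O\|_{\mathrm{sup}},
\end{equation*}
which is exactly the claim. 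The equivalence of the two forms of $\|O\|_{\mathrm{sup}}$ (one using $\ell^2$ and the other using $L^2$ norms) follows from the unitary equivalence $\mathcal H\simeq\ell^2\simeq L^2(\mathbb R,\mathrm d x)$ recalled earlier in Sec.~\ref{quantumstates}, so the bound holds in either representation.

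There is essentially no obstacle here; the statement is a direct consequence of the spectral decomposition of trace-class positive operators and the definition of the operator norm. If one wished to avoid invoking the spectral decomposition of $\rho$ and work basis-independently, one could alternatively use the variational characterisation $|\tr(\rho O)| \leq \|\rho\|_{1}\|O\|_{\mathrm{sup}}$ (the duality between trace class and bounded operators, which is the content of \cite[Lemma 16.23]{MeiseVogt} cited in the statement), together with $\|\rho\|_1=\tr(\rho)=1$ for a density operator. Either route yields the inequality in one short step, so the lemma is really a bookkeeping statement that justifies later use of operator-norm bounds on expectation values of parity operators.
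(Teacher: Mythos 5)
Your proof is correct. The paper itself gives no explicit argument: it simply declares the lemma to be ``a simple consequence of, e.g., [Lemma 16.23, MeiseVogt]'', i.e.\ it invokes the duality inequality $|\tr(AB)|\leq\|A\|_1\,\|B\|_{\indsupnorm}$ between trace-class and bounded operators together with $\|\rho\|_1=\tr(\rho)=1$ --- precisely the ``alternative route'' you mention at the end. Your primary argument instead unpacks this into an elementary computation: spectral decomposition of $\rho$, the identity $\tr(\rho O)=\sum_n p_n\langle\psi_n|O|\psi_n\rangle$ (which the paper records just before the lemma), termwise Cauchy--Schwarz, and $\sum_n p_n=1$. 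This buys self-containedness at the cost of implicitly relying on the basis-independence and absolute convergence of that series (which the paper has already established for trace-class operators, so no gap arises), whereas the citation route is shorter but imports a black-box functional-analytic fact. Both are valid; your write-up, by presenting the explicit estimate and then naming the duality argument as the one-line alternative, actually covers the paper's route as a special case.
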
	

\section{Coherent States, Phase Spaces, and Parity Operators \label{phasespace}}

We continue to fix our notation by discussing 
an abstract definition of phase spaces that relies on displaced parity operators.
This usually appears concretely in terms of coherent states \cite{brif98,perelomov2012,gazeau,LI94},
for which we consider two equivalent but equally important parametrizations of the phase space 
using the coordinates $\alpha$ or $(x,p)$ (see below). This definition of phase spaces
can be also related to convolutions of Wigner functions which is usually known as the Cohen class
\cite{thewignertransform,cohen1966generalized,Cohen95}.
We also recall important postulates for Wigner functions as given by Stratonovich \cite{stratonovich,brif98}
and these will be later considered in the context of general phase spaces.

\subsection{Phase-Space Translations of Quantum States \label{translations}}
We will now recall a definition of the phase space for quantum-mechanical systems
via coherent states, refer to \cite{brif98,perelomov2012,gazeau,LI94}. 
We consider a quantum system which has a specific dynamical symmetry
group given by a Lie group $G$. The Lie group $G$
acts on the Hilbert space $\mathcal{H}$ using an irreducible 
unitary representation $\mathcal{D}$ of $G$.
By choosing a fixed reference state as an element $| 0 \rangle \in \mathcal{H}$ 
of the Hilbert space, one can define a set of coherent states
as $ | g \rangle := \mathcal{D}(g) | 0 \rangle$ where $g\in G$.
Considering the subgroup $H\subseteq G$ of
elements $h \in H$ that
act on the reference state only by
multiplication 
$\mathcal{D}(h)| 0 \rangle  := e^{i\phi}| 0 \rangle $
with a phase factor $e^{i\phi}$,
any element $g \in G$ can be decomposed into $g = \psc h$
with $\psc \in G /H$.
The phase space is then identified with the set of coherent
states $| \psc \rangle := \mathcal{D}(\psc) | 0 \rangle$.
In the following,
we will consider the Heisenberg-Weyl group $H_3$, for which the phase space
$\psc \in H_3 /U(1)$ is a plane. We introduce
the corresponding displacement operators that generate translations of the
plane. Displacement operators are also known 
as Heisenberg-Weyl operators \cite{thewignertransform} or, in the physics literature, simply as Weyl operators
\cite{Davies76,AlickiLendi07,Holevo12}.

In particular, for harmonic oscillator systems, the phase space $\psc \equiv \alpha \in \mathbb{C}$
is usually parametrized by the complex eigenvalues $\alpha$ of the annihilation
operator $\ahat$ and Glauber coherent states can be represented explicitly \cite{Cahill68}
in the so-called Fock  (or number-state) basis  as
\begin{equation}\label{D_alpha}
| \alpha \rangle = 
e^{-|\alpha|^2/2} \sum_{n=0}^{\infty} \tfrac{\alpha^n}{\sqrt{n!}}  | n \rangle  =
e^{\alpha \ahatdagg -\alpha^* \ahat} | 0 \rangle =: \mathcal{D}(\alpha) | 0 \rangle.
\end{equation}
Here, the second equality specifies the displacement operator $\mathcal{D}(\alpha)$ as a power
series of the usual bosonic annihilation $\ahat$ and creation $\ahatdagg$ operators,
which satisfy the commutation relation $[\ahat,\ahatdagg] = 1$, refer to Eq.~(2.11)
in \cite{Cahill68}.
In particular, the number state representation of displacements is given by \cite{Cahill68}
\begin{equation}
\label{displacementopmatirx}
[\mathcal{D}(\alpha)]_{m n} :=  \langle m | \mathcal{D}(\alpha)  | n \rangle
=
(\tfrac{n!}{m!})^{1/2} \alpha^{m-n} e^{-|\alpha|^2/2} L_n^{(m-n)}(|\alpha|^2),
\end{equation}
where $L_n^{(m-n)}(x)$ are generalized Laguerre polynomials.
This is the usual formulation for bosonic systems (e.g., in
quantum optics) \cite{leonhardt97}, where the optical phase space is the
complex plane and the phase-space integration measure is given by
$\mathrm{d}\psc =2\hbar \,\mathrm{d}^2\alpha  =2 \hbar  \, \mathrm{d}\Re(\alpha) \,\mathrm{d}\Im(\alpha)$
(where one often sets  $h=2\pi\hbar=1$, cf., \cite{Cahill68,cahill1969,brif98}).
The annihilation operator admits a simple decomposition 
\begin{equation*}
\ahat =
2 \hbar  \, \int_{-\infty}^{\infty} \int_{-\infty}^{\infty} 
 \alpha \,  | \alpha \rangle \langle \alpha |
  \,   \mathrm{d}\Re(\alpha) \,\mathrm{d}\Im(\alpha)
\end{equation*}
with respect to its eigenvectors,
see, e.g., \cite[Eqs.~(2.21)-(2.27)]{Cahill68}. 

Let us now consider the 
coordinate representation $\psi(x) \in \mathcal{S}(\mathbb{R})$ of a quantum state.
The phase space is parametrized by $\psc \equiv (x,p) \equiv z \in \mathbb{R}^2$ and the
integration measure is $\mathrm{d}\psc = \mathrm{d}z= \mathrm{d}x \, \mathrm{d}p$.
The displacement operator acts via (see also \cite{Wey27,Weyl31,Weyl50})
\begin{equation} \label{dispopdefxp}
\mathcal{D}(x_0,p_0) \psi(x)
:=
e^{\tfrac{i}{\hbar}(p_{0}x-\tfrac{1}{2}p_{0}x_{0})}
\psi(x{-}x_{0})
=
 e^{-\tfrac{i}{\hbar} (x_0 \hat{p} - p_0 \hat{x}) }  \psi(x),
\end{equation}
where $x,x_0,p_0 \in \mathbb{R}$.
The right hand side of Eq.~\eqref{dispopdefxp} 
specifies the displacement operator as a power
series of the usual operators $\hat{x}$ and $\hat{p}$,
which satisfy the commutation relation   
 $[\hat{x},\hat{p}] = i\hbar$, refer to \cite[Sec.~1.2.2., Def.~2]{thewignertransform}.
 
The most common representations of these two unbounded operators 
are $\hat{x} \psi(x)= x \psi(x)$ and
$\hat{p} \psi(x)= -i \hbar {\partial \psi(x)}/{\partial x}$.
Displacements of tempered distributions $\phi(x) \in \mathcal{S}'(\mathbb{R})$
are understood via the distribution pairings
$  (\mathcal{D}(\psc)\phi)( \psi):= \phi( \mathcal{D}(-\Omega) \psi  )$ where $-\Omega=(-x_0,-p_0)$.
This definition guarantees that\footnote{
This differs from other approaches where one considers 
the embedding $\iota:\mathcal S(\mathbb R)\to\mathcal S'(\mathbb R)$, $\phi\mapsto\int\phi(x)(\cdot)(x)\,\mathrm dx$ 
and the extension of $\mathcal D$ to tempered distributions is given by $\mathcal D(-x_0,p_0)$, 
cf.\ Example~\ref{ex_extend_displacement}(1) in Appendix \ref{app_operator_extension}.}
$$
\mathcal D(\Omega)[\langle \phi,\cdot\,\rangle](\psi)=\langle \mathcal D(\Omega)\phi,\psi\rangle
$$
as integrals from Section~\ref{functionfourier} (cf.~Example \ref{ex_extend_displacement}(2), Appendix \ref{app_operator_extension})
for all $\phi:\mathbb R\to\mathbb C$ such that $\langle \phi,\cdot\,\rangle\in\mathcal S'(\mathbb R)$,
and all $\psi\in\mathcal S(\mathbb R)$, $\Omega\in\mathbb R^2$.
In particular it does not matter whether $\mathcal D(\Omega)$ acts on a function $\phi:\mathbb R\to\mathbb C$ or on the induced functional
$\psi\mapsto\langle\phi,\psi\rangle$.

The two (above mentioned) physically motivated examples are
particular representations of the displacement operator for the Heisenberg-Weyl group
in different Hilbert spaces while relying on different parametrizations of the phase space.
Let us now highlight the equivalence of
these two representations. In particular, 
we obtain the formulas
$\al=(\lambda \hat{x} + i \lambda^{-1} \hat{p} )/\sqrt{2 \hbar}$
and
$\bl=(\lambda \hat{x} - i \lambda^{-1} \hat{p} )/\sqrt{2 \hbar}$
for any non-zero real $\lambda$, refer to Eqs.~(2.1-2.2) in \cite{Cahill68}.
In the context of quantum optics, the operators $\hat{x}$ and $\hat{p}$ are the so-called
optical quadratures \cite{leonhardt97}.
The operators $\al$ and $\bl$ are now defined on the Hilbert space $L^2(\mathbb R)$,
whereas $\ahat$ and $\ahatdagg$ act on elements of the Hilbert space $\ell^2$.
For any $\lambda \neq 0$ they reproduce the commutator
$[\al,\bl]=\operatorname{id}_{L^2}$, i.e.~$[\al,\bl] \, \psi(x) = \psi(x)$ for all $\psi(x)\in L^2(\mathbb R)$,
and they correspond to raising and lowering operators of the quantum 
harmonic-oscillator\footnote{For example, the choice $\lambda=\sqrt{m \omega}$ corresponds to the 
quantum-harmonic oscillator of mass $m$ and angular frequency $\omega$. And 
$\lambda = \sqrt{\epsilon \omega}$ is related to a normal mode of the electromagnetic field in a dielectric.}
eigenfunctions $\fockx{n}$, refer to \cite{hall2013quantum}.
Substituting now $\hat{x}=\sqrt{\hbar/2} \lambda^{-1} (\al {+} \bl)$
and $\hat{p}=-i \sqrt{\hbar/2} \lambda (\al {-} \bl)$
into the exponent on the right-hand side of \eqref{dispopdefxp} yields
\begin{equation*}
 -\tfrac{i}{\hbar} (x_0 \hat{p} - p_0 \hat{x}) 
 =  \bl (x_0 \lambda + i \lambda^{-1} p_0 )/\sqrt{2 \hbar} - \al (x_0 \lambda - i \lambda^{-1} p_0 )/\sqrt{2 \hbar}.
\end{equation*}
This then confirms the equivalence
\begin{equation} \label{tworepresentations}
\mathcal{D}(x_0,p_0) \psi(x)
=
e^{-\tfrac{i}{\hbar} (x_0 \hat{p} - p_0 \hat{x}) }  \psi(x)
=
e^{\bl \alpha - \al \alpha^* }  \psi(x)
= \mathcal{D}(\alpha) \psi(x),
\end{equation}
where the phase-space coordinate $\alpha$ is defined by $\alpha := (x_0 \lambda + i \lambda^{-1} p_0 )/\sqrt{2 \hbar}$.
Note that the corresponding phase-space element is then
$\mathrm{d}\psc = 2 \hbar \, \mathrm{d}\Re(\alpha) \,\mathrm{d}\Im(\alpha) = \mathrm{d} x \, \mathrm{d}p $
which is independent of the choice of $\lambda$.
Let us also recall two properties of the displacement operator \cite{Wey27,Weyl31,Weyl50} (see, e.g., \cite[p.~7]{thewignertransform}):
\begin{align}
\mathcal{D}(x_0,p_0) \mathcal{D}(x_1,p_1) & = e^{\tfrac{i}{\hbar}(p_0 x_1 - p_1 x_0)} 
\label{displacement_commutation}
\mathcal{D}(x_1,p_1) \mathcal{D}(x_0,p_0)
 \\
\label{eq:D_2}
\mathcal{D}(x_0{+}x_1,p_0{+}p_1) &= e^{-\tfrac{i}{2\hbar}(p_0 x_1 - p_1 x_0)} \mathcal{D}(x_0,p_0) \mathcal{D}(x_1,p_1).
\end{align}

In the following, we will use both of the phase-space coordinates $\alpha$ and $(x,p)$ 
interchangeably. The displacement operator is obtained in both parametrizations,
and they are equivalent via \eqref{tworepresentations}. Motivated
by the group definition, we will also use the parametrization $\psc$ for the phase space
via $\mathcal{D}(\psc)$, where $\psc$ corresponds to any representation of the
group, including the ones given by the coordinates $\alpha$ and $(x,p)$.

\subsection{Phase-Space Reflections and the Grossmann-Royer Operator \label{reflections}}
Recall that the parity operator $\parity$ reflects wave functions via
$\parity \psi(x) := \psi(-x)$ and $\parity \psi(p) := \psi(-p)$
for coordinate-momentum representations \cite{Grossmann1976,Royer77,thewignertransform,bishop1994,LIParity},
and $\parity | \psc \rangle := | {-} \psc \rangle $
for phase-space coordinates of coherent states \cite{cahill1969,Royer77,bishop1994,LIParity}. 
This parity operator is obtained as a phase-space average 
\begin{equation}
\label{paritydef}
\parity :=
(4 \pi \hbar)^{-1}\int \mathcal{D}(\psc) \, \mathrm{d}\psc
\end{equation}
of
the displacement operator from \eqref{dispopdefxp}.
One finds for all $\psi\in\mathcal S(\mathbb R)$, $x\in\mathbb R$ that
\begin{align*}
[\parity\psi](x)&=(4 \pi \hbar)^{-1}\int [\mathcal{D}(\psc)\psi](x) \, \mathrm{d}\psc\\
&=\tfrac12\cdot(2\pi\hbar)^{-1}\int e^{-\frac{i}\hbar(xp'-x'p)}[\mathcal{D}(\psc)\psi](x) \, \mathrm{d}\psc\;\Big|_{x'=p'=0}\\
&=\tfrac12 \big\{ \F_{\sigma} [\mathcal{D}\psi(x)](\psc')\big\}  \big|_{\psc'=0},
\end{align*}
or $\parity=\tfrac{1}{2} \{ [\F_{\sigma} \mathcal{D}](\psc')\}  |_{\psc'=0}$ for short.
Thus the parity operator equals evaluating the symplectic Fourier transform
of the displacement operator at
the phase-space point $\psc' = 0$. This
is related to the Grossmann-Royer operator
\begin{equation}\label{defgr}
\tfrac{1}{2} [\F_{\sigma} \mathcal{D}](-\psc)
=
\mathcal{D}(\psc)  \parity  \mathcal{D}^\dagger(\psc),
\end{equation}
which is the parity operator transformed by the displacement operator \cite{Grossmann1976,Royer77,thewignertransform,LIParity,bishop1994}.
Here, we use in both \eqref{paritydef} and \eqref{defgr} an abbreviated notation for formal integral
transformations of the displacement operator.

\begin{remark}\label{justification}
This abbreviation in Eq.~\eqref{defgr} is justified as the existence
of the corresponding integral
$
(\parity \phi)(\psi)=
(4 \pi \hbar)^{-1} \int\phi( \mathcal{D}^\dagger(\psc)\psi)\, \mathrm{d}\psc=(4 \pi \hbar)^{-1} \int\phi( \mathcal{D}(\psc)\psi)\, \mathrm{d}\psc
$
is guaranteed by, e.g., \cite[Sec.~1.3., Prop.~8]{thewignertransform}
for all $\phi\in\mathcal{S}'(\mathbb{R})$.
In the following, we will use this abbreviated notation
for formal integral transformations of the displacement operator, i.e.~by dropping $\phi$.
However, we might need
to restrict the domain
of more general parity operators to ensure the existence of the respective integrals.
\end{remark}

\subsection{Wigner Function and the Cohen class \label{wigfunct}}
The Wigner function $W_{\psi}(x,p)$ of a pure quantum state $| \psi \rangle$ was originally defined by
Wigner in 1932 \cite{wigner1932} and it is (in modern terms) the integral transformation 
of a pure state  $\psi \in L^2(\mathbb{R})$, i.e.
\begin{gather*}
W_{\psi}(x,p)
=
(2 \pi \hbar)^{-1}\int e^{- \tfrac{i}{\hbar} py} \psi^*(x{-}\tfrac{1}{2}y) \psi(x{+}\tfrac{1}{2}y)  \, \mathrm{d}y \\
=
(\pi \hbar)^{-1} \langle \psi, \, \mathcal{D}(x,p)  \parity  \mathcal{D}^\dagger(x,p) \psi \rangle
= (\pi \hbar)^{-1} \, \tr\,[ \, (|\psi\rangle\langle\psi|) \, \mathcal{D}(\psc)  \parity  \mathcal{D}^\dagger(\psc) ].
\end{gather*}
The second and third equalities specify the Wigner function using the Grossmann-Royer operator
\cite{Grossmann1976,thewignertransform}  from \eqref{defgr}, refer to \cite[Sec.~2.1.1., Def.~12]{thewignertransform}.
We use this latter form to extend the definition of the Wigner function to mixed quantum states as
in \cite{cahill1969,agarwal1970,Royer77,bishop1994}.

\begin{definition}\label{inifnitedimdefinition}
The Wigner function $W_\rho (\psc)\in L^2(\mathbb{R}^2)$ of an infinite-dimensional
density operator (or quantum state) 
$\rho=\sum_n p_n | \psi_n \rangle \langle \psi_n |\in\mathcal B^1(\mathcal H)$ is
proportional to the quantum-mechanical expectation value
\begin{equation} \label{wigdef}
W_\rho (\psc) :=
 (\pi \hbar)^{-1} \, \tr\,[ \, \rho \, \mathcal{D}(\psc)  \parity  \mathcal{D}^\dagger(\psc) ]
=  \sum_n p_n W_{\psi_n}(\psc)
\end{equation}
of the Grossmann-Royer operator from \eqref{defgr}, which
is the parity operator $\parity$ 
transformed by the displacement operator $\mathcal{D}(\psc)$, 
refer also to \cite{cahill1969,agarwal1970,Royer77,bishop1994,LIParity,thewignertransform}.
\end{definition}

The square-integrable cross-Wigner transform $ W_{\psi,\psi'}(\psc) \in L^2(\mathbb{R}^2)$
of two functions $\psi,\psi' \in L^2(\mathbb{R})$
used in time-frequency analysis
\cite{thewignertransform,bornjordan}
is obtained via the finite-rank operator $A = | \psi \rangle  \langle \psi'|$
in the form $W_{\psi,\psi'}(\psc):=W_A (\psc)$. Furthermore as 
$\lbrace\mathcal D(\psc):\psc\in \mathbb C\rbrace$ forms a subgroup 
of the unitary group on $\mathcal H$, the expression in 
Definition~\ref{inifnitedimdefinition} is closely related to the $C$-numerical range 
of bounded operators \cite{dirr_ve}, i.e.~$\lbrace W_\rho (\psc):\psc\in\mathbb C\rbrace$ 
by \eqref{wigdef} forms a subset of the $\rho$-numerical range of $(\pi \hbar)^{-1}\parity$.

The Wigner representation is in general  a bijective, linear mapping between
the set of density operators (or, more generally, the trace-class operators) and the phase-space
distribution functions $W_\rho$ that satisfy the so-called Stratonovich
postulates \cite{stratonovich,brif98}:
\begin{align*}
&\text{Postulate (i):}\;
\text{ $\rho \mapsto W_\rho	$ is one-to-one} \;\text{(linearity)}, \\
&\text{Postulate (ii):}\;
W_{\rho^\dagger}=W_\rho^* \;\text{(reality)},  \\
&\text{Postulate (iiia):}\;
\tr( \rho) = \int W_\rho  \, \mathrm{d} \psc \;\text{(normalization)},  \\
&\text{Postulate (iiib):}\; 
\tr( A^\dagger \rho ) = \int a^* \,  W_{\rho} \, \mathrm{d} \psc \;\text{(traciality)},  \\
&\text{Postulate (iv):}\; 
\text{$W_{\mathcal{D}(\psc')\rho}(\psc)=W_\rho(\psc {-} \psc')$} \;\text{(covariance)}.
\end{align*}
The not necessarily bounded\footnote{
	For unbounded operators $A$,
this postulate still makes sense if $\rho$ is 
has a finite 
representation in the number state basis, that is,
$\rho=\sum_{m,n=1}^N\langle m|\rho n\rangle|m\rangle\langle n|$ for some $N\in\mathbb N_0$.
Then this postulate gets replaced by the well-defined expression
$
\sum_{m,n=0}^N\langle m\rho|n\rangle\langle m|An \rangle^* = \int a^* \,  W_{\rho} \, \mathrm{d} \psc
$,
see also Appendix \ref{app_phase_space_extension}.} operator $A$ is the Weyl quantization
of the phase-space function (or distribution) $a(\psc) \in \mathcal{S}'(\mathbb{R}^2)$,
refer to Sec.~\ref{reltoquant}.
Based on these postulates, the Wigner function was defined for phase-spaces of
quantum systems with different dynamical symmetry groups via  coherent states
\cite{perelomov2012,gazeau,brif98,tilma2016,koczor2016,koczor2017}.

 Before finally presenting the definition of
the Cohen class for density operators following \cite[Sec.~8.1., Def.~93]{thewignertransform}
or \cite{Cohen95}, let us first recall the concept of convolutions.
Given Schwartz functions $a,\phi\in\mathcal S(\mathbb R^2)$ one defines their convolution via
	\begin{equation}\label{convolutiondef}
	\phi \ast a:= 2\pi \hbar\, \F_\sigma [(\F_\sigma \phi )\, (\F_\sigma a)]
	\end{equation} 
which is again in $\mathcal S(\mathbb R^2)$.
In principle this formula extends to general functions, although convergence may become an issue. 
These extensions are used in Theorem \ref{convolutionproposition} as well as Section \ref{reltoquant}.
	Now Eq.~\eqref{convolutiondef} as well as the fact that
$$(\phi\ast a)(\psc)=\int\phi(\psc')a(\psc{-}\psc')\,\mathrm d\psc'=\langle\phi^*,[\mathcal{T}(\Omega) a]^\vee\rangle$$
are for example shown in \cite[Thm.~IX.3]{ReedSimon1},
where $a^\vee(\Omega) := a(-\Omega)$ 
and $\mathcal{T}(\Omega)$ is the operator which translates a
function by $\Omega$ [i.e.~$\mathcal{T}(\Omega) a(\Omega') := a(\Omega'{-}\Omega)$].
With this in mind one arrives at an extension of the convolution to tempered
distributions \cite[Eq.~(4.37) ff.]{groechenig2001foundations}: Given $\theta\in\mathcal S'(\mathbb R^2)$, $a\in\mathcal S(\mathbb R^2)$ set
\begin{equation}\label{eq:convol_distr}
(\theta\ast a)(\psc):=\theta\big([\mathcal{T}(\Omega) a]^\vee\big)
\end{equation}
for all $\psc\in\mathbb R^2$.
This definition extends in a natural way to general linear functionals $\theta:D_\theta\to\mathbb C$
on some subspace $D_\theta\subseteq(\mathbb R^2\to\mathbb C)$, and general functions
$a:\mathbb R^2\to\mathbb C$ as long as $[\mathcal{T}(\Omega) a]^\vee\in D_\theta$ for all $\Omega\in\mathbb R^2$.

Defining the convolution via Eq.~\eqref{eq:convol_distr} is consistent with the distributional 
pairing in the sense that $\langle\phi^*|\ast a\equiv \phi\ast a$, if $\langle\phi^*|
(\psi):=\langle\phi^*,\psi\rangle$ on $\mathcal S(\mathbb R^2)$.
Moreover one readily verifies the identity
$\langle (\theta\ast a)^*,\psi\rangle=\theta(a^\vee\ast\psi)$ for all $\theta\in\mathcal S'(\mathbb R^2)$,
$a,\psi\in\mathcal S(\mathbb R^2)$. This shows that Eq.~\eqref{eq:convol_distr}
is equivalent to other extensions of convolutions commonly found in the literature, e.g., \cite[p.~324]{ReedSimon1}.
Be aware that $\theta\ast a$ is always a function of slow growth,
that is, $\langle(\theta\ast a)^*,\cdot\rangle\in\mathcal S'(\mathbb R^2)$
for all $\theta\in\mathcal S'(\mathbb R^2)$, $a\in\mathcal S(\mathbb R^2)$ \cite[Thm.~IX.4]{ReedSimon1}.

\begin{definition} \label{defofcohenclass}
	The Cohen class is the set of all linear mappings from
	density operators to phase-space distributions that
	are related to the Wigner function via a convolution.
	More precisely a linear map 
	$\FF:\mathcal B^1(L^2(\mathbb R))\to(\mathbb R^2\to\mathbb C)$, $\rho\mapsto \FF_\rho$
	maps to the phase-space distributions if $\langle \FF_\rho,\cdot\,\rangle\in\mathcal S'(\mathbb R^2)$
	for all $\rho\in\mathcal B^1(L^2(\mathbb R))$. Then $\FF$
	belongs to the Cohen class if there exists\footnote{
	More precisely $\theta$ has to be a linear functional on a subspace
$D_\theta$ of $\mathbb R^2\to\mathbb C$ such that
$[\mathcal{T}(\Omega) W_\rho]^\vee\in D_\theta$ for all $\rho\in\mathcal B^1(L^2(\mathbb R))$, $\Omega\in\mathbb R^2$.
However we will keep things informal by assuming henceforth that all convolutions we encounter are well-defined in the sense of Eq.~\eqref{eq:convol_distr}.
	} $\theta\in\mathcal S'(\mathbb R^2)$ (called ``Cohen kernel'') such that 
	$$
	\FF_\rho(\psc)=[\theta \ast W_\rho](\psc).
	$$
\end{definition}

This is a generalization of the definition commonly found in the literature \cite[Def.~93]{thewignertransform}:
there one restricts the domain of $\FF$ from the full trace class to only rank-one operators $\rho=|\phi\rangle\langle\psi|$
for some $\phi,\psi\in L^2(\mathbb R)$ or even $\in\mathcal S(\mathbb R)$.
As a simple example \cite[p.~90]{thewignertransform} the Wigner function is in the Cohen class:
To see this choose $\theta=\delta$ in the above definition:
$
[\delta\ast W_\rho](\psc)=\delta([\mathcal T(\psc)W_\rho]^\vee)=W_\rho(\psc).$

\begin{remark}\label{rem_cohen_fourier}
	Given some $\theta\in\mathcal S'(\mathbb R^2)$ associated to an element $\FF$ of the Cohen class, 
	one formally obtains
	$ \F_\sigma[\FF_\rho]=\F_\sigma[\theta \ast W_\rho]=K_\theta \F_\sigma[W_\rho] $
	if the symplectic Fourier transform of $\theta$ is generated by a function $K_\theta:\mathbb R^2\to\mathbb C$ 
	via the usual distributional pairing (we will call this ``admissible'' later, cf.~Section \ref{phase_space_theory_sec}).
	The reason we make this observation is that this object always exists: it is a product of 
	two classical functions where $\F_\sigma[W_\rho]$ is a bounded and square-integrable function,
	i.e.~$|\F_\sigma[W_\rho](\psc)| = |\tr[\mathcal D(\psc) \rho]| \leq \|\mathcal D(\psc)\|_\indsupnorm\|\rho\|_1 =1$
	due to unitarity of $\mathcal D(\psc)$, and $W_\rho\in
	L^2(\mathbb{R}^2)$ \cite[Proposition 68]{thewignertransform} so the same holds true for its Fourier transform.
	Thus---while the expression $\theta\ast W_\rho$ may be ill-defined for certain $\theta\in\mathcal S'(\mathbb R^2)$, 
	$\rho\in\mathcal B^1(L^2(\mathbb R))$---going to the Fourier domain yields a well-defined object which can 
	be studied rather easily.
\end{remark} 	
\section{Theory of Parity Operators and Their Relation to Quantization \label{parityopquantizationsec}}

\subsection{Phase-Space Distribution Functions via Parity Operators\label{phase_space_theory_sec}}
We propose a definition for phase-space distributions
and the Cohen class based on parity operators, the explicit form of which will
be calculated in Section~\ref{explicitparitysection}.  A similar form has already appeared 
in quantum optics for the so-called $s$-parametrized
distribution functions, see, e.g.,  \cite{cahill1969,moya1993}.
In particular, an explicit form of a parity operator that
requires no integral-transformation
appeared in (6.22) of \cite{Cahill68},
including its eigenvalue decomposition which was later
re-derived in the context of measurement probabilities in \cite{moya1993},
refer also to \cite{Royer77,LIParity}.
Apart from those results, mappings between density operators and their
phase-space distribution functions have been established only in terms of
integral transformations of expectation values,
as in \cite{cahill1969,Agarwal68,agarwal1970}.

For a convolution kernel $\theta \in  S'(\mathbb R^2)$, we introduce the corresponding \emph{filter kernel}
\begin{equation}\label{eq_general_K_theta}
K_\theta :=  2\pi\hbar\, \F_\sigma(\theta)
\end{equation}
where 
$\F_\sigma$ denotes the symplectic Fourier transform
(see Section~\ref{functionfourier}).
Henceforth we say $\theta\in\mathcal S'(\mathbb R^2)$ is \emph{admissible} if its filter kernel is generated 
by a function via the usual integral form of the distributional pairing  $\langle \phi , \psi\rangle=\phi(\psi)\in\mathbb C$
for $\phi \in \mathcal S'(\mathbb R)$
and $\psi \in \mathcal S(\mathbb R)$ 
(see Section~\ref{functionfourier}):
More precisely $\theta$ is admissible if
there exists a function $K_\theta$ from 
$\mathbb R^2$ to $\mathbb C$ such that $2\pi\hbar\, 
\F_\sigma(\theta)  (\psi) =\langle K_\theta^*, \psi \rangle$ for $\psi \in \mathcal S(\mathbb R)$. In this case we call $K_\theta$ 
the \textit{filter function} associated with $\theta$.

Most importantly if the convolution kernel is admissible and itself is generated by a function,
i.e.~if we consider $\langle\theta^*,\cdot\,\rangle\in  S'(\mathbb R^2)$ admissible, then Eq.~\eqref{eq_general_K_theta} simplifies to
\begin{equation}\label{eq_K_theta}
K_\theta(\Omega) = 2\pi\hbar\, [\F_\sigma \theta^\vee](\Omega)= 2\pi\hbar\, [\F_\sigma  \theta](-\Omega)
\end{equation}
for all $\Omega\in\mathbb R^2$.
As before $\theta^\vee(\Omega) = \theta(-\Omega)$.
The technical condition of $\theta$ being admissible is always satisfied in practice 
(cf.~Tables~\ref{quanttable} and \ref{filterfunctionstable}).
The advantage of only considering admissible kernels is that the definition of the (generalized) parity
operator makes for an obvious generalization of the parity operator from Section \ref{reflections}. 
For an even more general definition we refer to Remark~\ref{rem_general_def_parity} in Appendix \ref{app_operator_extension}.

\begin{definition} \label{def_parity_operator}
Given any admissible convolution kernel $\theta\in S'(\mathbb R^2)$ with associated filter function $K_\theta$
we define a parity operator $\parity_\theta$ on $\mathcal S(\mathbb R)$ via
\begin{equation} \label{genparitydef}
\parity_\theta := (4 \pi \hbar)^{-1}\int K_\theta(\psc) \mathcal{D}(\psc) \, \mathrm{d}\psc,
\end{equation}
that is, $[\parity_\theta\psi](x):=(4\pi\hbar)^{-1}\int K_\theta(\Omega)[\mathcal D(\Omega)\psi ](x)\,\mathrm d\psc$
for all $\psi\in\mathcal S(\mathbb R)$, $x\in\mathbb R$. 
This extends to a parity operator on the tempered distributions $\langle \Pi_\theta|:D_\theta\to\mathcal S'(\mathbb R)$ via
\begin{equation}\label{eq:parity_extension}
\langle \Pi_\theta|  := (4 \pi \hbar)^{-1}\int K_\theta^*(\psc) \mathcal{D}^\dagger(\psc) \, \mathrm{d}\psc
\end{equation}
(where the notation $\langle \Pi_\theta|$ is replaced below with $\Pi_\theta$) with domain
\begin{equation} \label{integral_domain}
	D_\theta:=	\{
	\phi \in \mathcal{S}'(\mathbb{R})
	\,\text{ s.t. }\,
	\int K_\theta^*(\psc)\phi[ \mathcal{D}(-\psc)(\cdot)]\, \mathrm{d}\psc
	\in\mathcal{S}'(\mathbb{R})
	\}.
\end{equation}
\end{definition}

The derivation of the extension \eqref{eq:parity_extension} of $\Pi_\theta$ to tempered distributions is detailed in
Appendix~\ref{app_operator_extension}. Displacements of tempered distributions $\phi \in \mathcal{S}'(\mathbb{R})$
are understood via the distributional pairing
$ \langle \mathcal{D}(\psc)\phi,  \psi  \rangle = \langle \phi, \mathcal{D}^\dagger(\psc) \psi  \rangle$
and \eqref{eq:parity_extension} gives rise to a well-defined linear operator $\langle\Pi_\theta|$ from
$D_\theta$ to $\mathcal S'(\mathbb R)$ acting on $\psi\in\mathcal S(\mathbb R)$ via
	 \begin{equation}\label{eq:action_gen_parity}
( \langle \Pi_\theta|\phi)(\psi)=(4\pi\hbar)^{-1}\int K_\theta^*(\psc) 
	 \langle  \phi,\,
	  e^{-\tfrac{i}{\hbar}(p_{0}x+\tfrac{1}{2}p_{0}x_{0})}
	 \psi(x{+}x_{0})
	 \rangle 
	 \, \mathrm{d}\psc.
	 \end{equation}

The definition of $\Pi_\theta$ 
is independent of
the object it acts on
(see Appendix~\ref{app_operator_extension}): 
$\langle \Pi_\theta|\langle\phi,\cdot\,\rangle=\langle\Pi_\theta\phi,\cdot\,\rangle$ for all 
$\phi\in\mathcal S(\mathbb R)$ where 
$\langle\phi,\cdot\,\rangle$ denotes
the functional
$\psi\mapsto\langle\phi,\psi\rangle\in\mathcal S'(\mathbb R)$.
All filter functions used in practice (refer to Tables~\ref{quanttable} and
\ref{filterfunctionstable}) obey
$K_\theta^*(x_0,p_0)=K_\theta(x_0,-p_0)$ for all $x_0,p_0\in\mathbb R$.
In this case,
$\langle\Pi_\theta|$ is not only compatible with
the inner product on $L^2(\mathbb R)$, but also with the embedding
$\mathcal S(\mathbb R)\hookrightarrow\mathcal S'(\mathbb R)$ usually employed in mathematical physics
(see Lemma \ref{lemma_filter_function_symmetry} in Appendix~\ref{app_operator_extension}).
This motivates us to
henceforth write $\Pi_\theta$ both in the case of \eqref{genparitydef} and 
instead of $\langle \Pi_\theta|$ in \eqref{eq:parity_extension}.

While our definition above is pleasantly intuitive, we have to explicitly consider the domain
of the parity operator.
For a general (admissible) kernel $\theta$,
one needs to restrict the domain $D_\theta \subseteq \mathcal{S}'(\mathbb{R})$
of
$\parity_\theta$ to tempered distributions for which
the integral in Eq.~\eqref{genparitydef} exists, as
done in Eq.~\eqref{integral_domain} and
already hinted at in Remark~\ref{justification}.

\begin{example}\label{example:divergence}
Domain considerations are illustrated using the standard ordering with
$K_\theta(\Omega_0)=\exp[{i p_0 x_0/(2 \hbar) }]$ (see Table~\ref{quanttable}). 
Given any $\phi,\psi\in\mathcal S(\mathbb R)$, we have
\begin{align}\nonumber
\langle\phi,\Pi_\theta\psi\rangle&=(\Pi_\theta\langle\psi,\cdot\rangle)(\phi)^*
=(4\pi\hbar)^{-1}\iiint\phi^*(x{+}x_0)e^{\tfrac{i}{\hbar}p_0(x+x_0)}\psi(x)\,\mathrm{d}x_0\,\mathrm{d}x\,\mathrm{d}p_0\\
\nonumber
&=(8\pi\hbar)^{-1/2}\Big(\int[\F\phi](p_0)\,\mathrm{d}p_0\Big)^*\Big(\int\psi(x)\,\mathrm{d}x\Big)\\
&=\sqrt{\tfrac{\pi\hbar}{2}}\F[\F\phi]^*(\tilde{p})\big|_{\tilde{p}=0}[\F\psi](\tilde{x})\big|_{\tilde{x}=0}
=\sqrt{\tfrac{\pi\hbar}2}\phi^*(0)[\F\psi](0). \label{eq:ex2}
\end{align}
This reproduces known properties
as in  Eq.~(5.39) of
\cite{cohen2012weyl} (cf.\ Remark \ref{remark_cohen}); however we emphasize that, although Eq.~\eqref{eq:ex2}
exists for all functions $\phi,\psi$ as long as $[\F\psi](0)$ exists, this expression is only equal to
$\langle\phi,\Pi_\theta\psi\rangle$ if in addition $\phi$ and $\F\phi$ are both in $L^1$ (else the Fourier
inversion formula used in the last step cannot be applied).
In other words a function $\phi:\mathbb R\to\mathbb R$ 
is in the domain  $D_\theta$ of $\Pi_\theta$ if and only if its Fourier transform exists and is in
$L^1(\mathbb R)$ if and only if \eqref{eq:ex2} (resp.~Eq.~(5.39) of
\cite{cohen2012weyl}) equals $\langle\phi,\Pi_\theta\psi\rangle$
for all suitable $\psi$. 
In particular,
$D_\theta$ contains all Schwartz functions showing that $\Pi_\theta$ is
densely defined.
However the functional $\langle\phi,\cdot\,\rangle\in\mathcal S'(\mathbb R)$ fails 
to be in $D_\theta$ for most functions $\phi:\mathbb R\to\mathbb C$ of slow growth including 
non-zero constant ones such as $\phi:= 1 \in\mathcal{S}'(\mathbb{R})$. 
In particular, $\Pi_\theta$ does not extend
to a well-defined operator on $L^2(\mathbb R)$ as not all square-integrable
functions
will be contained in  $D_\theta$.
\end{example}

Following this line of thought, we investigate the well-definedness and boundedness of 
$\Pi_\theta$ on the Hilbert space $L^2(\mathbb R)$.
As in Example \ref{example:divergence}, we observe 
that $\mathcal S(\mathbb R)\subseteq D_\theta$ for all filter functions $K_\theta$
which is particularly relevant for applications.
This follows by interpreting $\Pi_\theta$ as a Weyl quantization 
(cf.~Section \ref{reltoquant}) whereby $\theta\mapsto\Pi_\theta$ is specified as a map from $\mathcal S'(\mathbb R^2)$
to the linear maps between 
$\mathcal S(\mathbb R)$ and $\mathcal S'(\mathbb R)$ (cf.~Chapter 6.3 in \cite{bornjordan} or
Lemma 14.3.1 in \cite{groechenig2001foundations}).
Consequently, every parity operator has a well-defined matrix representation
in the number-state basis (which is a subset of $\mathcal S(\mathbb R)$,
cf.~Section \ref{quantumstates}).
The following stronger statement is shown in Appendix~\ref{proofs_4_1_a}:
\begin{lemma}\label{lemma_bounded_op_sufficient}
Given any convolution kernel $\theta\in\mathcal S'(\mathbb R^2)$ the following are equivalent:\smallskip\\
(i,a) $\Pi_\theta:L^2(\mathbb R)\to L^2(\mathbb R)$ is a well-defined linear operator, that is, the mapping 
$x\mapsto \frac12\theta(\F_\sigma[ \mathcal D\psi(x) ])$ (cf.~Remark \ref{rem_general_def_parity},
Appendix \ref{app_operator_extension}) is in $L^2(\mathbb R)$ for all $\psi\in L^2(\mathbb R)$.\\
(i,b) $[\theta \ast W_{\psi}](0,0)$ exists for all $\psi\in L^2(\mathbb R)$, i.e.~$[\theta \ast W_{\psi}](0,0)<\infty$.\smallskip

Also the following statements are equivalent:\smallskip\\
(ii,a) $\sup_{\psi,\phi\in L^2(\mathbb R),\|\psi\|=\|\phi\|=1}| [\theta \ast W_{\phi,\psi}](0,0)|<\infty$.\\
(ii,b) $(\phi,\psi)\mapsto \theta \ast W_{\phi,\psi}$ is weakly continuous on $L^2(\mathbb R)$ in the sense that there exists 
$C>0$ such that  $| [\theta \ast W_{\phi,\psi}](0,0)|\leq C\|\phi\|\|\psi\|$ for all $\phi,\psi\in L^2(\mathbb R)$.\\
(ii,c) $\Pi_\theta\in\mathcal B(L^2(\mathbb R))$.\smallskip\\
Moreover if $\theta$ is admissible, then (i,a), (i,b) and (ii,a), (ii,b), (ii,c) are all equivalent.
\end{lemma}

Recalling from Section \ref{wigfunct}, $W_{\phi,\psi}$ is the usual cross-Wigner transform given by
\begin{align*}
W_{\phi,\psi}(x,p)&=(\pi\hbar)^{-1}\langle\psi,\mathcal D(x,p)\Pi\mathcal D^\dagger(x,p)\phi\rangle\\
&=(2\pi\hbar)^{-1}\int_{-\infty}^\infty e^{-\tfrac{i}{\hbar}py}\psi^*(x{-}\tfrac{y}2)
\phi(x{+}\tfrac{y}2)\,dy.
\end{align*}
Let us highlight that condition (ii,b) in Lemma \ref{lemma_bounded_op_sufficient} is a known sufficient condition from 
time-frequency analysis to ensure that a tempered distribution $\theta$ is an element of the Cohen class,
cf.~Theorem 4.5.1 in \cite{groechenig2001foundations}.
Now the almost magical result of Lemma \ref{lemma_bounded_op_sufficient} is that $\Pi_\theta$ being well defined on $L^2(\mathbb R)$
automatically implies boundedness as long as $\theta$ is admissible. This can also be attributed to the folklore that unbounded 
operators ``cannot be written down explicitly'': As the operator $\Pi_\theta$ for admissible kernels is defined via an explicit
integral, one gets the boundedness of $\Pi_\theta$ ``for free.''
Indeed the proof that all five statements from the above lemma
are equivalent breaks down if one considers not only admissible but arbitrary kernels.

We define a general class of phase-space distribution functions $\FF_\rho(\psc,\theta)$ via the (formal) 
expression $ (\pi \hbar)^{-1} \, \tr\,[ \, \rho \, \mathcal{D}(\psc)  \parity_\theta  \mathcal{D}^\dagger(\psc) ]$. For general $\theta$, 
however, this only makes sense if all displaced quantum states
$\mathcal D^\dagger(\psc)\rho\mathcal D(\psc)$ are supported on $D_\theta$.
We avoid these technicalities by restricting the 
definition to those filter functions which give rise to operators $\Pi_\theta$ that are bounded on $L^2(\mathbb R)$
and thereby allow for general $\rho$.
\begin{definition}\label{def_phase_space_function}
Given any 
$\theta\in\mathcal S'(\mathbb R^2)$ such that $\Pi_\theta\in\mathcal B(L^2(\mathbb R))$ we define a linear mapping $\FF_\rho(\cdot,\theta)$ on
the density operators $\rho\in\mathcal B^1(L^2(\mathbb R))$ 
in the form of the quantum-mechanical expectation value
\begin{equation}
\label{genpsdef}
\FF_\rho(\psc,\theta)
:= (\pi \hbar)^{-1} \, \tr\,[ \, \rho \, \mathcal{D}(\psc)  \parity_\theta  \mathcal{D}^\dagger(\psc) ].
\end{equation}
\end{definition}

While our definition considers the practically most important case of bounded parity operators, 
we give a detailed account in Appendix~\ref{app_phase_space_extension}
of the extension of $\FF_\rho(\psc,\theta)$ to arbitrary $\theta\in\mathcal S'(\mathbb R^2)$  whereby
the associated parity operators may be unbounded.
This is of importance for, e.g., the standard and antistandard orderings as shown in Example \ref{example:divergence}.
The prototypical case
where these extensions may \textit{not} apply due to $\theta\notin\mathcal S'(\mathbb R^2)$ is
the case of the Glauber P function which is well known to be singular except for classical thermal states.
However, most other convolution kernels appearing
in practice are induced by a tempered distribution and thus fit into the
framework of either Definition~\ref{def_phase_space_function} or its
extension in Appendix~\ref{app_phase_space_extension}.

Either way Definition \ref{def_phase_space_function} has many conceptual and computational advantages as we have detailed in 
the introduction. 
To further clarify the scope of said definition
we now---similarly to the proof of Lemma \ref{lemma_bounded_op_sufficient}---relate
the distribution functions $\FF_\rho(\psc,\theta)$ from Eq.~\eqref{genpsdef} to
the Cohen class (see Definition~\ref{defofcohenclass} and \cite[Ch.~8]{thewignertransform}) by considering the 
filter function associated with any admissible kernel.
\begin{theorem} \label{convolutionproposition}
Given any $\theta \in S'(\mathbb R^2)$ such that $\Pi_\theta\in\mathcal B(L^2(\mathbb R))$,
the corresponding phase-space distribution function 
	$\FF_\rho(\psc,\theta) \in \mathcal{S}'(\mathbb{R}^2)$
	as defined in Eq.~\eqref{genpsdef}
	is an element of the Cohen class.
	In particular,
 $\FF_\rho(\psc,\theta)$ is related to the Wigner function
	$W_\rho(\psc)$ via the convolution
	\begin{equation} \label{convolutioneq}
	\FF_\rho(\psc,\theta)
	=
	[\theta \ast W_\rho](\psc).
	\end{equation}
If the convolution kernel $\theta\in\mathcal S'(\mathbb R^2)$ 
is additionally admissible---meaning it is the reflected symplectic Fourier
	transform
$\theta=(2\pi\hbar)^{-1}\langle \F_\sigma K_\theta^*|$ of its filter function $K_\theta$---then
	in analogy to \eqref{defgr} one finds
	\begin{equation} \label{generalizedfourier}
	\mathcal{D}(\psc)  \parity_\theta  \mathcal{D}^\dagger(\psc)
	=\tfrac{1}{2} \F_{\sigma} [ K_\theta(\cdot) \mathcal{D}(\cdot) ](-\psc).
	\end{equation}
\end{theorem}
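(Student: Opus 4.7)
The plan is to prove the displaced operator identity \eqref{generalizedfourier} first, and then use it to reduce \eqref{convolutioneq} to a short calculation with the characteristic function. To establish \eqref{generalizedfourier}, I would substitute the integral definition \eqref{genparitydef} of $\parity_\theta$ into $\mathcal{D}(\psc')\parity_\theta\mathcal{D}^\dagger(\psc')$ and bring the outer displacements inside (as a weak integral). The Heisenberg--Weyl composition rule from Sec.~\ref{translations} gives $\mathcal{D}(\psc')\mathcal{D}(\psc)\mathcal{D}^\dagger(\psc')$ equal to $\mathcal{D}(\psc)$ times a symplectic phase factor in $(\psc,\psc')$, and that phase is precisely the kernel of \eqref{sympfourier} evaluated at $-\psc'$. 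Collecting constants yields $\mathcal{D}(\psc')\parity_\theta\mathcal{D}^\dagger(\psc')=\tfrac12 F_\sigma[K_\theta(\cdot)\mathcal{D}(\cdot)](-\psc')$, which specializes to \eqref{defgr} for $K_\theta\equiv 1$.

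Next I would move to the scalar side. Introduce the bounded characteristic function $\chi_\rho(\psc):=\tr[\rho\,\mathcal{D}(\psc)]$; applying \eqref{defgr} inside Definition~\ref{inifnitedimdefinition} gives $2\pi\hbar\,W_\rho(\psc)=F_\sigma[\chi_\rho](-\psc)$, and tracing \eqref{generalizedfourier} against $\rho$ gives $2\pi\hbar\,F_\rho(\psc,\theta)=F_\sigma[K_\theta\,\chi_\rho](-\psc)$. The hypothesis $K_\theta=2\pi\hbar[F_\sigma\theta](-\cdot)$ combined with self-inversion $F_\sigma\circ F_\sigma=\mathrm{id}$ and the elementary identity $F_\sigma[f(-\cdot)](\psc)=F_\sigma[f](-\psc)$ immediately delivers the claimed inverse $\theta(\psc)=(2\pi\hbar)^{-1}[F_\sigma K_\theta](-\psc)$ together with the two Fourier relations $F_\sigma\theta=(2\pi\hbar)^{-1}K_\theta(-\cdot)$ and $F_\sigma W_\rho=(2\pi\hbar)^{-1}\chi_\rho(-\cdot)$.

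Substituting these two relations into the convolution definition \eqref{convolutiondef} and collapsing constants gives
\[
\theta\ast W_\rho(\psc)=(2\pi\hbar)^{-1}F_\sigma[K_\theta(-\cdot)\,\chi_\rho(-\cdot)](\psc)=(2\pi\hbar)^{-1}F_\sigma[K_\theta\,\chi_\rho](-\psc),
\]
which matches $F_\rho(\psc,\theta)$ computed above; this proves \eqref{convolutioneq} and, as a convolution of a tempered distribution with the Schwartz-class-friendly function $W_\rho$, the result lies in $\mathcal{S}'(\mathbb{R}^2)$. The main obstacle is analytic rather than algebraic: one must justify that the operator-valued integral defining $\parity_\theta$ pairs with $\rho\in\mathcal{B}^1(\mathcal{H})$ under the trace, and that the products $K_\theta\,\chi_\rho$ and $(F_\sigma\theta)(F_\sigma W_\rho)$ are well-defined in $\mathcal{S}'(\mathbb{R}^2)$. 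Both reduce to the same observation --- $\chi_\rho$ is bounded by $\|\rho\|_1=1$ (since each $\mathcal{D}(\psc)$ is unitary) and hence slowly increasing, so multiplying by a tempered distribution stays tempered --- which also legitimizes the Fubini-type interchange of trace and symplectic Fourier transform when read as a distributional pairing against a fixed Schwartz test function in $\psc'$.
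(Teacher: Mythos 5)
Your proposal is correct, and it reaches the result by a genuinely different route than the paper. The paper starts from the convolution $\theta \ast W_\rho$ and works at the operator level: it convolves the Grossmann--Royer operator $\mathcal{D}(\psc)\parity\mathcal{D}^\dagger(\psc)$ with $\theta$, applies the convolution formula \eqref{convolutiondef} to this operator-valued function, \emph{formally} identifies the result with $\mathcal{D}(\psc)\parity_\theta\mathcal{D}^\dagger(\psc)$ by matching traces against $\rho$, and only then extracts \eqref{generalizedfourier}, finally recovering \eqref{genparitydef} by evaluating at $\psc=0$. You instead prove \eqref{generalizedfourier} first and directly from the definition \eqref{genparitydef}, using only the Weyl commutation relation $\mathcal{D}(\psc')\mathcal{D}(\psc)\mathcal{D}^\dagger(\psc')=e^{\tfrac{i}{\hbar}\sigma(\psc',\psc)}\mathcal{D}(\psc)$ (the same addition formula the paper invokes in the proof of Property~\ref{translationalcovariance}), and then push everything to the scalar level via the characteristic function $\chi_\rho(\psc)=\tr[\rho\,\mathcal{D}(\psc)]$, so that \eqref{convolutioneq} reduces to comparing $(2\pi\hbar)^{-1}F_\sigma[K_\theta\chi_\rho](-\psc)$ computed two ways. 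What this buys: all Fourier and convolution manipulations are performed on honest scalar objects ($\chi_\rho$ bounded, $K_\theta$ a tempered distribution) rather than on operator-valued integrands where the convolution theorem is applied only formally, and the logical flow is linear rather than the paper's ``identify formally, then close the loop at $\psc=0$.'' The cost is that you must separately justify interchanging the trace with the weak operator integral, which you correctly flag; the paper's route hides the same issue inside its formal identification step. One minor caveat: your closing claim that multiplying a tempered distribution by the bounded $\chi_\rho$ ``stays tempered'' strictly requires either smoothness of the multiplier or that $F_\sigma\theta$ be a slowly growing function --- but the latter is exactly the standing assumption built into the paper's definition of the convolution \eqref{convolutiondef}, so you are no less rigorous than the paper on this point.
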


The proof of Theorem \ref{convolutionproposition} is given in Appendix~\ref{proofs_4_1_b}.
The construction of a particular class of phase-space distribution functions was detailed
in \cite{agarwal1970}, where the term ``filter function'' also appeared in the context of
mapping operators. However, these filter functions were restricted to non-zero, analytic
functions. Definition~\ref{def_phase_space_function}
extends these cases to the Cohen class
via Theorem \ref{convolutionproposition} which allows for more general phase spaces.
For example, the filter function of the Born-Jordan distribution
has zeros (see Theorem~\ref{BJdistributionTheorem} below), and is therefore not
covered by \cite{agarwal1970}. Most of the well-known distribution functions are elements of the Cohen class.
We calculate important special cases in Sec.~\ref{explicitparitysection}.
The Born-Jordan distribution and its parity operator are detailed in Sec.~\ref{bornjordansection}.

Our approach to define phase-space distribution functions using 
displaced parity operators also nicely fits
with the characteristic \cite{cahill1969,Cohen95,leonhardt97}
or ambiguity \cite[Sec.~7.1.2, Prop.~5]{thewignertransform} function  $\chi(\Omega) \in L^2(\mathbb{R}^2)$
of a quantum state that is defined as
the expectation value 
$\chi(\Omega):=\tr[\rho \mathcal{D}(\psc)]=[\F_{\sigma} W_\rho](\Omega)$
or, equivalently, as the symplectic Fourier transform of 
the Wigner function $W_\rho(\Omega)$.
By multiplying the characteristic function $\chi(\Omega)$
with a suitable filter function $K_\theta(\Omega)$
and applying the symplectic Fourier transform, one obtains
the Cohen class of phase-space distribution functions.

\begin{remark}\label{remark_cohen}
	Definitions~\ref{def_parity_operator} and \ref{def_phase_space_function} for the parity operator and
	the phase-space function
	can be compared to prior work where special cases or similar parity operators have implicitly appeared
	and where similar restrictions on their existence must be observed.
	For example, the integral definition \cite{cohen1966generalized}
	of phase-space functions
\begin{align}
&\FF_{|\phi\rangle\langle\phi|}(x,p,\theta)\nonumber\\ \label{eq:cohen}
&=(4\pi^2\hbar^2)^{-1}\iiint \phi^*(x'{-}\tfrac{y}2)
\phi(x'{+}\tfrac{y}2)K_\theta(-y,p')e^{-\tfrac{i}{\hbar}(xp'+ yp  -x'p' )}\,\mathrm{d}x'\,\mathrm{d}y\,\mathrm{d}p'\\ 
&=(\pi\hbar)^{-1}\langle\phi,\mathcal D(x,p)\Pi_\theta\mathcal D^\dagger(x,p)\phi\rangle.  \label{eq:cohen_2}
\end{align}
        as given\footnote{The filter function in \cite{cohen2012weyl} agrees with our $K_\theta(-y,p')$
	up to substituting $-y$ with $y$ and switching arguments, which is usually immaterial
	as $K_\theta(-y,p')=K_\theta(p',y)$ for all filter functions seen in practice.}
	in Eq.~(5.2) of \cite{cohen2012weyl}
	translates into the definition \eqref{eq:cohen_2}
	with the parity operator. Both Eqs.~\eqref{eq:cohen} and \eqref{eq:cohen_2} need 
	to respect domain restrictions as discussed in Example~\ref{example:divergence}
	and neither equation is well defined for tempered distributions in $S'(\mathbb R)$
	or square-integrable functions in $L^2(\mathbb R)$ that are not contained in the domain $D_\theta$.
\end{remark}

\subsection{Common Properties of Phase-Space Distribution Functions\label{property_phase_space}}

We detail now important properties
of $\FF_\rho(\psc,\theta)$ and their relation to properties of $K_\theta(\psc)$ and $\parity_{\theta}$.
These properties will guide our discussion
of parity operators and this allows us to compare the Born-Jordan distribution to other phase spaces.
Table~\ref{summaryofproperties} provides a summary of these properties and
the proofs have been deferred to Appendix~\ref{proofsofproperties}. Recall that
we are dealing exclusively with convolution kernels $\theta\in\mathcal S'(\mathbb R^2)$
which give rise to bounded operators $\Pi_\theta$ so the induced phase-space distribution $\FF_\rho(\psc,\theta)$ is well defined everywhere.

\begin{property} \label{bound} 
Boundedness of phase-space functions $\FF_\rho(\psc,\theta)$:
The phase-space distribution function $\FF_\rho(\psc,\theta)$
 is bounded in its absolute value, i.e.~$
\pi \hbar \, | \FF_\rho(\psc,\theta) | \leq \| \parity_\theta \|_\indsupnorm$ for all quantum states $\rho$,
refer to Lemma \ref{lemma1}.
In particular then $\FF_\rho(\Omega,\theta)\in\mathcal S'(\mathbb R^2)$.
Moreover one finds that square-integrable filter functions
give rise to bounded parity operators due to
$ \| \parity_\theta \|_\indsupnorm \leq 
\|K_\theta\|_{\indLnorm}/\sqrt{ 8 \pi \hbar}$.
The proof of Property~\ref{bound} in Appendix~\ref{proofsofproperties} implies the even stronger statement
that $\parity_{\theta}$ is a Hilbert-Schmidt operator if and only if
$K_\theta$ is square integrable.
\end{property}

\begin{property} \label{l2property}
Square integrability: The phase-space distribution function
$\FF_\rho(\psc,\theta)$ is square integrable [i.e.~$\FF_\rho(\psc,\theta) \in L^2(\mathbb{R}^2)$]
for all $\rho \in \mathcal{B}^1(\mathcal{H})$ if
the absolute value of the filter function is bounded
[i.e.~$K_\theta(\psc) \in L^\infty(\mathbb{R}^2)$].
In particular this implies $ \FF_\rho(\Omega,\theta)\in\mathcal S'(\mathbb R^2)$.
\end{property}

\begin{property} \label{translationalcovariance}
Postulate (iv): The  phase-space distribution function $\FF_\rho(\psc,\theta)$
satisfies by definition the covariance property.
In particular,  a displaced density operator $\rho':=\mathcal{D}(\psc') \rho \mathcal{D}^\dagger(\psc')$
is mapped to the inversely displaced distribution function
$\FF_{\rho'}(\psc,\theta)  = \FF_\rho(\psc {-} \psc',\theta)$.
\end{property}

\begin{property} \label{rotationalcovariance}
	Rotational covariance:
	Let us denote a rotated density operator $\rho^{\phi} = U_\phi \rho U^\dagger_\phi$,
	where the phase-space rotation operator is given by $U_\phi := \exp{(-i \phi  \ahatdagg \ahat)}$
	in terms of creation and annihilation operators.
	The phase-space distribution function is covariant under
	phase-space rotations,\footnote{Note that
		any physically motivated distribution function must be covariant under $\pi/2$
		rotations in phase-space, which corresponds to the Fourier transform of pure states
		and connects coordinate representations $\psi(x)$ to momentum representations $\psi(p)$.} 
	i.e.~$\FF_{\rho^{\phi}}(\psc,\theta)=\FF_\rho(\psc^{-\phi},\theta)$,
	if the filter function $K_\theta(\psc)$ (or equivalently the parity operator $\parity_{\theta}$)
	is invariant under rotations. Here, $\psc^{-\phi}$
	is the inversely rotated phase-space coordinate, e.g., $\alpha^{-\phi} = \exp{(i \phi  )} \alpha$.
	As a consequence of this symmetry, the corresponding parity operators are diagonal in the number-state representation,
	i.e.~$\langle n | \parity_\theta | m \rangle \propto \delta_{nm}$.
\end{property}

\begin{table}
\centering
\begin{tabular}{@{\hspace{1mm}}l@{\hspace{6mm}}l@{\hspace{6mm}}l@{\hspace{1mm}}}
\hline\noalign{\smallskip}
Property of $\FF_\rho(\psc,\theta)$ & Description & Requirement   \\
\noalign{\smallskip}\hline\noalign{\smallskip}
Boundedness & $|\FF_\rho(\psc,\theta)|$ is bounded  & $\|\parity_\theta\|_\indsupnorm$
is bounded\\[2mm]
Square integrability & $\FF_\rho(\psc,\theta) \in L^2(\mathbb{R}^2)$
 & $| K_\theta(\psc) |$ is bounded \\[2mm]
Linearity &  $\rho \mapsto \FF_\rho(\psc,\theta)$  is linear & by definition\\[2mm]
Covariance & $\mathcal{D}(\psc') \rho \mathcal{D}^\dagger(\psc')  \mapsto \FF_\rho(\psc {-} \psc',\theta)$ & by definition \\[2mm]
Rotations &  covariance under rotations & $K_\theta$ is invariant under rotations \\[2mm]
Reality & $\rho^\dagger  \mapsto \FF^*_\rho(\psc,\theta) $  & Symmetry $K_\theta^*(-\psc)=K_\theta(\psc)$ \\[2mm]
Traciality & $ \tr\,[  \rho ] \mapsto \int \FF_\rho(\psc,\theta) \, \mathrm{d} \psc$  & $[K_\theta(\psc)]|_{\psc=0}=1$ \\[2mm]
Marginal condition & $|\psi(x)|^2$ and $|\psi(p)|^2$ are recovered   & $[K_\theta(x,p)]|_{p=0}$\\
& & $=[K_\theta(x,p)]|_{x=0}=1$
\\[0mm]
\noalign{\smallskip}\hline
\end{tabular}
\caption{\label{summaryofproperties} Properties of phase-space distribution functions from
Definition~\ref{def_phase_space_function}.}
\end{table}

\begin{property} \label{reality}
Postulate (ii): The  phase-space distribution function $\FF_\rho(\psc,\theta)$
is real if $\parity_\theta$ is self-adjoint.
This condition translates to the symmetry $K^*_\theta(-\psc)=K_\theta(\psc)$ of
the filter function.
\end{property}

\begin{property} \label{cohencl}
Postulate (iiia):
The trace of a trace-class operator $ \tr\,[  \rho ]$ is mapped to
the phase-space integral $\int \FF_\rho(\psc,\theta) \, \mathrm{d} \psc$
if the corresponding filter function satisfies $K_\theta(0) = 1$.
Note that this property also implies that the trace exists, i.e.~$\tr (\parity_{\theta})  = K_\theta(0)/2$, in some particular basis,
even though $\parity_{\theta}$ might not be of trace class.
\end{property}

\begin{property} \label{marginal}
Marginals: An even more restrictive
subclass of the Cohen class satisfies the marginal properties
$
\int \FF_\rho(x,p,\theta) \,\mathrm{d} x = |\psi(p)|^2 
$ and
$
\int \FF_\rho(x,p,\theta) \,\mathrm{d} p = |\psi(x)|^2
$
if and only if $[K_\theta(x,p)]|_{p=0}=1$ and $[K_\theta(x,p)]|_{x=0}=1$.
This follows, e.g., directly from Proposition 14 in Sec.~7.2.2 of \cite{bornjordan}.
\end{property}

\subsection{Relation to Quantization \label{reltoquant}}
The Weyl quantization of
a tempered distribution $\mathfrak{a}\in\mathcal S'(\mathbb R^2)$ is 
obtained from the Grossmann-Royer operator in Eq.~\eqref{defgr} (cf.~\cite[Sec.~6.3., Def.~7 and Prop.~9]{bornjordan}), i.e.~
\begin{equation}\label{eq:def_weyl_quant_general}
\textup{Op}_{\textup{Weyl}}(\mathfrak{a})=(\pi\hbar)^{-1}\mathfrak{a}(\mathcal{D}\Pi\mathcal{D}^\dagger).
\end{equation}
More precisely $\textup{Op}_{\textup{Weyl}}(\mathfrak{a}):\mathcal S(\mathbb R)\to\mathcal S'(\mathbb R)$ is the well-defined linear map
\begin{equation}\label{eq:weyl_quant_def}
[\textup{Op}_{\textup{Weyl}}(\mathfrak{a})\psi](x)=(\pi\hbar)^{-1}\mathfrak{a}[(\mathcal{D}\Pi\mathcal{D}^\dagger\psi)(x)]
\end{equation}
for all $\psi\in\mathcal S(\mathbb R)$, $x\in\mathbb R$, where the argument of $\mathfrak{a}$ is 
the Schwartz function $\psc\mapsto (\mathcal D(\psc)\Pi\mathcal D^\dagger(\psc)\psi)(x)$ on $\mathbb R^2$
\cite[Sec.~6.3., Prop.~13]{bornjordan}.
If $\mathfrak{a}$ is generated by a phase-space function $a:\mathbb R^2\to\mathbb C$, i.e.~$\mathfrak a\equiv\langle a^*,\cdot\rangle$, then
\begin{equation} \label{weylquant}
\textup{Op}_{\textup{Weyl}}(\mathfrak a)
= (\pi \hbar)^{-1} \int a(\psc) \, \mathcal{D}(\psc)  \parity  \mathcal{D}^\dagger(\psc)  \, \mathrm{d} \psc
= (2 \pi \hbar)^{-1} \int a_\sigma(\psc) \, \mathcal{D}(\psc)  \, \mathrm{d} \psc,
\end{equation}
where the symplectic Fourier transform $a_\sigma(\psc) = [\F_{\sigma} a(\cdot)](\psc)$
is used for the second equality.
Thus $\textup{Op}_{\textup{Weyl}}$ is similar to the generalized parity operator 
in the sense that it maps a function (or tempered distribution) to a linear operator which
acts on real-valued functions. This is not by chance as these two objects are very much related to each other: 
recall that quantizations associated with the Cohen class
$\textup{Op}_{\theta}(\mathfrak a)$ are essentially Weyl quantizations of
convolved phase-space functions
up to coordinate reflection:
$\textup{Op}_{\theta}(\mathfrak a) := \textup{Op}_{\textup{Weyl}}(\theta^\vee \ast a)$
where $\theta^\vee(\psi):=\theta(\psi^\vee)$ for all $\psi\in\mathcal S(\mathbb R^2)$. If $\theta$ is an admissible kernel
in the sense of Section \ref{phase_space_theory_sec}, then formally
\begin{align} \label{thetaquant1}
\textup{Op}_{\theta}(\mathfrak a) 
&= (\pi \hbar)^{-1} \int (\theta^\vee \ast a) (\psc) \, \mathcal{D}(\psc)  \parity  \mathcal{D}^\dagger(\psc)  \, \mathrm{d} \psc  \\ 
&=(2\pi \hbar)^{-1}  \int a_\sigma(\psc) K_\theta(\psc) \, \mathcal{D}(\psc)  \, \mathrm{d} \psc, \label{thetaquant}
\end{align}
cf.~\cite[Sec.~7.2.4, Prop.~17]{bornjordan}.
The symplectic Fourier transform $ [\F_{\sigma} (\theta^\vee \ast a)	](\psc) = a_\sigma(\psc) K_\theta(\psc)$
(as functionals on $\mathcal S(\mathbb R^2)$ so in particular $ \langle \theta^\vee\ast a,\cdot\rangle\in\mathcal S'(\mathbb R^2) $)
from Theorem \ref{convolutionproposition}
is used for the second equality, refer to \S 7.2.4 in \cite{bornjordan}.

\begin{proposition} \label{quantdefinition}
Let $\theta,\mathfrak{a}\in\mathcal S'(\mathbb R^2)$ be given such that $\theta$ is admissible and
the parity operator $\Pi_\theta$ from Definition~\ref{def_parity_operator} is in $\mathcal B(L^2(\mathbb R))$.
If $\mathfrak a$ is generated by a phase-space function $a:\mathbb R^2\to\mathbb C$
(i.e.~$\mathfrak a\equiv\langle a^*,\cdot\rangle$) and if $ \langle \theta^\vee\ast a,\cdot\rangle\in\mathcal S'(\mathbb R^2) $, then
\begin{equation*}
\textup{Op}_{\theta}(\mathfrak a)
= (\pi \hbar)^{-1} \int  a(\psc) \, \mathcal{D}(\psc)  \parity_{\theta}  \mathcal{D}^\dagger(\psc)  \, \mathrm{d} \psc
\end{equation*}
in analogy to \eqref{weylquant} as quadratic forms on $\mathcal S(\mathbb R)$.
\end{proposition}

\begin{proof}
	The Plancherel formula
	$\int a(\psc) b(\psc)\, \mathrm{d} \psc = \int a_\sigma(\psc) b_\sigma(-\psc)\, \mathrm{d} \psc $ implies that
	\begin{align*}
	\textup{Op}_{\theta}(\mathfrak a)
	&=(2\pi \hbar)^{-1}  \int a_\sigma(\psc) K_\theta(\psc) \, \mathcal{D}(\psc)  \, \mathrm{d} \psc \\
	&=(2\pi \hbar)^{-1}  \int  \F_\sigma [a_\sigma(\cdot)](\psc) \, 
	\F_\sigma [ K_\theta(\cdot) \, \mathcal{D}(\cdot)](-\psc)  \, \mathrm{d} \psc,
	\end{align*}
	where the equality 	$\mathcal{D}(\psc)  \parity_\theta  \mathcal{D}^\dagger(\psc)
	=\tfrac{1}{2} \F_{\sigma} [ K(\cdot) \mathcal{D}(\cdot) ](-\psc)$
	follows from \eqref{generalizedfourier} (Theorem \ref{convolutionproposition}) and
	$\F_\sigma [a_\sigma(\cdot)](\psc) = a(\psc)$ is applied.\qed
\end{proof}
This result motivates the following extension of Eq.~\eqref{eq:def_weyl_quant_general}:
\begin{definition}
The quantization $\textup{Op}_{\theta}(\mathfrak a)$ of any $\mathfrak a \in \mathcal{S}'(\mathbb{R}^2)$
is defined to be
$$
\textup{Op}_{\theta}(\mathfrak{a})=(\pi\hbar)^{-1}\mathfrak{a}(\mathcal{D}\Pi_\theta\mathcal{D}^\dagger)\
$$
in the sense of Eq.~\eqref{eq:weyl_quant_def}.
\end{definition}

One can consider single Fourier components
$e^{i (p_0 x - x_0 p)/\hbar}=:f_{\psc_0}(\psc)$, for which the Weyl quantization yields
the displacement operator $\textup{Op}_{\textup{Weyl}}( f_{\psc_0}) = \mathcal{D}(\psc_0)$
from Sec.~\ref{translations}, refer to Proposition 51 in \cite{thewignertransform} or
Proposition 11 in Sec.~6.3.2 of \cite{bornjordan}.
Let us now consider the $\theta$-type quantization of a single Fourier component,
which results in the displacement operator being multiplied by the corresponding filter function
via \eqref{thetaquant1}-\eqref{thetaquant}.
Substituting $a_\sigma(\psc)=\F_\sigma[f_{\psc_0}](\psc)$ into \eqref{thetaquant1},
one obtains
\begin{equation*}
\textup{Op}_{\theta}( f_{\psc_0}) =  K_{\theta}(\psc_0) \mathcal{D}(\psc_0)
\quad \text{and} \quad
\parity_{\theta} = (4 \pi \hbar)^{-1}\int  \textup{Op}_{\theta}( f_{\psc_0})  \, \mathrm{d}\psc_0.
\end{equation*}
The second equality
follows from \eqref{genparitydef} and it
 specifies the parity operator as a phase-space average of quantizations
of single Fourier components.

But it is even more instructive to consider the case of the delta distribution
$\delta^{(2)}
$, the Weyl quantization of which yields the
Grossmann-Royer parity operator $\textup{Op}_{\textup{Weyl}}( \delta^{(2)} ) =  \pi \hbar\, \parity$
(as obtained in \cite{Grossmann1976}).
Applying \eqref{thetaquant1}, the Cohen quantization
of the delta distribution yields  the parity
operator from \eqref{genparitydef}. In particular, the operator $\parity_{\theta}$ from
Definition~\ref{def_parity_operator}
is a $\theta$-type quantization
of the delta distribution  as
\begin{equation} \label{quantizationofdelta}
\parity_{\theta}
= 
( \pi \hbar)^{-1} \, \textup{Op}_{\theta}( \delta^{(2)})
=
(\pi \hbar)^{-1} \, \textup{Op}_{\textup{Weyl}}( \theta^\vee ),
\end{equation}
or equivalently, the Weyl quantization of the Cohen kernel, up to
coordinate reflection.
Since $\parity_{\theta}$ is the Weyl quantization of the tempered distribution
$\theta^\vee
\in \mathcal{S}'(\mathbb{R}^2)$, one can adapt results
contained in \cite{daubechies1980distributions} to precisely state conditions
on $\theta$, for which bounded operators $\parity_{\theta}$ 
are obtained via their Weyl quantizations, refer also to Property~\ref{bound}.
For example, square-integrable $\theta \in L^2(\mathbb{R}^2)$ result in Hilbert-Schmidt operators $\parity_{\theta}$,
absolutely integrable $\theta\in L^1(\mathbb{R}^2)$ result in compact operators $\parity_{\theta}$,
and Schwartz functions $\theta\in \mathcal{S}(\mathbb{R}^2)$ result in trace-class operators $\parity_{\theta}$,
refer to \cite{daubechies1980distributions}.

We consider now a class of explicit quantization schemes along the lines of
\cite{Cahill68,cahill1969,Agarwal68,bornjordan} and they are motivated by
different $(\tau,s)$-orderings
of non-commuting operators $\hat{x}$ and $\hat{p}$ or
$\ahat$ and $\ahatdagg$
($-1 \leq s \leq 1$ and $0 \leq \tau \leq 1$).
This class is obtained via the $(\tau,s)$-parametrized
filter function (where 
the relation $\alpha= (\lambda x + i p/\lambda)/\sqrt{2 \hbar}$ 
from Section~\ref{translations} is used)
\begin{equation}
\label{generalfilterfunction}
K_{\tau s}(\psc) := \exp{[ \tfrac{2\tau-1}{4} (\alpha^2 {-} (\alpha^*)^2) +  \tfrac{s}{2} |\alpha|^2]}
=\exp{[ \tfrac{i(2\tau-1) px}{2 \hbar}  +  \tfrac{s( \lambda^2 x^2 + \lambda^{-2} p^2)}{4 \hbar} ]},
\end{equation}
which admits the symmetries $K_{\tau s}(\psc) = K_{\tau s}(-\psc)$ and $ K_{\tau s}^*(x,p)=K_{\tau s}(x,-p) $.
The corresponding $(\tau,s)$-parametrized quantizations of a single Fourier component
are given by the operators $\textup{Op}_{\tau{}s}( f_{\psc_0}) :=  K_{\tau{}s}(\psc) \mathcal{D}(\psc)$,
which are central in ordered expansions
into non-commuting operators.
Also note that for $s \leq 0$, the resulting parity operators are bounded as is readily verified;
hence the corresponding distribution functions
are in the Cohen class with $K_{\tau s}(\psc) \in  \mathcal{S}'(\mathbb{R}^2)$
due to Theorem~\ref{convolutionproposition}.
Important, well-known special
cases are summarized in Table~\ref{quanttable},
refer also to \cite{cahill1969,agarwal1970,Agarwal68,bornjordan}.

\begin{table}
	\centering
	\begin{tabular}{lccc@{\hspace{1mm}}r}
		Ordering& $(\tau,s)$ & $K_{\tau{}s}(\psc_0)$ &  $\textup{Op}_{\tau{}s}( f_{\psc_0})$ \\[1mm]
		\hline \\[-3mm]
		Normal     &   $(\tfrac{1}{2},1)$    &  $e^{|\alpha_0|^2/2}$      &   $e^{\alpha_0 \ahatdagg } e^{-\alpha_0^* \ahat }$    \\[3mm]
		Antinormal &   $(\tfrac{1}{2},-1)$   &  $e^{-|\alpha_0|^2/2}$      & $e^{-\alpha_0^* \ahat }  e^{\alpha_0 \ahatdagg }$   \\[3mm]
		Weyl       &   $(\tfrac{1}{2},0)$    &   1     &  
		$e^{\alpha_0 \ahatdagg -\alpha_0^* \ahat} =e^{\tfrac{i}{\hbar} p_0 \hat{x} - \tfrac{i}{\hbar} x_0 \hat{p}} $ \\[3mm]
		Standard   &   $(1,0)$    &  $e^{\tfrac{i}{ 2 \hbar} p_0 x_0}$      
		&  $e^{\tfrac{i}{\hbar}  p_0 \hat{x}} e^{- \tfrac{i}{\hbar} x_0 \hat{p}} $      \\[3mm]
		Antistandard &   $(0,0)$    & $e^{-\tfrac{i}{2 \hbar} p_0 x_0 }$       
		&   $ e^{- \tfrac{i}{\hbar} x_0 \hat{p}} e^{\tfrac{i}{\hbar} p_0 \hat{x}}$  	\\[3mm]
		Born-Jordan &   $\int_{0}^{1}(\tau,0)\,\mathrm{d}\tau$    &  $\mathrm{sinc}(p_0 x_0/2)$      
		&  refer to Sec.~\ref{bornjordansection}    \\[2mm]
		\hline	
	\end{tabular}
	\caption{\label{quanttable} 
	Common operator orderings, their defining filter functions $K_{\tau s}(\psc_0)$, and the corresponding
	single Fourier component quantizations $\textup{Op}_{\tau{}s}( f_{\psc_0})$ as displacement operators with 
	$e^{i (p_0 x - x_0 p)/\hbar}=:f_{\psc_0}(\psc)$, refer to, e.g., \cite{cahill1969,Royer77,agarwal1970,Agarwal68,bornjordan}.
	Coordinates $\psc_0 \simeq \alpha_0 \simeq (x_0,p_0)$ with subindex $0$ are used for clarity.
	}
\end{table}

\subsection{Explicit Form of Parity Operators\label{explicitparitysection}}
Expectation values of displaced parity operators 
\begin{equation} \label{tausparity} 
\parity_{\tau s} = 
(4 \pi \hbar)^{-1}\int K_{\tau s}(\psc) \mathcal{D}(\psc) \, \mathrm{d}\psc
=
(4 \pi \hbar)^{-1}\int  \textup{Op}_{\tau s}( f_{\psc_0})  \, \mathrm{d}\psc
\end{equation}
are obtained via 
the kernel function in \eqref{generalfilterfunction}
and recover well-known phase-space distribution functions\footnote{
This family of phase-space representations
is related to the one considered in \cite{agarwal1970,Agarwal68}
by setting $\lambda=s/2$ and $\mu = -\nu={2\tau-1}/{4}$.}
for particular cases of $\tau$ or $s$, which are motivated by the 
ordering schemes $\textup{Op}_{\tau s}( f_{\psc_0})$ from Table~\ref{quanttable}.
Important special cases
of these distribution functions and their corresponding filter functions
and Cohen kernels are summarized in Table~\ref{filterfunctionstable}.

In particular, the parameters $\tau=1/2$ and $s=0$ identify the Wigner function
with $K_{1/2, 0}(\psc) \equiv 1$
and \eqref{tausparity} reduces to \eqref{paritydef}.
Note that the corresponding Cohen kernel $\theta$
from Theorem \ref{convolutionproposition} is the $2$-dimensional delta distribution
$\delta^{(2)}(\psc)$ and that convolving with 
$\delta^{(2)}(\psc)$
is the identity operation, i.e.~$\delta^{(2)} \ast W_\rho = W_\rho$ [see \eqref{convolutioneq}].

\begin{table}
	\centering
	\begin{tabular}{lccc}
		\hline\noalign{\smallskip}
		Name & $(\tau, s)$ & $K_{\tau s}(\psc)$  & $\theta_{\tau s}(\psc)$ \\
		\noalign{\smallskip}\hline\noalign{\smallskip}
		Wigner function &  $(1/2,0)$ & 1 & $\delta^{(2)}(\psc)$ \\[2mm]
		$s$-parametrized&  $(1/2,s)$ & $\exp{[   \tfrac{s}{2} |\alpha|^2]}$ &
		$- \tfrac{1}{\pi s} \exp{[   \tfrac{2}{s} |\alpha|^2]}$ \\[2mm]
		Husimi Q function&  $(1/2,-1)$ & $\exp{[  - \tfrac{1}{2} |\alpha|^2]}$ &
		$ \tfrac{1}{\pi} \exp{[- 2 |\alpha|^2]}$ \\[2mm]
		Glauber P function&  $(1/2,1)$ & $\exp{[  \tfrac{1}{2} |\alpha|^2]}$ &
		$ - \tfrac{1}{\pi} \exp{[ 2 |\alpha|^2]}$ \\[2mm]
		Shubin's $\tau$-distribution&  $(\tau,0)$ & $\exp{[ \tfrac{i}{\hbar} \tfrac{2\tau-1}{2} px ]}$ &
		$\tfrac{1}{\hbar  \pi|2 \tau -1|}\exp{[  \tfrac{2i}{\hbar (2\tau-1)} px ]}$ \\[2mm]
		Born-Jordan distribution&  $ \int_0^1 (\tau,0) \, \mathrm{d}\tau$ &
		$\mathrm{sinc}[px/(2\hbar)]$ & $\F_\sigma\{\mathrm{sinc}[px/(2\hbar)]\}/(2\pi \hbar)$
		\\[2mm]
		\noalign{\smallskip}\hline
	\end{tabular}
	\caption{\label{filterfunctionstable} Well-known phase-space distribution functions
		and their corresponding Cohen kernels recovered for particular values of $\tau$ or $s$
		via expectation values of displaced parity operators from \eqref{tausparity}.
	}
\end{table}

The filter function $K_{\tau s}$ from \eqref{generalfilterfunction} 
results for a fixed parameter of $\tau=1/2$
in the Gaussian $K_{s}(\psc) := K_{1/2, s}(\psc) = \exp{[   \tfrac{s}{2} |\alpha|^2]}$.
The corresponding parity operators are diagonal in the number-state representation
(refer to Property~\ref{rotationalcovariance}),	
and they can be specified for $-1 \leq s < 1$ in terms of number-state projectors 
\cite{cahill1969,moya1993,Royer77} as
\begin{equation} \label{sparametrizedparityform} 
\parity_s := \parity_{1/2,s} =
(4 \pi \hbar)^{-1} \int e^{s |\alpha|^2/2} \, \mathcal{D}(\psc) \, \mathrm{d}\psc = 
\sum_{n=0}^\infty (-1)^n \frac{(1{+}s)^n}{(1{-}s)^{n+1}} | n \rangle \langle n | ,
\end{equation}
where the second equality specifies $\parity_s$ in the form of a spectral decomposition.
This form has implicitly appeared in, e.g., \cite{cahill1969,moya1993,Royer77}.
We provide a more compact proof in Appendix~\ref{proofofsparametrized}.
Equation \eqref{sparametrizedparityform} readily implies
$\|\parity_s\|_\indsupnorm=(1{-}s)^{-1}$ for $s\leq 0$, and for $s<0$ one even finds that $\parity_s$ are trace-class operators due to
\begin{align*}
\|\parity_s\|_1=\sum_{n=0}^\infty\frac{(1{+}s)^n}{(1{-}s)^{n+1}}=\frac{1}{(1{-}s)-(1{+}s)}=(2|s|)^{-1}.
\end{align*}
Note that for $s>0$
the corresponding filter functions lie outside of our framework as then $K_s\not\in\mathcal S'(\mathbb R^2)$ 
due to its superexponential growth.
While one can still formally write down their distribution functions, one runs into convergence problems
resulting in singularities. However, their symplectic Fourier transform
always exists and it is related to the Wigner function via
$K_s(\psc) \F_\sigma [W_\rho](\psc)$ by
multiplying with the filter function $K_{s}(\psc)$
(cf.~Remark \ref{rem_cohen_fourier}).
This class of $s$-parametrized phase-space representations
has gained widespread applications in quantum optics and beyond 
\cite{mandel1995,glauber2007,zachos2005,schroeck2013,Curtright-review},
and they correspond to Gaussian convolved Wigner functions 
\begin{equation*}
\FF_\rho(\psc, s) 
= \FF_{|0\rangle}(\psc, s{+}1) \ast W_\rho(\psc),
\end{equation*}
for $s<0$
such as the Husimi Q function for $s=-1$. 
Note that the Cohen kernel $\theta_{s} $ via Theorem \ref{convolutionproposition} corresponds to 
the vacuum state $ \FF_{|0\rangle}(\psc, s{+}1) $ of a quantum harmonic oscillator \cite{cahill1969}.
Gaussian deconvolutions of the Wigner function are formally obtained for $s>0$,
which includes the Glauber P function for $s=1$ \cite{cahill1969}.
Due to the rotational symmetry of its filter function $K_{s}(\psc)$, the $s$-parametrized distribution
functions are covariant under phase-space rotations, refer to Property~\ref{rotationalcovariance}.

Another important special case is obtained for a fixed parameter of $s=0$,
which results in Shubin's $\tau$-distribution, refer to
\cite{bornjordan,boggiatto2010time,boggiatto2010weighted,boggiatto2013hudson}.
Its filter function from \eqref{generalfilterfunction} reduces to the chirp function
$K_{\tau{}0}(\psc) =: K_{\tau}(\psc)  =\exp{[ i (2\tau{-}1) px /(2 \hbar) ]}$ while
relying on the parametrization with $x$ and $p$.
The resulting distribution functions
$\FF_{\rho}(\psc,\tau)$
are in the Cohen class due to
Theorem \ref{convolutionproposition} and 
they are square integrable following 
Property~\ref{l2property} as
the absolute value of
$  K_{\tau}(\psc)$ is bounded.
We calculate the explicit action of the corresponding parity
operator $\parity_{\tau}$.
\begin{theorem} \label{tauparitycoordinate}
	The action of the $\tau$-parametrized parity operator $\parity_\tau :=\parity_{\tau{}0}$ 
	on some coordinate
	representation $\psi(x) \in L^2(\mathbb{R})$ is explicitly given for any $\tau\neq 1$ by
	\begin{equation} \label{tauparity}
	\parity_\tau \psi(x) =
	\frac{ 1}{2 |\tau{-}1|}   \psi(\tfrac{\tau x }{ \tau {-} 1}),
	\end{equation}
	which for the special case $\tau = 1/2$ reduces (as expected) to the usual parity operator $\parity$.
	It follows that $\parity_\tau$ is bounded for every $0 < \tau < 1$
	(or in general for every real $\tau$ that is not equal to $0$ or $1$)
	and its operator norm is given by $\| \parity_\tau \|_\indsupnorm = 1/\sqrt{4 (\tau {-} \tau ^2)}$. 
\end{theorem}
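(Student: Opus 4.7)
The plan is to prove the formula for $\parity_\tau\psi(x)$ by directly evaluating the defining integral
\[
\parity_\tau\psi(x) = (4\pi\hbar)^{-1}\int K_\tau(x_0,p_0)\,\mathcal{D}(x_0,p_0)\psi(x)\,dx_0\,dp_0
\]
using the explicit chirp kernel $K_\tau(x_0,p_0)=\exp[i(2\tau{-}1)p_0x_0/(2\hbar)]$ and the coordinate-representation action of the displacement operator from \eqref{dispopdefxp}, namely $\mathcal{D}(x_0,p_0)\psi(x)=\exp[\frac{i}{\hbar}(p_0x-\tfrac12 p_0 x_0)]\psi(x-x_0)$. First I would work on a dense subspace (Schwartz space) to make all manipulations rigorous, and extend to $L^2$ afterwards by continuity once the operator norm bound is established.

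The key computation is to combine the two exponentials into a single exponential whose phase in $p_0$ is linear:
\[
\frac{i(2\tau-1)p_0 x_0}{2\hbar} + \frac{i}{\hbar}\bigl(p_0 x - \tfrac12 p_0 x_0\bigr) = \frac{ip_0}{\hbar}\bigl[x + (\tau-1)x_0\bigr].
\]
After this rewriting, the $p_0$-integration (interpreted as a distributional Fourier transform, which is legitimate on Schwartz data by Fubini) produces $2\pi\hbar\,\delta(x+(\tau-1)x_0)$. For $\tau\neq 1$, the rescaling property of the delta distribution $\delta(x+(\tau-1)x_0) = |\tau-1|^{-1}\delta(x_0 - x/(1-\tau))$ then collapses the remaining $x_0$-integral to the single value $x_0 = x/(1-\tau)$, yielding
\[
\parity_\tau\psi(x) = \frac{1}{2|\tau-1|}\,\psi\!\left(x-\frac{x}{1-\tau}\right) = \frac{1}{2|\tau-1|}\,\psi\!\left(\frac{\tau x}{\tau-1}\right),
\]
which is the claimed identity. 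The sanity check at $\tau=1/2$ gives $\frac{1}{2\cdot 1/2}\psi(-x)=\psi(-x)=\parity\psi(x)$, as required.

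For the operator norm, I would simply perform the substitution $y = \tau x/(\tau-1)$ in the $L^2$-integral: the Jacobian is $|\tau-1|/|\tau|$, so
\[
\|\parity_\tau\psi\|_{L^2}^2 = \frac{1}{4(\tau-1)^2}\cdot\frac{|\tau-1|}{|\tau|}\,\|\psi\|_{L^2}^2 = \frac{1}{4|\tau-\tau^2|}\,\|\psi\|_{L^2}^2,
\]
giving $\|\parity_\tau\|_{\sup} = 1/\sqrt{4(\tau-\tau^2)}$ for every real $\tau\notin\{0,1\}$ (and in particular for $0<\tau<1$). Since this substitution is a $C^1$-diffeomorphism of $\mathbb R$, the map $\psi\mapsto\parity_\tau\psi$ extends by density from $\mathcal S(\mathbb R)$ to a bounded operator on $L^2(\mathbb R)$ with exactly this norm, so the sup is in fact attained.

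The only subtle point, and really the main obstacle, is justifying the interchange of the two integrations that turns the $p_0$-integral into a delta distribution: the integrand is merely oscillatory in $(x_0,p_0)$ and not absolutely integrable, so Fubini does not apply naively. I would handle this either by (i) working on Schwartz functions with a Gaussian regularizer $e^{-\epsilon p_0^2}$ and passing to the limit $\epsilon\downarrow 0$, or (ii) recognizing the double integral as the symplectic Fourier transform of the product $K_\tau\cdot\mathcal D$ evaluated at $\Omega=0$ in the sense of tempered distributions, as in \eqref{defgr} and \eqref{generalizedfourier} of Theorem~\ref{convolutionproposition}; either route makes the step rigorous. Once this is done, everything else is essentially a change of variables.
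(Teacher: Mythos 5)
Your proposal is correct and follows essentially the same route as the paper's own proof: combine the chirp filter phase with the displacement phase to obtain a linear phase in $p_0$, evaluate the $p_0$-integral as $2\pi\hbar$ times a delta distribution, collapse the $x_0$-integral by the rescaling property of $\delta$, and then obtain the operator norm by a change of variables in the $L^2$-integral. Your added remarks on justifying the distributional interchange (Schwartz-space density plus regularization, or the symplectic-Fourier-transform reading) only make explicit what the paper leaves implicit.
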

\begin{proof}
By \eqref{tausparity}, the parity operator $\parity_{\tau}$ acts on the coordinate representation $\psi(x)$ via
	\begin{equation*} 
	\parity_{\tau} \psi(x) =
	(4 \pi \hbar)^{-1}  \int  K_{\tau{}0}(\psc) \,   \mathcal{D}(\psc)  \psi(x) \, \mathrm{d}\psc.
	\end{equation*}
	This integral can be evaluated using the explicit form of $K_{\tau{}0}(\psc) $
	form \eqref{generalfilterfunction} and the action of $\mathcal{D}$ on coordinate representations $\psi(x)$
	from \eqref{dispopdefxp} yields
	\begin{align*}
	\int  K_{\tau{}0}(\psc) \,   \mathcal{D}(\psc)  \psi(x) \, \mathrm{d}\psc
	& =
	\int e^{\tfrac{i}{2 \hbar} (2\tau-1) p_0 x_0 } e^{\tfrac{i}{\hbar}(p_{0}x-\tfrac{1}{2}p_{0}x_{0})}
	\psi(x{-}x_{0}) \, \mathrm{d}x_0 \, \mathrm{d}p_0 
	\\
	& =
	\int\left[
	\int e^{\tfrac{i}{\hbar} p_{0}(x+(\tau-1)x_{0})}\, \mathrm{d}p_0 
	\right]  \psi(x{-}x_{0}) \, \mathrm{d}x_0 \\
	& =
	\frac{1}{|\tau{-}1|} \int\left[
	\int e^{\tfrac{i}{\hbar} p_{0}y}\, \mathrm{d}p_0 
	\right]  \psi(\tfrac{\tau x {-} y}{ \tau {-} 1}) \, \mathrm{d}y,
	\end{align*}
	where the change of variables $y=x+(\tau{-}1)x_{0}$ with $x_{0}=(y{-}x)/(\tau{-}1)$ and $\mathrm{d}y=
	|\tau{-}1|\,\mathrm{d}x_0 $
	was used.	 Therefore, the right-hand side is 
	\begin{equation*}
	\frac{ 2\pi\hbar}{|\tau{-}1|} \int\delta(y)\, \psi(\tfrac{\tau x {-} y}{ \tau {-} 1}) \, \mathrm{d}y
	=
	\frac{ 2\pi\hbar}{|\tau{-}1|} \psi(\tfrac{\tau x }{ \tau {-} 1}).
	\end{equation*}
	Now let $\tau\in(0,1)$.
	The operator norm
	is calculated via
	$$\|\parity_\tau \|_{\indsupnorm} = \mathrm{sup}_{\| \phi(x) \|_{\indLnorm}=1} \| \parity_\tau \phi(x) \|_{\indLnorm}.$$
	For an arbitrary square-integrable  $\phi(x)$ with $L^2$ norm $\| \phi(x) \|_{\indLnorm}=1$ one obtains 
	\begin{align*}
	\| \parity_\tau \phi \|^2_{\indLnorm}
	&= 
	\langle \parity_\tau \phi |  \parity_\tau \phi \rangle
	= 
	(2 |\tau{-}1|)^{-2} \int_{\mathbb{R}}   \phi^*(\tfrac{\tau x }{ \tau - 1}) 	   \phi(\tfrac{\tau x }{ \tau - 1})    \, \mathrm{d}x\\
	&=(2 |\tau{-}1|)^{-2} \tfrac{ | \tau - 1 |}{| \tau |} \|\phi(x) \|_{\indLnorm}
	=\tfrac{ 1}{ 4 |\tau-1| | \tau |} =
	 \tfrac{1 }{ 4 (\tau - \tau ^2)}	
	\end{align*}
	by applying a change of variables, which results in $\|\parity_\tau \|_{\indsupnorm} = [ 4 (\tau {-} \tau ^2)]^{-1/2}$.\qed
\end{proof}
This parity operator is bounded for every $0 < \tau < 1$,
and its expectation value gives rise to well-defined distribution functions (Property~\ref{bound}), 
which are also integrable as $K_{\tau}(0)=1$ (Property~\ref{cohencl}). 
Note that this family of distribution functions $\FF_\rho(\psc,\tau,0)$ for $\tau \neq 1/2$ 
does not satisfy 
Property~\ref{reality}, i.e.~self-adjoint operators $\rho$ are mapped to complex functions.
In particular, the symmetry $K^*_{\tau}(\psc) = K_{1-\tau}(\psc)$ implies that
\begin{equation*}
\FF_{\rho^\dagger}(\psc,\tau) = \FF^*_\rho(\psc, 1{-} \tau) .
\end{equation*}
In the following, we will rely on this $\tau$-parametrized family
to construct and analyze the parity operator of the Born-Jordan distribution.

\section{The Born-Jordan Distribution\label{bornjordansection}}

\subsection{Parity-Operator Description of the Born-Jordan Distribution}
The Born-Jordan distribution $\FF_\rho(\psc, \textup{BJ}) $
is an element of the Cohen class \cite{cohen1966generalized,bornjordan,boggiatto2010time}
and is obtained
by averaging over the $\tau$-distributions
$\FF_{\rho}(\psc,\tau) \in  L^2(\mathbb{R}^2)$:
\begin{equation}
\label{bjaveragedefinition}
\FF_\rho(\psc, \textup{BJ}) := \int_0^1 \FF_\rho(\psc,\tau) \, \mathrm{d}\tau.
\end{equation}
As in Definition~\ref{def_phase_space_function}, this distribution function is
also obtained via the expectation value of a parity operator.
\begin{theorem} \label{BJdistributionTheorem}
	The Born-Jordan distribution $\FF_\rho(\psc, \textup{BJ})$
	of a density operator $\rho\in\mathcal B^1(\mathcal H)$
	is an element of the Cohen class, and it
	is obtained as the expectation value
	\begin{equation}
	\label{BJdefinition}
	\FF_\rho(\psc,\textup{BJ})
	= (\pi \hbar)^{-1} \, \tr\,[ \, \rho \, \mathcal{D}(\psc)  \parity_{\textup{BJ}}  \mathcal{D}^\dagger(\psc) ]
	\end{equation}
	of the (displaced) parity operator $\parity_{\textup{BJ}}$ that is 
	defined by the relation
	\begin{equation} \label{BJparityDef}
	\parity_{\textup{BJ}} := (4 \pi \hbar)^{-1} \int K_{\textup{BJ}}(\psc) \mathcal{D}(\psc) \, \mathrm{d}\psc,
	\end{equation}
	where $K_{\textup{BJ}}(\psc) = \mathrm{sinc}(a)=\sin(a)/a$ is the cardinal sine function
	with the argument $a=(2\hbar)^{-1} \, px	= i [(\alpha^*)^2{-}\alpha^2]/4$.
	Here one applies the substitution $\alpha= (\lambda x + i p/\lambda)/\sqrt{2 \hbar}$
	from Sec.~\ref{translations} and the expression for $a$ is independent of $\lambda$.
\end{theorem}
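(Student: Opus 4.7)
The plan is to reduce the theorem to a direct computation of the $\tau$-average of the Shubin filter function $K_{\tau}(\psc)=\exp[i(2\tau{-}1)px/(2\hbar)]$ on the integrand side. Starting from the defining equation \eqref{bjaveragedefinition}, I would substitute in the parity-operator expression for each $F_\rho(\psc,\tau)$ that is provided by Definition~\ref{defofdistributionfunctions} together with Theorem~\ref{tauparitycoordinate} (i.e.\ $F_\rho(\psc,\tau)=(\pi\hbar)^{-1}\tr[\rho\,\mathcal{D}(\psc)\parity_\tau\mathcal{D}^\dagger(\psc)]$ with $\parity_\tau=(4\pi\hbar)^{-1}\int K_\tau(\psc')\mathcal{D}(\psc')\,\mathrm{d}\psc'$). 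The goal is then to swap the $\tau$-integration with the trace and the phase-space integration, so that the $\tau$-average acts only on the filter function $K_\tau(\psc')$, producing a new filter function $K_{\textup{BJ}}(\psc')=\int_0^1 K_\tau(\psc')\,\mathrm{d}\tau$.

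The core elementary computation is then
\[
\int_0^1 e^{i(2\tau-1)a}\,\mathrm{d}\tau
= e^{-ia}\,\frac{e^{2ia}-1}{2ia}
= \frac{\sin(a)}{a}=\mathrm{sinc}(a),
\]
with $a=px/(2\hbar)$, which gives exactly the claimed $K_{\textup{BJ}}$. Substituting $\alpha=(\lambda x+ip/\lambda)/\sqrt{2\hbar}$ one immediately verifies that $a=i[(\alpha^*)^2-\alpha^2]/4$ and that this expression is independent of $\lambda$, yielding the stated equivalent form of the argument.

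Once $K_{\textup{BJ}}$ is identified, the expectation-value form \eqref{BJdefinition} and the parity-operator formula \eqref{BJparityDef} follow by packaging the calculation back into the definition scheme of Definition~\ref{defofdistributionfunctions}. Membership in the Cohen class then follows from Theorem~\ref{convolutionproposition}: since $\mathrm{sinc}$ is bounded and continuous, it defines a tempered distribution, so its symplectic Fourier transform $\theta_{\textup{BJ}}:=(2\pi\hbar)^{-1}F_\sigma K_{\textup{BJ}}\in\mathcal{S}'(\mathbb{R}^2)$ is the Cohen kernel, and Theorem~\ref{convolutionproposition} gives the convolution representation $F_\rho(\psc,\textup{BJ})=\theta_{\textup{BJ}}\ast W_\rho$.

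The main technical obstacle is justifying the interchange of the $\tau$-integral with the trace and the displacement integral, since the latter is only a symbolic/oscillatory integral on $\mathbb{R}^2$, not an absolutely convergent one (the Shubin and Born--Jordan kernels are not in $L^1$). I would handle this by interpreting every $\parity_\tau$ weakly against Schwartz vectors via the rigged-Hilbert-space pairing from Sec.~\ref{functionfourier}, using Theorem~\ref{tauparitycoordinate} to get the pointwise formula $\parity_\tau\psi(x)=\frac{1}{2|\tau-1|}\psi(\tfrac{\tau x}{\tau-1})$, which is jointly measurable and uniformly bounded in $\tau$ on compact subsets of $(0,1)$; standard Fubini on the finite $\tau$-interval and the spectral expansion $\rho=\sum_n p_n|\psi_n\rangle\langle\psi_n|$ with $\sum p_n=1$ then legitimize the exchange, after which the elementary $\mathrm{sinc}$-computation closes the argument. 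Boundedness of the resulting $\parity_{\textup{BJ}}$ as an operator on $L^2$ is delegated to Proposition~\ref{BJparityBound}.
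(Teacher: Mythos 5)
Your proposal is correct and follows essentially the same route as the paper: substitute the $\tau$-parametrized parity operators into the defining average \eqref{bjaveragedefinition}, pull the $\tau$-integral onto the filter function, and evaluate $\int_0^1 e^{i(2\tau-1)px/(2\hbar)}\,\mathrm{d}\tau=\mathrm{sinc}[px/(2\hbar)]$. The extra care you take in justifying the interchange of integrals and in spelling out the Cohen-class membership via Theorem~\ref{convolutionproposition} goes beyond the paper's terse argument but does not change the approach.
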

\begin{proof}
	Combining Eqs.~\eqref{bjaveragedefinition} and \eqref{genpsdef},
	the Born-Jordan distribution is the expectation value
	\begin{equation*}
	\FF_\rho(\psc, \textup{BJ}) =
	(\pi \hbar)^{-1} \, \tr\,[ \, \rho \, \mathcal{D}(\psc)  
	(\int_0^1 \parity_{\tau{}0} \, \mathrm{d}\tau ) 
	\mathcal{D}^\dagger(\psc) ],
	\end{equation*}
	and the corresponding parity operator can be expanded as
	\begin{equation*} 
	\parity_{\textup{BJ}}=
	(4 \pi \hbar)^{-1} \int \left[ \int_0^1  K_{\tau{}0}(\psc) \, \mathrm{d}\tau \right]  \mathcal{D}(\psc) \, \mathrm{d}\psc.
	\end{equation*}
	Using the explicit form of $K_{\tau{}0}(\psc)$ from \eqref{generalfilterfunction},
	the evaluation of the integral 
	\begin{equation*}
	\int_0^1  K_{\tau{}0}(\psc) \, \mathrm{d}\tau
	=
	\int_0^1 \exp{[ \tfrac{i}{2 \hbar} (2\tau{-}1) px ]} \, \mathrm{d}\tau = \mathrm{sinc}[(2\hbar)^{-1} \, px]
	\end{equation*}
	over $\tau$ 
	concludes the proof. \qed
\end{proof}

This confirms that the Born-Jordan distribution
$\FF_\rho(\psc, \textup{BJ}) \in  L^2(\mathbb{R}^2)$ is square integrable
following Property~\ref{l2property}
as the absolute value of its filter function is bounded,
i.e., $| \mathrm{sinc}[(2\hbar)^{-1} \, px] \,| \leq 1$ for all $(x,p)\in\mathbb R^2$.
The filter function $K_{\textup{BJ}}$ satisfies
$K_{\textup{BJ}}(x,0) =K_{\textup{BJ}}(0,p)=1$, and the Born-Jordan distribution
therefore gives rise to the correct marginals as quantum-mechanical probabilities
(Property~\ref{marginal}).
In particular, integrating over the Born-Jordan distribution reproduces
the quantum-mechanical probability densities,
i.e.~$
\int \FF_\rho(x,p,\textup{BJ}) \,\mathrm{d} x = |\psi(p)|^2 
$
and
$
\int \FF_\rho(x,p,\textup{BJ}) \,\mathrm{d} p = |\psi(x)|^2
$.

Most importantly 
the operator $\parity_{\textup{BJ}}$ is bounded, meaning Born-Jordan
distributions are well defined and bounded for all quantum states,
refer to Property~\ref{bound}. Also, the largest (generalized) 
eigenvalue of $\parity_{\textup{BJ}}$ is exactly $\pi/2$ as shown in
Theorem \ref{BJdecomposition} below.
\begin{proposition} \label{BJparityBound}
        The Born-Jordan 
	parity operator $\parity_{\textup{BJ}}$ is bounded as 
	an upper bound of its operator norm is given by 
	$\|\parity_{\textup{BJ}}\|_\indsupnorm \leq \pi/2$.
\end{proposition}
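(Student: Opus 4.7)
The plan is to exploit the composition formula
$\parity_{\textup{BJ}} = \bigl[\tfrac{1}{4}\int_{-\infty}^{\infty}\mathrm{sech}(\xi/2)\,S(\xi)\,\mathrm{d}\xi\bigr]\parity$
from Theorem~\ref{BJparitySqueezing} and bound the operator norm by pulling the norm inside the integral. The key observations are that $\parity$ is unitary (in particular $\|\parity\|_{\indsupnorm}=1$) and that $S(\xi)$ is a one-parameter unitary group generated by the self-adjoint operator $\tfrac{1}{2}(\hat p\hat x+\hat x\hat p)$, so $\|S(\xi)\|_{\indsupnorm}=1$ for every $\xi\in\mathbb R$. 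Since $\mathrm{sech}(\xi/2)$ is integrable on $\mathbb R$, the operator-valued integral $\int \mathrm{sech}(\xi/2)S(\xi)\,\mathrm{d}\xi$ converges in the strong (or Bochner) sense on $L^2(\mathbb R)$.

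Concretely, I would first fix an arbitrary $\psi\in L^2(\mathbb R)$ with $\|\psi\|_{\indLnorm}=1$ and invoke Theorem~\ref{BJparitySqueezing} in its pointwise form, writing
\begin{equation*}
\parity_{\textup{BJ}}\psi(x)=\tfrac{1}{4}\int_{-\infty}^{\infty}\mathrm{sech}(\xi/2)\,[S(\xi)\parity\psi](x)\,\mathrm{d}\xi.
\end{equation*}
By Minkowski's integral inequality for $L^2$, I then obtain
\begin{equation*}
\|\parity_{\textup{BJ}}\psi\|_{\indLnorm}\le \tfrac{1}{4}\int_{-\infty}^{\infty}\mathrm{sech}(\xi/2)\,\|S(\xi)\parity\psi\|_{\indLnorm}\,\mathrm{d}\xi=\tfrac{1}{4}\int_{-\infty}^{\infty}\mathrm{sech}(\xi/2)\,\mathrm{d}\xi,
\end{equation*}
using unitarity of $S(\xi)$ and $\parity$ together with $\|\psi\|_{\indLnorm}=1$.

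The only remaining computation is the elementary integral
\begin{equation*}
\int_{-\infty}^{\infty}\mathrm{sech}(\xi/2)\,\mathrm{d}\xi=2\int_{-\infty}^{\infty}\mathrm{sech}(u)\,\mathrm{d}u=2\cdot\pi=2\pi,
\end{equation*}
after the substitution $u=\xi/2$; here one uses the standard antiderivative $\int \mathrm{sech}(u)\,\mathrm{d}u=2\arctan(\tanh(u/2))$ (or equivalently $2\arctan(e^u)$), evaluated between $-\infty$ and $+\infty$. Combining the two estimates yields $\|\parity_{\textup{BJ}}\|_{\indsupnorm}\le \pi/2$, which is the claim.

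I do not expect any real obstacle. The only point that requires a small amount of care is justifying Minkowski's inequality (i.e., exchanging the $L^2$ norm and the $\xi$-integral); this is routine given that $\mathrm{sech}(\xi/2)\in L^1(\mathbb R)$ and $\xi\mapsto S(\xi)\parity\psi$ is strongly continuous with constant norm $\|\psi\|_{\indLnorm}$, so the integrand is Bochner integrable. No finer spectral information about $S(\xi)$ is needed for this bound; sharper bounds (and in fact sharpness of $\pi/2$) will follow later from the generalized spectral decomposition in Theorem~\ref{BJdecomposition}.
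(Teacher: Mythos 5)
Your proposal is correct and follows essentially the same route as the paper: both arguments start from the representation $\parity_{\textup{BJ}}=\bigl[\tfrac{1}{4}\int\mathrm{sech}(\xi/2)S(\xi)\,\mathrm{d}\xi\bigr]\parity$ of Theorem~\ref{BJparitySqueezing}, use unitarity of $S(\xi)$ and $\parity$, and reduce the bound to $\int_{-\infty}^{\infty}\mathrm{sech}(\xi/2)\,\mathrm{d}\xi=2\pi$. The only cosmetic difference is that you apply Minkowski's integral inequality to the single integral, whereas the paper bounds $\|\parity_{\textup{BJ}}\phi\|^2=\langle\phi|\parity_{\textup{BJ}}^\dagger\parity_{\textup{BJ}}|\phi\rangle$ via a double integral with $|\langle\phi|S(\xi'-\xi)|\phi\rangle|\le 1$; these yield the identical estimate $\pi^2/4$ for the squared norm.
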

\begin{proof}
Using the $\parity_\tau$-representation of $\parity_{\textup{BJ}}$ we compute
\begin{align*}
\|\parity_{\textup{BJ}}\psi(x)\|_{L^2}= \big\|\int_0^1 \parity_\tau\psi(x) \, 
\mathrm{d}\tau\big\|_{L^2}&\leq \int_0^1 \|\parity_\tau\psi(x)\|_{L^2} \, \mathrm{d}\tau\\
&\leq\int_0^1 \frac{1}{\sqrt{4(\tau{-}\tau^2)}} \, \mathrm{d}\tau\,\|\psi(x)\|_{L^2}=\frac{\pi}{2}\|\psi(x)\|_{L^2}
\end{align*}
for arbitrary $\psi(x)\in L^2(\mathbb R)$ where in the second-to-last step we used Theorem \ref{tauparitycoordinate}.\qed
\end{proof}
It is well-known that the Born-Jordan distribution is related to the
Wigner function via a convolution with the Cohen kernel $\theta_{\textup{BJ}}$,
refer to \cite{bornjordan,thewignertransform}.
However, calculating this kernel, or the corresponding parity operator directly
might prove difficult.
In the following, we establish a more convenient representation of the
Born-Jordan parity operator which is an ``average'' of $\parity_\tau$
from Theorem \ref{tauparitycoordinate} via the formal integral transformation
\begin{equation}\label{parity_bj_integral_rep}
\parity_{\textup{BJ}} = \int_0^1 	\parity_\tau \, \mathrm{d}\tau,
\end{equation}
which---as in Sect.~\ref{reflections}---is interpreted 
as $\parity_{\textup{BJ}} \psi(x) = \int_0^1 	\parity_\tau  \psi(x) \, \mathrm{d}\tau$
for all $\psi(x)\in L^2(\mathbb R)$. 
Recall  that the parity operator $\parity_\tau$ is well defined and bounded for every $0 < \tau < 1$.
\begin{remark}
Note that evaluating $\parity_{\textup{BJ}}\psi(x)$ at $x=0$ for some $\psi(x)\in L^2(\mathbb R)$ with $\psi(0)\neq 0$ 
leads to a divergent integral in \eqref{parity_bj_integral_rep}. This comes from the singularity at $\tau=1$ in \eqref{tauparity}. 
However, we will later see that this is harmless as it only happens on a set of measure zero 
(so one can define $\parity_{\textup{BJ}}\psi(x)|_{x=0}$ to be $0$ or $\psi(0)$ or arbitrary) and, 
more importantly, that $\psi(x)\in L^2(\mathbb R)$ implies $\parity_{\textup{BJ}}\psi(x)\in L^2(\mathbb R)$ 
(Proposition \ref{BJparityBound}).
\end{remark}
Let us explicitly specify the squeezing operator  
\begin{equation}\label{sq_eq}
S(\xi) := \exp[ \tfrac{i \xi}{2} ( \hat{p} \hat{x} {+} \hat{x}\hat{p}) ]= \exp[  \tfrac{i \xi}{2} ( \ahat^2 {-} (\ahatdagg)^2) ]
\end{equation}
while following \cite{chruscinski2004} and Chapter 2.3 in \cite{leonhardt97}. It acts on a coordinate representation 
$\psi(x)$ via $S(\xi) \psi(x) = e^{\xi/2} \psi (e^{\xi} x)$.

\begin{theorem} \label{BJparitySqueezing}
	Let us consider the squeezing
	operator $S(\xi)$ which depends on the real squeezing parameter $\xi \in \mathds{R}$.	
	The Born-Jordan parity operator 
	\begin{equation}\label{eq:pi_bj_composition}
	\parity_{\textup{BJ}} = 
	[\,
	\tfrac{1}{4}	 \int_{-\infty}^{\infty}
	\,
	\mathrm{sech}(\xi/2)
	S(\xi) 
	\,\mathrm{d} \xi
	\,] 
	 \; \parity
	\end{equation}
	is a composition of the reflection operator $\parity$ followed by a squeezing operator
	(and the two operations commute),
	and this expression is integrated with respect to a well-behaved weight function $\mathrm{sech}(\xi/2)
	 =2/(e^{\xi/2}{+}e^{-\xi/2})$.
	Note that the function $\mathrm{sech}(\xi/2)  \in \mathcal{S}(\mathbb{R}) $
	is fast decreasing and invariant under the Fourier
	transform (e.g., as Hermite polynomials).
\end{theorem}
\begin{proof}
	The explicit action of $\parity_{\textup{BJ}}$ on a coordinate representation
	$\psi(x) \in L^2(\mathbb{R})$ is given by (see Theorem \ref{tauparitycoordinate})
	\begin{equation*}
	\parity_{\textup{BJ}} \psi(x) = \int_0^1 	\parity_\tau \psi(x) \, \mathrm{d}\tau
	= \int_0^1 \frac{ 1}{2 |\tau{-}1|}   \psi(\tfrac{\tau x }{ \tau - 1}) \, \mathrm{d}\tau.
	\end{equation*}
	Applying a change of variables $e^{\xi} = \tau/(1{-}\tau)$
	with $\xi \in \mathbb{R}$ yields the substitutions
	$\tau = 1/(1 {+} e^{-\xi})$, ${1}/(2 |\tau{-}1|)  =  (1 {+} e^\xi )/2$.
	and $\mathrm{d} \tau = e^\xi/(1{+}e^\xi)^2\,\mathrm{d} \xi $.
	One obtains
	\begin{equation*}
	\parity_{\textup{BJ}} \psi(x) = 
	\int_{-\infty}^{\infty}
	\frac{e^{\xi}}{2(1{+}e^{\xi})}
	\psi ( - e^{\xi} x)
	\,\mathrm{d} \xi.
	\end{equation*}
	Let us recognize that $\psi ( - e^{\xi} x)= e^{- \xi/2} S(\xi)\,  \parity \, \psi(x)$ is the composition of a coordinate
	reflection and a squeezing of the pure state $\psi(x)$; also the two operations commute.
	This results in the explicit action
	\begin{equation*}
	\parity_{\textup{BJ}} \psi(x) = 
	\int_{-\infty}^{\infty}
	\frac{e^{\xi/2} }{2(1{+}e^{\xi})}
	  S(\xi)\,   
	\parity \, \psi(x) \,\mathrm{d} \xi ,
	\end{equation*}
	where $ e^{\xi/2} /[ 2(1{+}e^{\xi}) ] = [ 2(e^{-\xi/2} {+}e^{\xi/2}) ]^{-1}$
	concludes the proof.\qed
\end{proof}

The expression for the parity operator in Theorem \ref{BJparitySqueezing}
is very instructive when compared to Theorem \ref{BJdistributionTheorem},
and this confirms that the parity operator $\parity_{\textup{BJ}}$
decomposes into the
usual parity operator $\parity$ followed by a geometric transformation, refer also to
Section~\ref{sec_geom_parity}. In the case of the 
Born-Jordan parity operator this geometric transformation is
an average of squeezing operators.

\begin{remark}
	The Born-Jordan distribution is covariant under squeezing, which means that the squeezed
	density operator $\rho' = S(\xi) \rho S^\dagger(\xi)$ is mapped to
	the inversely squeezed phase-space representation $\FF_{\rho'}(x,p, \textup{BJ}) = \FF_{\rho}(e^{-\xi} x, e^{\xi} p, \textup{BJ}) $.
\end{remark}
The form of $\parity_{\textup{BJ}}$ given in \eqref{eq:pi_bj_composition} allows for
an alternative proof of Proposition \ref{BJparityBound}:
\begin{remark}
Recalling
	$
			\|\parity_{\textup{BJ}}\|_\indsupnorm = \sup_{\|\phi(x)\|_{\indLnorm}=1}
		\| \parity_{\textup{BJ}} \phi(x) \|_{\indLnorm},	
	$
	the norm of the function $\parity_{\textup{BJ}} \phi(x)$ for
	any $\phi(x)$ with $\|\phi(x)\|_{\indLnorm}=1$ can be expressed as
	\begin{align*}
		\| \parity_{\textup{BJ}} \phi(x) \|_{\indLnorm}^2 & = 
		\langle \phi | \parity_{\textup{BJ}}^\dagger \parity^{\phantom{\dagger}}_{\textup{BJ}} | \phi \rangle \\
		& =\tfrac{1}{16}	 \iint
		\mathrm{sech}(\xi/2) \mathrm{sech}(\xi'/2)
		\langle \phi | S(\xi'{-}\xi)   | \phi \rangle
		\,\mathrm{d} \xi \, \mathrm{d} \xi'\\
		& \leq
		\tfrac{1}{16}	 \iint
		\mathrm{sech}(\xi/2) \mathrm{sech}(\xi'/2)
		| \langle \phi | S(\xi'{-}\xi)   | \phi \rangle|
		\,\mathrm{d} \xi \, \mathrm{d} \xi',
	\end{align*}	
	and it was used that $\parity^\dagger\parity=\parity^2 =1$ and $S^\dagger(\xi) S(\xi') =  S(\xi'{-}\xi)  $.
	Since $S(\xi'{-}\xi)   $ is unitary, one obtains that
	$|\langle \phi | S(\xi'{-}\xi) \phi \rangle| \leq 1$. Finally, 
	\begin{equation*}
	\quad\| \parity_{\textup{BJ}} \phi \|_{\indLnorm}^2
	\leq
	\tfrac{1}{16}	 \iint
	\mathrm{sech}(\xi/2) \mathrm{sech}(\xi'/2)
	\,\mathrm{d} \xi \, \mathrm{d} \xi'
	=
	\pi^2/4.
	\end{equation*}
\end{remark}
\subsection{Spectral Decomposition of the Born-Jordan Parity Operator \label{BJdecomp}}

We will now adapt results for generalized spectral decompositions,
refer to \cite{chruscinski2003,chruscinski2004,mauringeneral,vilenkin1964generalized}.
This will allow us to solve the generalized eigenvalue equation for parity operators
and to determine their spectral decompositions.

Recall the distributional pairing for smooth, well-behaved functions
$\psi(x) \in \mathcal{S}(\mathbb R)$  in Section~\ref{functionfourier}
with respect to 
tempered distributions $a\in\mathcal S'(\mathbb R)$ (such as
functions of slow growth $a(x)$).
We will use this distributional pairing to construct 
$L^2$ scalar products
of the form $ \langle a , \psi \rangle  = \int_{\mathbb R}a^*(x)\psi(x)\,\mathrm{d}x$, which 
corresponds to a rigged Hilbert space
\cite{vilenkin1964generalized,chruscinski2003} or the Gelfand triple 
$\mathcal{S}(\mathbb{R})\subset L^2 (\mathbb{R}) \subset \mathcal S'(\mathbb R)$.
This rigged Hilbert space allows us to specify the generalized spectral decomposition
of the Born-Jordan parity operator with generalized eigenvectors in $\mathcal S'(\mathbb R)$
as functions of slow growth.

It was shown in the previous section that the Born-Jordan
parity operator $\parity_{\textup{BJ}}$ is a composition
of a coordinate reflection and a squeezing operator.
We now recapitulate results on the spectral decomposition of the squeezing operator from
\cite{chruscinski2003,chruscinski2004,bollini1993shannon}, up to minor modifications.
Recall that the squeezing operator forms a unitary, strongly continuous one-parameter
group  $S(\xi) = e^{-i \xi {H}}$ with $\xi\in\mathbb R$ that is generated by the (unbounded) self-adjoint Hamiltonian 
\begin{equation*}
{H}:= - \tfrac{1}{2} (\hat{x} \hat{p} {+} \hat{p} \hat{x}) = - \tfrac{1}{2} [ \ahat^2 {-} (\ahatdagg)^2].
\end{equation*}
This Hamiltonian admits a purely continuous spectrum $E \in \mathbb{R}$,
and satisfies generalized eigenvalue equations
\begin{equation*}
\langle H \psi | \psi^E_{\pm} \rangle =  \langle \psi | H  \psi^E_{\pm} \rangle = E \langle \psi | \psi^E_{\pm} \rangle
\end{equation*}
for every $\psi \in \mathcal{S}(\mathbb{R})$, 
where the last equation is equivalent to $ H | \psi^E_{\pm} \rangle = E  | \psi^E_{\pm}\rangle$.
The Gelfand-Maurin spectral theorem \cite{mauringeneral,vilenkin1964generalized,chruscinski2003}
results in a spectral resolution of 
\begin{equation*}
H = \int_{-\infty}^{\infty} E \, 
|  \psi^E_{\pm}  \rangle \langle \psi^E_{\pm}  | 
\, \mathrm{d} E.
\end{equation*}
Here the generalized eigenvectors are specified in terms of
their coordinate representations as slowly increasing functions,
i.e.~$\psi_\pm^E(x) := \langle x | \psi^E_\pm \rangle \in {S}'(\mathbb{R})$ with
\begin{equation} \label{eigenvectors}
\psi_+^E(x)  =  
\tfrac{1}{2\sqrt{\pi}} |x|^{-(iE + \tfrac{1}{2})} \; \text{ and } \;
\psi_-^E (x) = 
\tfrac{1}{2\sqrt{\pi}} \mathrm{sgn}(x) |x|^{-(iE + \tfrac{1}{2})},
\end{equation}
refer to \cite{chruscinski2003,chruscinski2004,mauringeneral,vilenkin1964generalized} 
and Appendix~\ref{appendixspectral} for more details.
Note that $\psi_\pm^E(x)$ are generalized eigenfunctions: they are 
not square integrable, but the integral 
$\int_{\mathbb R} [\psi_\pm^E(x)]^* \phi(x)\,\mathrm{d}x$ exists as a distributional pairing for every
$ \phi \in \mathcal S(\mathbb R)$.
Also note that these generalized eigenvectors can be decomposed into the 
number-state basis with finite expansion coefficients that decrease to zero
for large $n$, refer to Appendix~\ref{appendixspectral}.
The spectral decomposition of the squeezing operator is then given by
\begin{equation*}
S(\xi)	=
  \int_{-\infty}^{\infty}  e^{-i E \xi}
[ \, | \psi^E_{+} \rangle \langle \psi^E_{+} |
+| \psi^E_{-} \rangle \langle \psi^E_{-} | \, ]
\, \mathrm{d} E,
\end{equation*}
refer to Eq.~6.12 in \cite{chruscinski2003} and Eq.~2.14 in \cite{chruscinski2004}.
 Note that these eigenvectors
are also invariant under the Fourier transform (e.g., as Hermite polynomials).

It immediately follows that the squeezing operator satisfies the generalized
eigenvalue equation
\begin{equation} \label{squeezeingeigenvalue}
S(\xi)  | \psi^E_{\pm} \rangle  =  e^{-i \xi E}  \, | \psi^E_{\pm} \rangle,
\end{equation}
which can be easily verified using the explicit action
$S(\xi) \psi_\pm^E(x) = e^{\xi/2} \psi_\pm^E(e^{\xi} x) = e^{-iE \xi} \psi_\pm^E(x)$.
One can now specify the
Born-Jordan parity operator using its spectral decomposition.
\begin{theorem} \label{BJdecomposition}
	Generalized eigenvectors $| \psi^E_{\pm} \rangle $ of the squeezing operator
	from \eqref{eigenvectors} are also generalized eigenvectors of the
	Born-Jordan parity operator which satisfy
	\begin{equation*}
	\parity_{\textup{BJ}} \, | \psi^E_{\pm} \rangle	=  \pm  \tfrac{\pi}2 \, \mathrm{sech}(\pi E)  \,  | \psi^E_{\pm} \rangle
	\end{equation*}
	for all $E\in\mathbb R$. The parity operator $\parity_{\textup{BJ}}$ therefore admits the spectral decomposition
	\begin{equation*}
	\parity_{\textup{BJ}}	=  \tfrac{\pi}2 \, \int_{-\infty}^{\infty} \mathrm{sech}(\pi E) \, [ \, 
	| \psi^E_{+} \rangle \langle \psi^E_{+} | 
	-  | \psi^E_{-} \rangle \langle \psi^E_{-} |
	\, ]
	\, \mathrm{d}E,
	\end{equation*}
	where $ \langle \psi^E_{\pm} | \parity = \pm \langle \psi^E_{\pm} |$ has been used.
\end{theorem}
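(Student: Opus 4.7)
The plan is to apply the composition formula $\parity_{\textup{BJ}} = [\tfrac{1}{4}\int_{-\infty}^{\infty} \mathrm{sech}(\xi/2)\,S(\xi)\,\mathrm{d}\xi]\,\parity$ from Theorem~\ref{BJparitySqueezing} directly to the generalized eigenvectors $|\psi^E_{\pm}\rangle$ given in \eqref{eigenvectors}. First I would observe that the coordinate representations satisfy $\psi^E_{+}(-x) = \psi^E_{+}(x)$ and $\psi^E_{-}(-x) = -\psi^E_{-}(x)$, since $|x|^{-(iE+1/2)}$ is even and $\mathrm{sgn}(x)\,|x|^{-(iE+1/2)}$ is odd. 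In the distributional sense this yields $\parity\,|\psi^E_{\pm}\rangle = \pm |\psi^E_{\pm}\rangle$, so after pulling the sign outside, only the squeezing part remains. Next, substituting the generalized eigenvalue equation $S(\xi)\,|\psi^E_{\pm}\rangle = e^{-iE\xi}\,|\psi^E_{\pm}\rangle$ from \eqref{squeezeingeigenvalue} inside the integrand reduces the action of $\parity_{\textup{BJ}}$ to multiplication by the scalar $\pm\tfrac{1}{4}\int_{-\infty}^{\infty}\mathrm{sech}(\xi/2)\,e^{-iE\xi}\,\mathrm{d}\xi$.

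The remaining step is the evaluation of this one-dimensional Fourier integral. After the change of variables $u = \xi/2$ it becomes $2\int_{-\infty}^{\infty}\mathrm{sech}(u)\,e^{-2iEu}\,\mathrm{d}u$, which by the classical identity $\int_{-\infty}^{\infty}\mathrm{sech}(u)\,e^{-i\omega u}\,\mathrm{d}u = \pi\,\mathrm{sech}(\pi\omega/2)$ (a standard residue computation closing the contour around the poles $u = i\pi(k+\tfrac{1}{2})$) equals $2\pi\,\mathrm{sech}(\pi E)$. Multiplying by the prefactor $\pm 1/4$ gives exactly $\pm\tfrac{\pi}{2}\,\mathrm{sech}(\pi E)$, which establishes the generalized eigenvalue equation stated in the theorem.

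For the spectral decomposition, I would invoke the Gelfand--Maurin completeness relation for the squeezing generator $H$ already recalled above, namely $\int_{-\infty}^{\infty}[\,|\psi^E_{+}\rangle\langle\psi^E_{+}| + |\psi^E_{-}\rangle\langle\psi^E_{-}|\,]\,\mathrm{d}E = \mathds{1}$ on the rigged Hilbert space $\mathcal{S}(\mathbb R)\subset L^2(\mathbb R)\subset \mathcal S'(\mathbb R)$. Since $\parity_{\textup{BJ}}$ is a function of $\parity$ and $S(\xi)$, it is diagonalized in this same generalized basis. Multiplying the resolution of the identity on the left by $\parity_{\textup{BJ}}$ and using the generalized eigenvalue computed in the first step yields the claimed formula, the relative minus sign on the $|\psi^E_{-}\rangle$ term being precisely the adjoint statement $\langle\psi^E_{\pm}|\parity = \pm\langle\psi^E_{\pm}|$ noted at the end of the theorem.

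The main delicate point is interpreting the $\xi$-integral in Theorem~\ref{BJparitySqueezing} when applied to the non-square-integrable vectors $|\psi^E_{\pm}\rangle \in \mathcal S'(\mathbb R)$. I would rigorize this by testing against an arbitrary $\phi\in\mathcal S(\mathbb R)$: the distributional pairing $\langle\phi,\parity_{\textup{BJ}}\psi^E_{\pm}\rangle$ is an iterated integral in $\xi$ and $x$ whose integrand is the Schwartz function $\phi$ multiplied by the slowly increasing function $S(\xi)\psi^E_{\pm}$ weighted by the rapidly decaying $\mathrm{sech}(\xi/2)$. Absolute integrability then justifies Fubini and the interchange of the integral with the distributional pairing, so the scalar identity derived above lifts to the generalized eigenvalue equation on all of $\mathcal S(\mathbb R)$. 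Boundedness of $\parity_{\textup{BJ}}$ (Proposition~\ref{BJparityBound}) and strong continuity of $\xi\mapsto S(\xi)$ make this technical interchange the only genuinely subtle step; the rest of the proof is the single Fourier transform of $\mathrm{sech}$.
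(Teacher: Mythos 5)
Your proposal is correct and follows essentially the same route as the paper's proof: apply the composition formula of Theorem~\ref{BJparitySqueezing}, use $\parity\,|\psi^E_{\pm}\rangle = \pm|\psi^E_{\pm}\rangle$ and the squeezing eigenvalue equation \eqref{squeezeingeigenvalue}, and evaluate the Fourier transform of $\mathrm{sech}(\xi/2)$ to obtain $\pm\tfrac{\pi}{2}\,\mathrm{sech}(\pi E)$. Your additional remarks on justifying the interchange of the $\xi$-integral with the distributional pairing and on invoking the Gelfand--Maurin completeness relation for the decomposition are sensible elaborations of steps the paper leaves implicit.
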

\begin{proof}
	The generalized eigenvalues can be computed via
	\begin{equation*}
	\parity_{\textup{BJ}}	 \, | \psi^E_{\pm} \rangle	=
		\tfrac{1}{4}	 \int_{-\infty}^{\infty}	\,
	\mathrm{sech}(\xi/2) 	S(\xi)  \,\mathrm{d} \xi
	\; \parity \, | \psi^E_{\pm} \rangle	,
	\end{equation*}
	where $\parity \, | \psi^E_{\pm} \rangle = \pm  | \psi^E_{\pm} \rangle$. 
	Using \eqref{squeezeingeigenvalue}, one obtains
	\begin{equation*}
	\quad\parity_{\textup{BJ}}	 \, | \psi^E_{\pm} \rangle	=
	\pm \tfrac{ 1}{4}	 \int_{-\infty}^{\infty}	\,
	\mathrm{sech}(\xi/2) 	e^{-iE \xi}  \,\mathrm{d} \xi
	\;  | \psi^E_{\pm} \rangle	=  \pm   \tfrac{\pi}2  \, \mathrm{sech}(\pi E)  \, | \psi^E_{\pm} \rangle .\tag*{\qed} 
	\end{equation*}
\end{proof}

\begin{remark}
Recall that $\parity_{\textup{BJ}}$ is a bounded (by Proposition \ref{BJparityBound}) and self-adjoint 
operator. Consequently, the usual spectral theorem in multiplication operator form \cite[Thm.~7.20]{hall2013quantum} 
yields a $\sigma$-finite measure space $(X,\mu)$, a bounded, measurable, real-valued function $h$ 
on $X$, and unitary $U:L^2(\mathbb R)\to L^2(X,\mu)$ such that
\begin{equation*}
[U\parity_{\textup{BJ}}U^{-1}(\psi)](\lambda)=h(\lambda)\psi(\lambda)
\end{equation*}
for all $\psi\in L^2(X,\mu)$ and $\lambda\in X$. While this undoubtedly 
is a nice representation, the spectral decomposition in Theorem \ref{BJdecomposition}
is more readily determined with the help of
the Gelfand-Maurin spectral theorem \cite{mauringeneral,vilenkin1964generalized,chruscinski2003}.
In particular, said theorem lets us directly work
with the generalized eigenfunctions in Eq.~\eqref{eigenvectors}, 
even though they are not square integrable.
\end{remark}

\subsection{Geometric Interpretation of Parity Operators\label{sec_geom_parity}}
While above we have comprehensively explored analytic properties of the Born-Jordan and other practically
	important parity operators, here we relate these mathematical objects to geometric transformations.
	Even the rather complex Born-Jordan parity operator admits a surprisingly
	simple decomposition into two elementary geometric transformations.
	Equation~\eqref{eq:pi_bj_composition} decomposes the  Born-Jordan parity operator
	into an ordinary reflection of the wave function's coordinate followed by a weighted
	average of squeezing operations as
	$$\parity_{\textup{BJ}}
	=[\,
	\tfrac{1}{4}	 \int_{-\infty}^{\infty}
	\,
	\mathrm{sech}(\xi/2)
	S(\xi) 
	\,\mathrm{d} \xi
	\,] 
	\, \parity.
	$$
As such, the action on any wave function $\psi(x) \in L^2(\mathbb{R})$  can be summarized as
the reflected, squeezed function $S(\xi)\parity \, \psi(x) =  e^{\xi/2} \psi (- e^{\xi} x)$
averaged over all parameters $\xi \in (-\infty, \infty)$ with respect to the rapidly
decaying weight function $\mathrm{sech}(\xi/2)$.

It is not only the Born-Jordan parity operator that admits a simple geometric 
interpretation, but it rather seems to hint at a universal property, at least in the classes of
practically important phase-space representations.
In particular, we now state that both $\parity_{\tau}$
and the pivotal parity operator $\parity_s$,
which contains the most popular variants of Wigner, Husimi and Glauber P
phase-space functions as special cases,
can be decomposed into elementary geometric transformations.

\begin{remark} \label{tausqueeze} 
	Applying the substitution $e^{\xi} := \tau/(1{-}\tau)$,
	the parity operator $\parity_\tau$ 
	from  \eqref{tauparity} can be decomposed 
	for $0 <\tau < 1$ into
	$$\parity_\tau = \cosh(\xi/2)   S(\xi)\,  \parity$$ which consists of
	a coordinate reflection and a squeezing.
\end{remark}

Consequently, the parity operator $\parity_\tau$ admits a spectral decomposition
\begin{equation*}
	\parity_\tau  =  \cosh(\xi/2)  \, \int e^{-i E \xi}
	[ \, | \psi^E_{+} \rangle \langle \psi^E_{+} |
	-| \psi^E_{-} \rangle \langle \psi^E_{-} | \, ]
	\, \mathrm{d} E,
\end{equation*}
where  $ \langle \psi^E_{\pm} | \parity = \pm \langle \psi^E_{\pm} |$ has been used.

\begin{remark}
The parity operator with $\kappa_s:=\ln[(1{+}s)/(1{-}s)]$ and $-1 <s < 1$
is the composition
$$\parity_s=(1{-}s)^{-1} e^{\kappa_s \ahatdagg \ahat}  \, \parity$$ from \eqref{sparametrizedparityform} 
of the usual coordinate reflection $\parity$ followed by a positive semi-definite operator.
In particular, $\parity_{-1}= \tfrac{1}{2} \, | 0 \rangle \langle 0 | \, \parity$.
Note that the positive semi-definite operator $e^{\kappa_s \ahatdagg \ahat}$ describes
the effective phenomenon of photon loss for $s < 0$, refer to \cite{Leonhardt93}.
Of course domain restrictions might need to be considered for $s>0$ as discussed earlier.
\end{remark}

\section{Explicit Matrix Representation of the Born-Jordan Parity Operator \label{BJmatrix}}

Recall that the $s$-parametrized parity operators $\parity_s$ are diagonal in  the 
Fock basis and their diagonal entries
can be computed using the simple expression 
in \eqref{sparametrizedparityform}. This enables the experimental reconstruction of distribution
functions from photon-count statistics \cite{deleglise2008,Lutterbach97,Bertet02,Banaszek99}
in quantum optics.
\begin{remark}
	The Born-Jordan parity operator $\parity_{\textup{BJ}}$
	is not diagonal in the number-state basis, as its filter
	function $K_{\textup{BJ}}(\psc)$ is not invariant under
	arbitrary phase-space rotations, refer to
	Property~\ref{rotationalcovariance}. The filter
	function $K_{\textup{BJ}}(\psc)$ is, however, invariant
	under $\pi/2$ rotations in phase space, and therefore
	only every fourth off-diagonal is non-zero.
\end{remark}
We now discuss the number-state representation of the
parity operator $\parity_{\textup{BJ}}$,
which provides a convenient way to calculate (or, more precisely, approximate) Born-Jordan distributions.
\begin{theorem} \label{BJParityMatrix}
	The matrix elements
	$[ \parity_{\textup{BJ}} ]_{mn} := \langle m | \parity_{\textup{BJ}}\, n \rangle$
	 of the Born-Jordan parity operator in the Fock basis
	can be calculated in the form of a finite sum
	\begin{equation} \label{BJparitymatrix}
	[ \parity_{\textup{BJ}} ]_{mn}
	=
	\sum_{k=0}^n \sum_{\substack{\ell=0 \\ \textrm{$\ell$ even}} }^{m-n}  d_{mn}^{k\ell} \;
	\mathrm{\Phi}_{(m-n-\ell)/2,\ell/2}^k \, ,
	\end{equation}
	for $m \in \{n, n+4, n+8, \dots \}$ and $[ \parity_{\textup{BJ}} ]_{mn} = [ \parity_{\textup{BJ}} ]_{nm}$
	with the coefficients
	\begin{align}
	d_{mn}^{k\ell} \label{dcoefficient}
	& :=  (-1)^{(\ell+m-n)/2} \sqrt{\frac{n!}{m!}} 2^{(2 k+m-n)/2} \binom{m}{n{-}k} \binom{m{-}n}{\ell} /k!,
	\\ \label{derivatives}
	\mathrm{\Phi}_{ab}^k &:=
	[  \partial^k_{\mu} [\partial^{a}_{\lambda}\partial^{b}_{\mu} f(\lambda,\mu)]|_{\lambda=\mu} ]|_{\mu=1}.
	\end{align}
	Here, $\mathrm{\Phi}_{ab}^k$ denotes the $a$th and $b$th partial
	derivatives of the function $f(\lambda,\mu) =  \mathrm{arcsinh}[1/\sqrt{\lambda \mu}]$
	with respect to its variables $\lambda$ and $\mu$, respectively, evaluated at $\lambda=\mu$,
	then differentiated again $k$ times and finally its variable is set to $\mu=1$.
\end{theorem}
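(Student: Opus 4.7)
The plan is to evaluate $\langle m|\parity_{\textup{BJ}}|n\rangle$ by combining the squeezing representation of Theorem~\ref{BJparitySqueezing} with the coordinate action of $S(\xi)$ on Fock states and then carefully performing the resulting one-dimensional integral.

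First, since $\parity|n\rangle = (-1)^n|n\rangle$ and $[S(\xi),\parity]=0$, the squeezing representation gives
\[
\langle m|\parity_{\textup{BJ}}|n\rangle = \tfrac{(-1)^n}{4}\int_{-\infty}^\infty \mathrm{sech}(\xi/2)\,\langle m|S(\xi)|n\rangle\,d\xi.
\]
Using $\psi_n(x) = \pi^{-1/4}(2^n n!)^{-1/2}H_n(x)e^{-x^2/2}$ together with $S(\xi)\psi(x) = e^{\xi/2}\psi(e^\xi x)$, the matrix element $\langle m|S(\xi)|n\rangle$ reduces to the Gaussian integral $\int H_m(x)H_n(e^\xi x)e^{-x^2(1+e^{2\xi})/2}\,dx$. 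Computing this via the Hermite generating function $\sum_k H_k(x)t^k/k! = e^{2tx-t^2}$ yields a finite sum, indexed by an inner variable $\ell$, of terms of the form $\tanh(\xi)^{(m+n)/2-\ell}\,\mathrm{sech}(\xi)^{\ell+1/2}$ multiplied by explicit binomial/factorial coefficients.

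Second, substitute this into the outer integral and change variables $t = \tanh(\xi/2)$, so that $\mathrm{sech}(\xi/2)\,d\xi = 2(1-t^2)^{-1/2}\,dt$, $\tanh\xi = 2t/(1+t^2)$, and $\mathrm{sech}\,\xi = (1-t^2)/(1+t^2)$. Each term in the sum then becomes an elementary integral on $[-1,1]$ involving powers of $t$, $(1-t^2)$, and $(1+t^2)^{-N-1/2}$ with $N=(m+n)/2$; by parity such integrals vanish unless $m-n\equiv 0\pmod 4$, recovering the rotational-covariance constraint of Property~\ref{rotationalcovariance} as a consistency check.

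Third, evaluate the surviving elementary integrals and identify them with the derivative evaluations $\Phi^k_{ab}$ of $f(\lambda,\mu) = \mathrm{arcsinh}(1/\sqrt{\lambda\mu})$. The appearance of $\mathrm{arcsinh}$ stems from the antiderivative $\int dt/\sqrt{1+t^2}=\mathrm{arcsinh}(t)$, which governs the lowest-order integral and (via the identity $\partial_\lambda^a\partial_\mu^b (1+\lambda\mu t^2)^{-1/2}$ expanded in suitable powers) provides the parameter dependence needed to match the higher-order integrands. Reorganizing the resulting double sum so that $\ell\in\{0,2,\ldots,m-n\}$ and $k\in\{0,\ldots,n\}$, and collecting prefactors, recognizes the explicit coefficients $d_{mn}^{k\ell}$ from \eqref{dcoefficient}.

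The main obstacle is the identification in the third step: one must match the specific pattern ``$\partial_\lambda^a\partial_\mu^b$, restrict to $\lambda=\mu$, then $\partial_\mu^k$, then set $\mu=1$'' to the family of elementary $t$-integrals. This step requires careful combinatorial bookkeeping because the diagonal restriction $\lambda=\mu$ (performed before the final $k$ $\mu$-derivatives) mixes derivatives through the chain rule, producing a nontrivial reindexing between the natural summation index from the Hermite expansion (which ranges up to $\min(m,n)$) and the indices $(k,\ell)$ in the statement of the theorem (where $\ell$ ranges up to $m-n$ and $k$ up to $n$).
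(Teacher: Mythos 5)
Your first two steps are a legitimate alternative route: starting from Theorem~\ref{BJparitySqueezing} rather than from the filter-function representation, computing $\langle m|S(\xi)|n\rangle$ via Hermite functions, and substituting $t=\tanh(\xi/2)$ is all sound (and your parity check correctly reproduces the $m-n\equiv 0\ (\mathrm{mod}\ 4)$ selection rule; the $(0,0)$ element indeed comes out as $\int_{-1}^{1}(1+t^2)^{-1/2}\,\mathrm{d}t/2=\operatorname{arcsinh}(1)$). For contrast, the paper works with the two-dimensional phase-space integral $\int \mathrm{sinc}(px/2)\,[\mathcal D(\psc)]_{mn}\,\mathrm{d}x\,\mathrm{d}p$, expands the Laguerre polynomial and the factor $[(x+ip)/\sqrt2]^{m-n}$, and represents each monomial $x^{2a}p^{2b}(x^2+p^2)^k e^{-(x^2+p^2)/4}$ as the derivative pattern $\partial_\lambda^a\partial_\mu^b$, then $\lambda=\mu$, then $\partial_\mu^k$, applied to $e^{-(\lambda x^2+\mu p^2)/4}$, so that the base integral is exactly $f(\lambda,\mu)=\operatorname{arcsinh}[(\lambda\mu)^{-1/2}]$. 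The two parameters $\lambda,\mu$ are the independent Gaussian widths in $x$ and $p$; that is where the specific structure of $\mathrm{\Phi}^k_{ab}$ comes from.

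The genuine gap is your third step, which is where the entire content of the theorem lives and which you explicitly leave as ``careful combinatorial bookkeeping.'' Worse, the mechanism you propose cannot work as stated: the family $\partial_\lambda^a\partial_\mu^b(1+\lambda\mu t^2)^{-1/2}$ evaluated at $\lambda=\mu=1$ produces $t^{2(a+b)}(1+t^2)^{-(a+b)-1/2}$ up to constants, i.e.\ it depends on $a$ and $b$ only through the sum $a+b$, and it generates no factors $(1-t^2)^{\ell}$. But $\mathrm{\Phi}^k_{ab}$ genuinely depends on $a$ and $b$ separately (e.g.\ $\mathrm{\Phi}^0_{2,0}\neq\mathrm{\Phi}^0_{1,1}$, as one sees from the recursion for the coefficients $c_j^{ab}$ in Appendix~\ref{derivativescalculation}), precisely because the diagonal restriction $\lambda=\mu$ is taken \emph{after} the asymmetric derivatives $\partial_\lambda^a\partial_\mu^b$. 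Your one-dimensional $t$-integral has no natural second parameter to reconstruct this asymmetry, so at best your route yields a different-looking finite sum for $[\parity_{\textup{BJ}}]_{mn}$, and proving the theorem would then additionally require showing that your sum coincides with \eqref{BJparitymatrix} term structure and all --- a nontrivial identity you have neither formulated nor proved. To repair the argument you would either have to introduce a genuinely two-parameter generating integral adapted to your $t$-variable, or switch to the paper's two-dimensional Gaussian representation where the $(\lambda,\mu)$ structure arises for free.
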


Refer to Appendix~\ref{BJcomputation} for a proof.
The derivatives in \eqref{derivatives} can be calculated in the form
of a finite sum
\begin{equation}\label{xi_expansion}
\mathrm{\Phi}_{ab}^k
=
\sqrt{2} (-1)^{a+b} \,  2^{-2a -2b -k}
\sum_{j=0}^{a+b+k-1} \xi_j^{a{}b{}k},
\end{equation}
where $a+b+k\geq 1$ and $\xi_j^{a{}b{}k}$ are recursively defined integers, 
refer to \eqref{eq:88b} in Appendix~\ref{derivativescalculation}.
Substituting $\ell$ for $2\ell$ in \eqref{BJparitymatrix}, the matrix elements $[ \parity_{\textup{BJ}}]_{mn}$
then depend only on these integers $\xi_j^{a{}b{}k}$ via the finite sum
\begin{equation*}
[ \parity_{\textup{BJ}}]_{mn}
=
\gamma_{mn} \sum_{k=0}^n \sum_{\ell=0}^{(m-n)/2}  [\sum_j \xi_j^{a{}\ell{}k} ] \,
(-1)^{\ell +m -n} \binom{m}{n{-}k} \binom{m{-}n}{2 \ell}/k! ,
\end{equation*}
where $\gamma_{mn}:= 2^{[-(m-n)+1]/2} \sqrt{{n!}/{m!}}  $ for  $m \in \{n, n{+}4, n{+}8, \dots \}$
and  $a = (m{-}n{-} 2\ell)/2$. 

\begin{figure}
	\centering
	\includegraphics{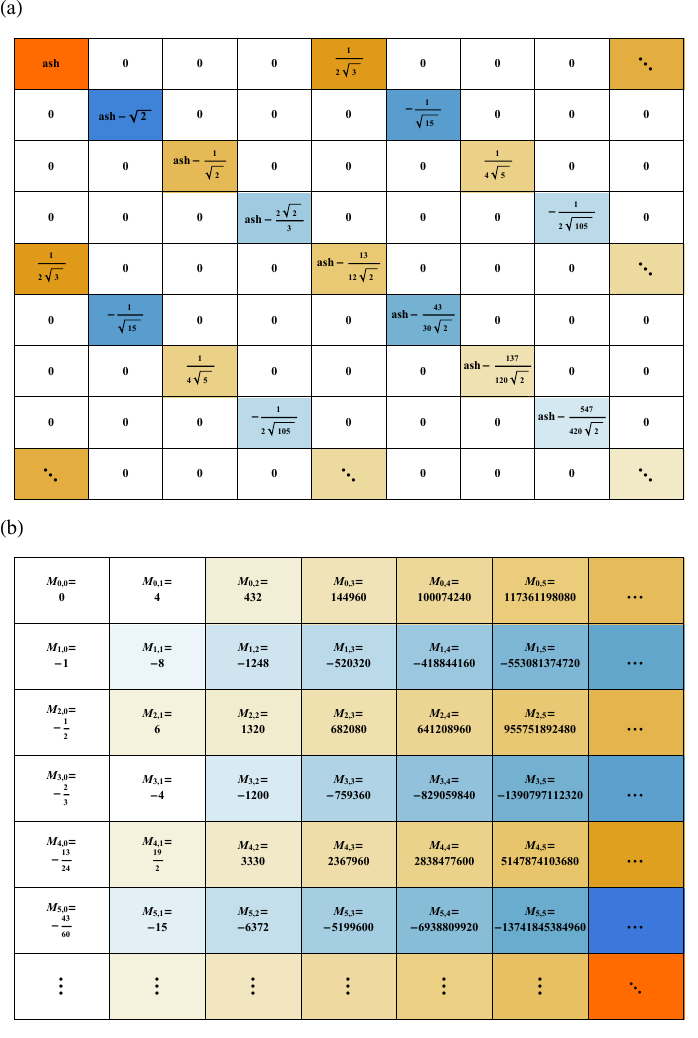}
	\caption{ (a) The first $8 \times 8$ matrix elements of the Born-Jordan parity operator
		from Eq.~\eqref{BJparitymatrix} where $\mathrm{ash}$ denotes $\mathrm{arcsinh}(1)$.
		(b) The first $6 \times 6$ elements of the recursive sequence that defines the
		Born-Jordan parity operator via Eq.~\eqref{BJoprtorationalnumbersrecursion}.
		Orange and blue colors represent positive and negative values, respectively,
		while the color intensity reflects the absolute value of the corresponding numbers.
		\label{BJmatrixPlot}}
\end{figure}

Figure~\ref{BJmatrixPlot}(a) shows the first $8 \times 8$ entries
of $[ \parity_{\textup{BJ}}]$. One observes the following structure:
only every fourth off-diagonal is non-zero, the matrix is real and symmetric,
and the entries along every diagonal and off-diagonal decrease in their absolute value.
In particular, the diagonal elements of $\parity_{\textup{BJ}}$ admit the following special property.
\begin{proposition}\label{prop_diag_ele}
	For every $n\in\lbrace 0,1,2,\ldots\rbrace$, the diagonal entries of $\parity_{\textup{BJ}}$ in the Fock basis are 
	\begin{equation}\label{eq:M_nn_finite_sum_1}
	[ \parity_{\textup{BJ}}]_{nn}=\operatorname{arcsinh}(1)-\sqrt{2}
	\sum_{k=0}^{n-1} \frac{(-1)^k}{k{+}1} \sum_{m=0}^{\lfloor \frac{k}{2}\rfloor} \binom{2m}{m}\Big(\frac{-1}{4}\Big)^{m}.
	\end{equation}
	In particular, $[ \parity_{\textup{BJ}}]_{nn}\to 0$ as $n\to\infty$. For a proof we refer to Appendix \ref{proof_prop_diag_ele}.
\end{proposition}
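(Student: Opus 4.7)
The plan is to recognise $[\parity_{\textup{BJ}}]_{nn}$ as the $n$-th Taylor coefficient of an elementary function, by combining Theorem~\ref{BJParityMatrix} with a Cauchy-integral representation and a Möbius substitution. Specialising Theorem~\ref{BJParityMatrix} to $m=n$ forces $\ell=0$ in the outer sum (since $0\le\ell\le m-n=0$); with $d_{nn}^{k,0}=2^{k}\binom{n}{k}/k!$ and $\Phi_{00}^{k}=g^{(k)}(1)$ where $g(\mu):=\operatorname{arcsinh}(1/\mu)$, this gives
\begin{equation*}
[\parity_{\textup{BJ}}]_{nn}=\sum_{k=0}^{n}\frac{2^{k}\binom{n}{k}}{k!}\,g^{(k)}(1).
\end{equation*}
Using $g^{(k)}(1)/k!=(2\pi i)^{-1}\oint g(z)(z{-}1)^{-k-1}dz$ on a circle $|z{-}1|=r<1$ (inside which $g$ is holomorphic, as its branch points $z=0,\pm i$ lie at distance $\ge 1$) and summing the binomial series in $k$, everything collapses to $(2\pi i)^{-1}\oint g(z)(z{+}1)^{n}(z{-}1)^{-n-1}dz$. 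The Möbius substitution $w=(z{-}1)/(z{+}1)$, so that $z=(1{+}w)/(1{-}w)$ and $1/z=(1{-}w)/(1{+}w)$, then turns the integrand into $\operatorname{arcsinh}((1{-}w)/(1{+}w))\,dw/[w^{n+1}(1{-}w)]$ on a small loop around $w=0$, so that $[\parity_{\textup{BJ}}]_{nn}=[w^{n}]\phi(w)$ with $\phi(w):=\operatorname{arcsinh}((1{-}w)/(1{+}w))/(1{-}w)$.

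The next step is to identify $\phi(w)$ in closed form. Differentiating and using $1+((1{-}w)/(1{+}w))^{2}=2(1{+}w^{2})/(1{+}w)^{2}$ yields $(d/dw)\operatorname{arcsinh}((1{-}w)/(1{+}w))=-\sqrt{2}/[(1{+}w)\sqrt{1{+}w^{2}}]$, so integrating from $0$ gives $\operatorname{arcsinh}((1{-}w)/(1{+}w))=\operatorname{arcsinh}(1)-\sqrt{2}\,F(w)$ with $F(w):=\int_{0}^{w}dt/[(1{+}t)\sqrt{1{+}t^{2}}]$. Expanding $(1{+}t)^{-1}=\sum_{j}(-t)^{j}$ and the Newton binomial series $(1{+}t^{2})^{-1/2}=\sum_{m}\binom{2m}{m}(-1/4)^{m}t^{2m}$, the Cauchy product reads $F'(t)=\sum_{k\ge 0}(-1)^{k}a_{k}t^{k}$ with exactly the partial sums $a_{k}=\sum_{m=0}^{\lfloor k/2\rfloor}\binom{2m}{m}(-1/4)^{m}$ of the proposition, whence $F(w)=\sum_{k\ge 0}(-1)^{k}a_{k}w^{k+1}/(k{+}1)$. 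Reading off $[w^{n}]$ in $(1{-}w)^{-1}[\operatorname{arcsinh}(1)-\sqrt{2}F(w)]$---conveniently via the telescoping identity $[w^{n}]\phi-[w^{n-1}]\phi=[w^{n}](\operatorname{arcsinh}(1)-\sqrt{2}F(w))$---reproduces precisely Eq.~\eqref{eq:M_nn_finite_sum_1}.

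For the asymptotic statement I would first evaluate $F(1)=\operatorname{arcsinh}(1)/\sqrt{2}$ via the substitution $t=\sinh\phi$ followed by the Weierstrass substitution $y=e^{\phi}$, reducing $F(1)$ to the elementary logarithm $(1/\sqrt{2})\ln(\sqrt{2}{+}1)$. The tail $\sum_{k=0}^{n-1}(-1)^{k}a_{k}/(k{+}1)$ converges to $F(1)$ either by Abel's theorem (using continuity of $F$ up to $w=1$) or, more directly, by pairing consecutive terms via $a_{2j}=a_{2j+1}$ to obtain the absolutely convergent series $\sum_{j\ge 0}a_{2j}/[(2j{+}1)(2j{+}2)]$. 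The cancellation $\operatorname{arcsinh}(1)-\sqrt{2}F(1)=0$ then forces $[\parity_{\textup{BJ}}]_{nn}\to 0$. The main technical point I anticipate is justifying the contour manipulation---one must check that the image of $|z{-}1|=r$ under the Möbius map is a simple loop around $w=0$ lying strictly inside the unit disc, which reduces to verifying that its two real intersections $r/(2{+}r)$ and $-r/(2{-}r)$ have modulus less than one, true for every $r<1$, so that the Taylor-coefficient extraction and the interchange with the binomial double series are both legitimate.
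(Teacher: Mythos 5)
Your proposal is correct and follows essentially the same route as the paper's Appendix~\ref{proof_prop_diag_ele}: both reduce the claim to the generating-function identity $\sum_n [\parity_{\textup{BJ}}]_{nn}t^n=\tfrac{1}{1-t}\operatorname{arcsinh}(\tfrac{1-t}{1+t})$, expand $\tfrac{d}{dt}\operatorname{arcsinh}(\tfrac{1-t}{1+t})=-\sqrt{2}/[(1+t)\sqrt{1+t^2}]$ as the same Cauchy product to read off the partial sums, and obtain the limit $0$ via Abel's theorem (the paper uses Dirichlet's test where you pair consecutive terms, an equivalent convergence argument). The only real difference is that you derive the generating function by a Cauchy-integral representation of $g^{(k)}(1)$ plus an explicit Möbius substitution, whereas the paper uses the generalized Leibniz rule together with the Lagrange--Bürmann formula; these are interchangeable, and your contour-image check is the correct point to verify.
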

Also note that the sum of these decreasing diagonal entries results in a
trace $\tr[\parity_{\textup{BJ}}] = 1/2$ (Property~\ref{cohencl}) in the number-state basis.
However, this trace does not necessarily exist in an arbitrary basis,
as $\parity_{\textup{BJ}}$ is not a trace-class operator.

\begin{remark}
Let us emphasize that the boundedness of $\parity_{\textup{BJ}}$ (Proposition \ref{BJparityBound})
guarantees that using a (large enough) finite block of $\parity_{\textup{BJ}}$ for computations yields a good approximation.
The reason for this is that such a block does not differ too much (in trace norm) from the full operator,
which readily transfers to the phase-space distribution function. Details can be found
at the end of Appendix \ref{app_phase_space_extension}.
\end{remark}

In the following, we specify a more convenient form for the
calculation of these matrix elements, i.e.~a direct recursion
without summation, which is based on the following conjecture 
(see Appendix~\ref{app:directrecursive}). 

\begin{conjecture} \label{BJParityMatrixRecursion}
	The non-zero matrix elements 
	\begin{equation} \label{BJoprtorationalnumbersrecursion}
	[ \parity_{\textup{BJ}} ]_{k+4\ell,k} =  [ \parity_{\textup{BJ}} ]_{k, k+4\ell} =
	\Gamma_{k\ell}\, [  M_{k\ell} + \delta_{\ell0} \, \mathrm{arcsh}(1)/\sqrt{2} ],
	\end{equation}	
	of the Born-Jordan parity operator
	are determined by a set of rational numbers $M_{k\ell}$ where
	$\Gamma_{k\ell} =  2^{-2\ell + {1}/{2}} \sqrt{{k!}/{(k{+}4\ell)!}}$
	and $k,\ell \in \{0,1,2, \dots\}$. The indexing is specified relative to
	the diagonal (where $\ell=0$) and $\delta_{\ell{}m}$ is the Kronecker delta.
	The rational numbers $ M_{k\ell} $ can be calculated
	recursively using only 8 numbers as initial conditions, refer to
	Appendix~\ref{app:directrecursive} 	for details. This form does not require a
	summation. 
\end{conjecture}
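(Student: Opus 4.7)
The plan is to combine the closed-form expression in Theorem~\ref{BJParityMatrix} with the observation that $f(\lambda,\mu)=\operatorname{arcsinh}((\lambda\mu)^{-1/2})$ depends only on the product $t:=\lambda\mu$. Setting $g(t):=\operatorname{arcsinh}(t^{-1/2})$, the chain rule rewrites each $\Phi_{ab}^k$ as a finite $\mathbb{Q}$-linear combination of values $g^{(n)}(1)$ with $0\le n\le a+b+k$. A direct calculation gives $g'(t)=-1/(2t\sqrt{t+1})$, from which
\begin{equation*}
2t(t+1)\,g''(t)+(3t+2)\,g'(t)=0.
\end{equation*}
Differentiating this ODE and specializing at $t=1$ produces a rational three-term recursion for $\sqrt{2}\,g^{(n)}(1)$ for $n\ge 1$; combined with $g'(1)=-1/(2\sqrt{2})$, it follows that every $\sqrt{2}\,g^{(n)}(1)$ with $n\ge 1$ lies in $\mathbb{Q}$, while $g(1)=\operatorname{arcsinh}(1)$ is the only transcendental datum.

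Plugging this into Theorem~\ref{BJParityMatrix} and pulling out the prefactor $\Gamma_{k\ell}$ shows that $[\parity_{\textup{BJ}}]_{k+4\ell,k}/\Gamma_{k\ell}$ is rational except for a single $\operatorname{arcsinh}(1)/\sqrt{2}$ contribution, which is produced only by the summand $k=\ell=0$ in \eqref{BJparitymatrix} and therefore survives exactly when $\ell=0$. This establishes the rationality of the $M_{k\ell}$ and accounts for the $\delta_{\ell 0}\operatorname{arcsinh}(1)/\sqrt{2}$ correction in \eqref{BJoprtorationalnumbersrecursion}; Proposition~\ref{prop_diag_ele} falls out as the special case $\ell=0$.

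To obtain the closed, fixed-order recursion governed by eight initial conditions, I would pass to the bivariate generating function $G(x,y):=\sum_{k,\ell\ge 0}M_{k\ell}\,x^ky^\ell$. Using the integral representation $\parity_{\textup{BJ}}=\int_0^1\parity_\tau\,d\tau$ from \eqref{parity_bj_integral_rep} together with the explicit coordinate action of $\parity_\tau$ in Theorem~\ref{tauparitycoordinate}, one evaluates the Fock-basis matrix elements of $\parity_\tau$ (which are known in closed form via \eqref{displacementopmatirx}-type Laguerre-polynomial generating functions) and integrates in $\tau$, yielding $G(x,y)$ as an explicit elementary expression involving $x$, $y$ and the radical inherited from $\sqrt{t+1}$ at $t=1$. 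Standard holonomic-closure arguments then translate the resulting algebraic relation into a bivariate $P$-recursion for $M_{k\ell}$ of finite rank; counting independent solutions and matching the ranks on the axes $k=0$ and $\ell=0$ will pin down the dimension as $8$.

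The main obstacle is the combinatorial bookkeeping: each $\Phi_{ab}^k$ taken in isolation satisfies only a recursion whose order grows with $a+b+k$, and the reduction to a fixed-order scheme hinges on nontrivial cancellations driven by the $\lambda\leftrightarrow\mu$ symmetry of $f$ and by the binomial identities embedded in the coefficients $d_{mn}^{k\ell}$ of \eqref{dcoefficient}. My approach would be to first verify the conjecture numerically for, say, $k+\ell\le 30$ against \eqref{BJparitymatrix} and \eqref{xi_expansion} to pin down the exact recursion coefficients, then apply Zeilberger-style creative telescoping to each summation in \eqref{BJparitymatrix} separately, and finally use the generating-function identity for $G(x,y)$ above to certify that no further initial values beyond the eight identified are needed.
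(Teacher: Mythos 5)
You should first note that the paper itself does not prove this statement: it is presented as a conjecture, with Appendix~\ref{app:directrecursive} merely recording the recursion relations that the $M_{k\ell}$ ``appear to satisfy'' and reporting numerical verification up to an $80\times 80$ matrix representation. Your opening step is sound and is a nice streamlining of what the paper already establishes by other means: since $f(\lambda,\mu)=\operatorname{arcsinh}((\lambda\mu)^{-1/2})$ depends only on $t=\lambda\mu$, the $\mathrm{\Phi}_{ab}^k$ reduce to the values $g^{(n)}(1)$, and the ODE $2t(t+1)g''+(3t+2)g'=0$ (which I checked is correct) shows $\sqrt2\,g^{(n)}(1)\in\mathbb Q$ for $n\geq1$ while $g(1)=\operatorname{arcsinh}(1)$ is the only non-rational datum. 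This recovers the rationality of the $M_{k\ell}$ and the $\delta_{\ell0}$ structure, but that part already follows from Theorem~\ref{BJParityMatrix} together with the integer recursions for $c_j^{ab}$ and $\xi_j^{abk}$ in Appendix~\ref{derivativescalculation}; it is not the conjectural content.

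The genuinely open claim is the existence of the \emph{specific fixed-order} recursion: order four in $k$ along each off-diagonal, order two in $\ell$ for the first four rows, closed by exactly eight rational initial values. On this point your proposal is a research plan rather than a proof. The generating function $G(x,y)$ is never computed, and the assertion that integrating the Fock-basis matrix elements of $\parity_\tau$ over $\tau$ yields ``an explicit elementary expression'' is exactly the computation that occupies Appendix~\ref{BJcomputation}; moreover the diagonal generating function is $(1-t)^{-1}\operatorname{arcsinh}(\tfrac{1-t}{1+t})$ by Lemma~\ref{Mnn_generation_function}, which is D-finite but not algebraic, so the promised ``algebraic relation'' will not materialize and the holonomic-closure step must be reformulated. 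Even granting D-finiteness, creative telescoping applied to the double sum in \eqref{BJparitymatrix} is only guaranteed to produce \emph{some} recurrence of \emph{some} order; that the minimal order is four in $k$ and two in $\ell$, with the particular polynomial coefficients of Appendix~\ref{app:directrecursive}, and that eight initial values suffice, is precisely what needs to be certified, and your plan defers this to numerical fitting followed by an unexecuted rank-counting argument. As it stands the proposal does not close the gap the paper leaves open, though the telescoping route you outline is a reasonable candidate for doing so.
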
 

Figure~\ref{BJmatrixPlot}(b) shows the first $6 \times 6$ elements
of the recursive sequence of rational numbers $M_{k\ell}$. The first
column of $M_{k0}$ corresponds to the diagonal of the matrix
$[\parity_{\textup{BJ}}]_{mn}$ from Figure~\ref{BJmatrixPlot}(a).
For example, for $k=5$ one obtains $M_{5,0} = -{43}/{60}$,
which corresponds to $[\parity_{\textup{BJ}}]_{5,5}=\Gamma_{5,0} 
[ M_{5,0} + \delta_{0,0} \, \mathrm{arcsh}(1)/\sqrt{2} ]$
and  $\Gamma_{5,0}= \sqrt{2}$, and therefore
$[\parity_{\textup{BJ}}]_{5,5}=-{ 43 \sqrt{2}}/{60} +\mathrm{arcsh}(1) $
as detailed in Figure~\ref{BJmatrixPlot}(a).

The direct recursion in Conjecture~\ref{BJParityMatrixRecursion} enables us to
conveniently and efficiently calculate the matrix elements
$[\parity_{\textup{BJ}}]_{mn}$
and we have verified the correctness of this approach for up to
$6400$ matrix elements,
i.e.~by calculating a matrix representation of size $80 \times 80$.
This facilitates an efficient calculation and plotting of  Born-Jordan distributions 
for harmonic oscillator systems,
such as in quantum optics \cite{mandel1995,glauber2007,leonhardt97}.
Note that a recursively calculated $80 \times 80$ matrix representation,
which we have verified with exact calculations,
is sufficient for most physical applications,
i.e.~Figures~\ref{purestates} and \ref{OffDiagonal} (below) were calculated using $30 \times 30$
matrix representations. However, a matrix representation of size $2000 \times 2000$
can be easily calculated on a current notebook computer using the recursive method.
Numerical evidence shows that the matrix
representation of $\parity_{\textup{BJ}}$ can be well-approximated by a low-rank
matrix, i.e, diagonalizing the matrix $\parity_{\textup{BJ}}$ reveals only 
very few significant eigenvalues. In particular, the sum of squares
of the first $9$ eigenvalues corresponds to approximately $99.97\%$ of the sum of squares of all the eigenvalues
of a $2000 \times 2000$ matrix representation.

\begin{figure}[t]
	\centering
	\includegraphics{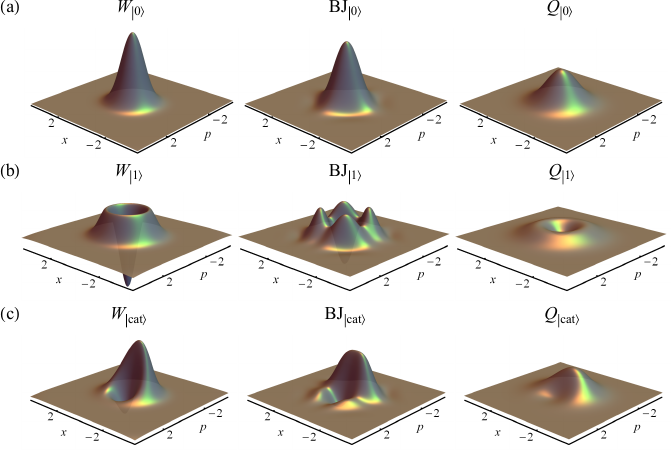}
	\caption{ \label{purestates}
	(a) and (b) Wigner, Born-Jordan, and Husimi Q phase-space plots of the number states $|0\rangle$
	and $|1\rangle$ (which are eigenstates of the quantum harmonic oscillator).
	(c) Corresponding phase-space plots of the Schrödinger cat state $|\textup{cat}\rangle = (|0\rangle {+} |1\rangle)/\sqrt{2}$
	(which is a superposition of the previous states).
	}
\end{figure}

\section{Example Quantum States\label{sec_ex}}

Matrix representations of  parity operators are used to conveniently
calculate phase-space representations for bosonic quantum states
via their associated 
Laguerre polynomial decompositions.
The $s$-parametrized distribution functions of
Fock states $|n\rangle$ are sums
\begin{equation*}
\FF_{|n\rangle} (\psc,s) = \sum_{\mu=0} |[\mathcal{D}(\psc)]_{n \mu}|^2 \, [\parity_s]_{\mu \mu}
\end{equation*}
of the associated Laguerre polynomials from \eqref{displacementopmatirx}, which are weighted by their parity operator
elements. The corresponding phase-space functions are radially symmetric as $|[\mathcal{D}(\psc)]_{n \mu}|^2$
depends only on the radial distance $x^2+p^2$.
The Wigner functions in Fig.~\ref{purestates}(a)-(b) are radially symmetric and
show strong oscillations, which are sometimes regarded as a quantum-mechanical feature \cite{leonhardt97}.

\begin{figure}
	\centering
	\includegraphics{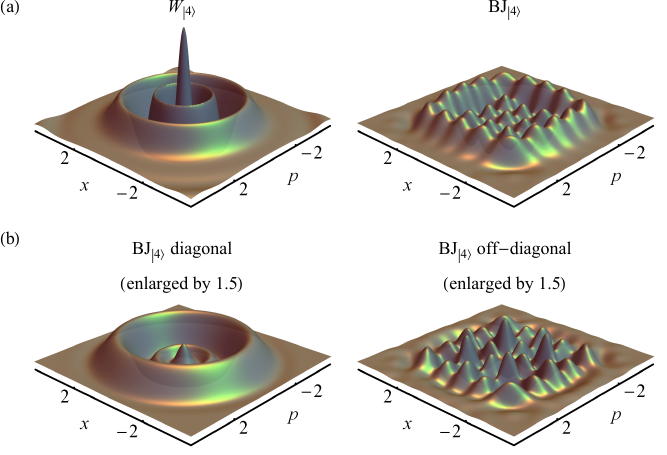}
	\caption{ \label{OffDiagonal}
				(a) Wigner and Born-Jordan phase-space plots of the number state $|4\rangle$.
				(b) The Born-Jordan distribution is decomposed into functions that correspond
				to diagonal and off-diagonal entries of the parity operator matrix. 
			}
\end{figure}

In contrast, the Born-Jordan parity operator is not diagonal
in the number state representation and
it can be written in terms of projectors as 
$
\parity_{\textup{BJ}}= \sum_{\mu=0}[ \parity_{\textup{BJ}} ]_{\mu \mu}  |\mu\rangle \langle \mu |+
\sum_{\mu=0} \sum_{\nu=1} [ \parity_{\textup{BJ}} ]_{\mu, 4\nu} \,  
(  |\mu {+} 4 \nu \rangle \langle \mu |  +  |\mu \rangle \langle \mu {+} 4 \nu | ),
$
and the Born-Jordan distribution of number states $|n\rangle$ is
given by
\begin{align}
\FF_{|n\rangle} (\psc,\textup{BJ}) 
&= \sum_{\mu=0}   |[\mathcal{D}]_{n \mu}|^2 \, [\parity_{\textup{BJ}} ]_{\mu \mu} \label{FBJI}\\
&+
\sum_{\mu=0} \sum_{\nu=1} 2 \, \Re
( [\mathcal{D}]_{n \mu} ([\mathcal{D}]_{n, \mu + 4\nu})^* ) [ \parity_{\textup{BJ}} ]_{\mu,\mu+4\nu}.
\label{FBJII}
\end{align}
The Born-Jordan distribution of coherent states, i.e.~the displaced vacuum states,
closely matches the Wigner functions, see Fig.~\ref{purestates}(a).
The first part in Eq.~\eqref{FBJI} contains the diagonal elements of the parity operator which correspond
to the radially symmetric part of $\FF_{|n\rangle} (\psc,\textup{BJ})$, see Fig.~\ref{OffDiagonal}(b) (left).
The second part in Eq.~\eqref{FBJII} results in a radially non-symmetric function,
see Fig.~\ref{OffDiagonal}(b) (right).
The radially symmetric parts are quite similar to the Wigner function
and have $n+1$ wave fronts enclosed by the Bohr-Sommerfeld band \cite{leonhardt97,dowling1991},
i.e.~the ring with radius $\sqrt{2n{+}1}$.
The radially non-symmetric functions have $n+1$ local maxima
along the outer squares,
i.e.~along phase-space cuts at the Bohr-Sommerfeld distance $x,p \propto \sqrt{2n{+}1}$.
The sum of these two contributions is the Born-Jordan distribution
and it is not radially symmetric for number states, see Fig.~\ref{OffDiagonal}(a).

\section{Conclusion \label{conclusion}}

We have introduced parity operators $\parity_\theta$
which give rise to a rich family of phase-space distribution
functions of quantum states. These phase-space functions have
been previously defined in terms of convolutions, integral transformations,
or Fourier transformations. 
Our approach using parity operators is both conceptually and
computationally advantageous and now allows for a direct
calculation of phase-space functions as quantum-mechanical expectation values.
This approach
therefore averts the necessity for the repeated and expensive computation of Fourier transformations.
We motivate the name ``parity operator'' by the fact that parity operators 
$\parity_\theta = A_\theta \circ \parity$ are composed of
the usual parity operator and some specific
geometric or physical
transformation. We detailed the explicit form of parity operators for various phase spaces
and, in particular, for the Born-Jordan distribution.
We have also obtained a generalized spectral decomposition
of the Born-Jordan parity operator, proved its boundedness,
and explicitly calculated its matrix representation in the 
number-state basis.
We conjecture that these matrix elements are determined by
a proposed recursive scheme which allows for an efficient 
computation of Born-Jordan distribution functions.
Moreover, large matrix representations of the Born-Jordan parity
operator can be well approximated using rank-9 matrices.
All this will be useful to connect
our results with applications 
in (e.g.) quantum optics, where techniques such as
squeezing operators and the number-state representation are
widely used.

\begin{acknowledgements}
F.~vom Ende thanks GH and M.~Alekseyev for providing the idea for the 
current proof of Proposition \ref{prop_diag_ele}
in a discussion on MathOverflow \cite{MO295019}.
We thank Michael Keyl and Gunther Dirr for their valuable comments.
B.~Koczor acknowledges financial support from the scholarship program of the 
Bavarian Academic Center for Central, Eastern and Southeastern Europe (BAYHOST),
funding from the EU H2020-FETFLAG-03-2018 under grant agreement No 820495 (AQTION)
and from the Glasstone Research Fellowship of the University of Oxford.
This research is funded by the Bavarian excellence network \textsc{enb}
via the International PhD Programme of Excellence
\textit{Exploring Quantum Matter} (\textsc{exqm}), the \textit{Munich Quantum Valley} of the Bavarian
State Government with funds from Hightech Agenda \textit{Bayern Plus}, and
the Deutsche Forschungsgemeinschaft (DFG, German Research Foundation) 
under Germany’s Excellence Strategy -- EXC-2111 -- 390814868.
M.~de Gosson has been financed by the Austrian Research Foundation FWF (Grant number P27773).
R.~Zeier acknowledges funding from the EU H2020-FETFLAG-03-2018 under grant agreement No 817482 (PASQuanS)
and from the European High-Performance Computing Joint Undertaking (JU) under grant agreement No 101018180. 
The JU receives support from the European Union’s Horizon 2020 research 
and innovation programme and Germany, France, Italy, Ireland, Austria, Spain.
\end{acknowledgements}

\vfill

\appendix

\section{Extension of Operators from Schwartz Functions to Distributions}\label{app_operator_extension}
Assume we have a linear operator $T:\mathcal S(\mathbb R)\to\mathcal S
(\mathbb R)$ or, more generally, $T:\mathcal S(\mathbb R)\to(\mathbb R\to
\mathbb C)$ and we want to extend its action to tempered distributions. Usually 
this is done by introducing some operator $\hat T:\mathcal S(\mathbb R)\to(
\mathbb R\to\mathbb C)$ (which can---but does not have to---be the same as $T$) 
such that
\begin{equation}\label{eq:extension_def}
(T\phi)(\psi):=\phi(\hat T\psi)
\end{equation}
for all $\phi\in D_T:=\{\phi\in\mathcal S'(\mathbb R):\phi\circ\hat T\in
\mathcal S'(\mathbb R)\}$ and all $\psi\in\mathcal S(\mathbb R)$. Usually $\hat T$ is 
chosen such that \eqref{eq:extension_def} is consistent with the action of $T$ on 
those distributions which are 
generated by Schwartz functions.
More precisely one fixes some injective, usually linear or antilinear map
$e:\mathcal S(\mathbb R)\to \mathcal S'(\mathbb R)$ and requires that $\hat T$ is 
chosen such that $[T,e]\equiv 0$, i.e.
$$
e(\phi)(\hat T\psi)=\big(T(e(\phi))\big)(\psi)=\big((T\circ e)(\phi)\big)(\psi)=\big((e
\circ T)(\phi)\big)(\psi)=e(T\phi)(\psi)
$$
for all $\phi,\psi\in\mathcal S(\mathbb R)$ with $e(\phi)\in D_T$. Thus by identifying 
functions $\phi\in\mathcal S(\mathbb R)$ with the tempered distribution $e(\phi)
\in\mathcal S'(\mathbb R)$ an extension is defined such that there is only a ``formal'' 
difference between the action of $T$ on $\phi$ compared to the action of $T$ on 
$e(\phi)$.
Let us illustrate this by means of a simple example:
\begin{example}\label{ex_extend_displacement}
Consider the displacement operator from Eq.~\eqref {dispopdefxp} which acts 
on any function $\psi:\mathbb R\to\mathbb C$---so in particular on any Schwartz 
function---via
$[\mathcal{D}(x_0,p_0) \psi](x)=\exp[{{i}p_0(x{-}x_{0}/2)/\hbar}]
\psi(x{-}x_{0})$ for all $x,x_0,p_0\in\mathbb R$. Depending on how one defines the 
distributional pairing of two classical functions there are two ways to extend
$\mathcal D$ to $\mathcal S'(\mathbb R)$:

\noindent (1) The usual way of embedding $\mathcal S(\mathbb R)
\hookrightarrow\mathcal S'(\mathbb R)$ is done via the linear map $
\iota(\psi):=\int\psi(x)(\cdot)(x)\,\mathrm{d}x$. Then for all $\phi,\psi\in
\mathcal S(\mathbb R)$ and all $\psc\in\mathbb R^2$ one finds
\cite[Eq.~(1.11)]{thewignertransform}
\begin{align*}
\iota(\phi)\big(\mathcal D(\Omega)\psi\big)=\iota\big(\mathcal D^{\wedge x}(\psc)
\phi\big)\psi\qquad\text{ or, equivalently }\qquad
\iota\big(\mathcal D(\psc)\phi\big)\psi=
\iota(\phi)\big(\mathcal D^{\wedge x}(\Omega)\psi\big)
\end{align*}
where $\mathcal D^{\wedge x}(x_0,p_0):=\mathcal D(-x_0,p_0)$ for all 
$x_0,p_0\in\mathbb R$. This suggests setting $\hat T:=\mathcal D^{\wedge x}$ in 
\eqref{eq:extension_def} because this way 
one has $[\mathcal D(\psc),\iota]\equiv 0$, that is, $\mathcal D(\psc)(\iota(\phi))
\equiv\iota(\mathcal D(\psc)\phi)$ for all $\psc\in\mathbb R^2$ and all $\phi:
\mathbb R\to\mathbb C$ such that $\int\phi(x)(\cdot)(x)\,\mathrm{d}x\in
\mathcal S'(\mathbb R)$.
In other words this extension of the displacement operator is consistent with its 
action on $\mathcal S(\mathbb R)$ by means of the embedding $\iota$.

\noindent  (2) One may also consider the canonical (antilinear, bijective) map from 
$L^2(\mathbb R)$ to its dual space $(L^2(\mathbb R))^*$ from the Riesz 
representation theorem which acts via $\langle\,\cdot\,|\;(\psi):=\langle\psi|\,\cdot\,
\rangle:=\int\psi(x)^*(\cdot)(x)\,\mathrm{d}x$. One readily verifies that $\langle\phi |
\mathcal D(\psc)\psi\rangle =\langle\mathcal D(-\psc)\phi | \psi\rangle $ for 
all $\phi,\psi\in L^2(\mathbb R)$,
i.e.~$D(\psc)^\dagger=\mathcal D(-\psc)$ for all
$\psc\in\mathbb R^2$. Setting $\hat T:=\mathcal D(\psc)^\dagger$ in 
\eqref{eq:extension_def} thus yields an extension of $\mathcal D$ which is consistent with 
respect to $\langle\,\cdot\,|\;$: one finds $[\mathcal D(\psc)\langle\,\cdot\,|\;]
\equiv 0$, that is, for all $\phi,\psi\in\mathcal S(\mathbb R)$
$$
\mathcal D(\psc)(\langle\phi,\cdot\,\rangle)(\psi)\overset{\eqref{eq:extension_def}}
=\langle\phi,\mathcal D(-\psc)\psi\rangle=\langle\mathcal D(\psc)\phi,\psi\rangle=\langle 
\mathcal D(\psc)\phi,\cdot\,\rangle(\psi).
$$
\end{example}

Having introduced the concept of operator extensions we may apply it to generalized 
parity operators.
But first let us generalize Definition \ref{def_parity_operator} to arbitrary tempered
distributions $\theta$, even though this is beyond what we need in the main sections of this article.
\begin{remark}\label{rem_general_def_parity}
\emph{Formally} \eqref{genparitydef} can be rewritten as $\Pi_\theta=(4\pi\hbar)^{-1}\langle K_\theta^*,\mathcal D\rangle$.
Because admissible kernels by definition satisfy $\langle K_\theta^*, \psi \rangle= 2\pi\hbar [\F_\sigma(\theta)]\psi $ 
for $\psi \in \mathcal S(\mathbb R)$ this
leads to an extension of Definition~\ref{def_parity_operator} to arbitrary $\theta\in\mathcal S'(\mathbb R^2)$ via the linear operator
$\Pi_\theta:\mathcal S
(\mathbb R)\to(\mathbb R\to\mathbb C)$, $\Pi_\theta:=\frac12[\F_\sigma(\theta)
\circ\mathcal D]$, i.e.
$$
(\Pi_\theta\psi)(x)=\tfrac12 \F_\sigma(\theta)(
\mathcal D\psi(x))=\tfrac12\theta\big(\F_\sigma(\mathcal D\psi(x))\big)
$$
for all $\psi\in\mathcal S(\mathbb R)$, $x\in\mathbb R$. Here $\mathcal D\psi(x)
\in\mathcal S(\mathbb R^2)$ defined via $\Omega\mapsto(\mathcal D(\Omega)\psi)
(x)$ is the displacement of $\psi$ at $x$ as a function of $\Omega$.
One readily verifies that for admissible kernels this definition reproduces
\eqref{eq:action_gen_parity} as well as the definition of the parity operator in \eqref{paritydef}.
\end{remark}

Similar to Example \ref{ex_extend_displacement}(1), let us extend $\Pi_\theta$ with respect 
to the embedding $\iota:\mathcal S(\mathbb R)\hookrightarrow\mathcal S'
(\mathbb R)$, that is, we have to find an operator $\tilde\Pi_\theta:
\mathcal S(\mathbb R)\to(\mathbb R\to\mathbb C)$ such that $\iota(\phi)
(\tilde\Pi_\theta\psi)=\iota(\Pi_\theta\phi)(\psi)$ for all $\psi,\phi\in
\mathcal S(\mathbb R)$. We claim that
\begin{equation}\label{eq:Pi_theta_ext_usual_pairing}
\tilde\Pi_\theta:=\tfrac12[\F_\sigma(\theta^{\wedge p})\circ\mathcal D]
\end{equation}
does the job where $\theta^{\wedge p}(a):=\theta(a^{\wedge p})$ and $a^{\wedge p}
(x_0,p_0):=a(x_0,-p_0)$ for all $a\in\mathcal S(\mathbb R^2)$, $x_0,p_0\in
\mathbb R$.
Before we verify this we will first show $
\iota(\phi)\big( \F_\sigma[\mathcal D\psi](x_0,-p_0) \big)=\iota\big( 
\F_\sigma[\mathcal D\phi](x_0,p_0) \big)(\psi)
$ for all $\phi,\psi\in\mathcal S(\mathbb R)$, $x_0,p_0\in\mathbb R$ as an intermediate result. One 
computes
\begin{align*}
& \iota(\phi)\big( \F_\sigma[\mathcal D\psi](x_0,-p_0) \big)
=(2\pi\hbar)^{-1}\int\phi(x) 
\int e^{-\frac{i}{\hbar}(x'(-p_0)-x_0p')} \big(\mathcal D(x',p')\psi\big)
(x)\,\mathrm{d}x'\,\mathrm{d}p'\,\mathrm{d}x\\
&
=(2\pi\hbar)^{-1}\int\phi(x) \int e^{-\frac{i}{\hbar}(x'(-p_0)-x_0p')} e^{\frac{i}\hbar p'(x-
\frac12 x')}\psi(x{-}x') \,\mathrm{d}x'\,\mathrm{d}p'\,\mathrm{d}x\\
&
=(2\pi\hbar)^{-1}\int\phi(x{+}x') \int e^{-\frac{i}{\hbar}((-x')p_0-x_0p')} e^{\frac{i}\hbar 
p'(x+\frac12 x')}\psi(x) \,\mathrm{d}x'\,\mathrm{d}p'\,\mathrm{d}x\\
&
=\int\Big((2\pi\hbar)^{-1}\int e^{-\frac{i}{\hbar}(\tilde xp_0-x_0p')} e^{\frac{i}\hbar p'(x-
\frac12 \tilde x)}\phi(x{-}\tilde x)  \,\mathrm{d}\tilde x\,\mathrm{d}p'\Big)\psi(x)
\,\mathrm{d}x\\
&
=\int  \F_\sigma[\mathcal D\phi(x)](x_0,p_0) \psi(x)\,\mathrm{d}x  = \iota\big( 
\F_\sigma[\mathcal D\phi](x_0,p_0) \big)(\psi).
\end{align*}
Together with the linearity of the integral as well as the linearity and continuity of $\theta$ 
this implies
\begin{align*}
& \iota(\phi)( \tfrac12[\F_\sigma(\theta^{\wedge p})\circ\mathcal D]\psi) = 
\tfrac12\int\phi(x) \theta\big(\big(\F_\sigma (\mathcal D\psi(x))\big)^{\wedge p}\big)\,
\mathrm dx  \\
&=\tfrac12\theta\Big( \int\phi(x) \big(\F_\sigma (\mathcal D\psi(x))\big)^{\wedge p}\,
\mathrm dx \Big)  
=\tfrac12\theta\Big(\int  \F_\sigma(\mathcal D\phi(x)) \psi(x)\,\mathrm{d}x \Big)  \\
&= \tfrac12\int\theta\big(\F_\sigma(\mathcal D\phi(x)) \big)\psi(x)\,\mathrm{d}x  = 
\iota\big(\tfrac12[\F_\sigma(\theta)\circ\mathcal D]\phi\big)(\psi)
\end{align*}
for all $\phi,\psi\in\mathcal S(\mathbb R)$.
Thus by setting $T=\Pi_\theta$ and $\hat T=\tilde\Pi_\theta$ in 
\eqref{eq:extension_def} with $\tilde\Pi_\theta$ from 
\eqref{eq:Pi_theta_ext_usual_pairing}, that is,
$$
(\Pi_\theta\phi)(\psi):=\tfrac12\phi\big[\theta\big(\big(\F_\sigma[\mathcal D\psi]
\big)^{\wedge p}\big)\big]
$$
for all $\psi\in\mathcal S(\mathbb R)$, $\phi\in D_\theta:=\{\phi\in\mathcal S'
(\mathbb R):\phi\big[\theta\big(\big(\F_\sigma(\mathcal D(\cdot))\big)^{\wedge p}\big)
\big]\in\mathcal S'(\mathbb R)\}$
we get an extension of $\Pi_\theta$ which is compatible with the integral pairing $
\iota$ in the sense that $[\Pi_\theta,\iota]\equiv 0$.

Now as in Example \ref{ex_extend_displacement}(2), let us extend $\Pi_\theta$ with respect 
to $\langle\,\cdot\,|:L^2(\mathbb R)\to(L^2(\mathbb R))^*$.
We claim that
\begin{equation}\label{eq:ext_parity_L2}
\tilde\Pi_\theta:=\tfrac12[\F_\sigma(\;{}^*\circ\theta\circ{}^*\;)\circ\mathcal D]
\end{equation}
satisfies $\langle\phi,\tilde\Pi_\theta\psi\rangle=\langle\Pi_\theta\phi,\psi\rangle$ for 
all $\psi,\phi\in\mathcal S(\mathbb R)$, where ${}^*$ is the usual complex conjugate.
Similar as before, one first shows
$$
\int\big(  \F_\sigma[\mathcal D\psi(x)](x_0,p_0)\big)^*\phi(x)
\,\mathrm dx=\int\psi(x)^*\F_\sigma[\mathcal D\phi(x)](x_0,p_0)\,\mathrm dx
$$
via direct calculation in order to conclude that
\begin{align*}
\langle \phi,\tilde\Pi_\theta\psi\rangle&=\tfrac12\int\phi(x)^*\big(\theta\big[ (\F_\sigma
(\mathcal D\psi(x)))^* \big]\big)^*\,\mathrm dx
=\tfrac12\Big( \int\phi(x)\theta\big[ (\F_\sigma(\mathcal D\psi(x)))^* \big] 
\,\mathrm dx\Big)^*\\
&=\tfrac12\Big(\theta\Big[ \int \big(\F_\sigma(\mathcal D\psi(x))\big)^*\phi(x)
\,\mathrm dx\Big]\Big)^*
=\tfrac12\Big(\theta\Big[\int\psi(x)^*\F_\sigma(\mathcal D\phi(x))\,\mathrm dx\Big]
\Big)^*\\
&=\tfrac12\int\big(\theta\big[\F_\sigma(\mathcal D\phi(x))\big]\big)^*\psi(x)\,
\mathrm dx=\langle\Pi_\theta\phi,\psi\rangle
\end{align*}
for all $\phi,\psi\in\mathcal S(\mathbb R)$. Hence
$\tilde\Pi_\theta=\tfrac12[\F_\sigma(\;{}^*\circ\theta\circ{}^*\;)\circ\mathcal D]$ is 
indeed the extension of $\Pi_\theta$ with respect to $\langle\,\cdot\,|$ which we were 
looking for.

In general these two extensions will be different so one has to be careful about which 
framework one uses. However from the explicit form of $\tilde\Pi_\theta$ one knows 
that for any $\theta\in\mathcal S'(\mathbb R^2)$ these extensions coincide if and 
only if $\theta^{\wedge p}\equiv {}^*\circ\theta\circ{}^*\,$. This translates to filter 
functions as follows:
\begin{lemma}\label{lemma_filter_function_symmetry}
Consider any admissible $\theta\in\mathcal S'(\mathbb R^2)$ with associated filter function 
$K_\theta:\mathbb R^2\to\mathbb C$. The extension of $\Pi_\theta$ with 
respect to $\iota$ coincides with the extension of $\Pi_\theta$ with respect to
$\langle\,\cdot\,|\;$ if and only if $K_\theta^*\equiv K_\theta^{\wedge p}$. In this case 
\eqref{eq:extension_def} becomes
\begin{align*}
(\Pi_\theta\phi)(\psi)&=(4\pi\hbar)^{-1}\int K_\theta^*(\Omega)\phi\big( \mathcal D(-\Omega)\psi \big)\,\mathrm d\psc\\
&=(4\pi\hbar)^{-1}\int K_\theta(x_0,p_0)\phi\big( \mathcal D(-x_0,p_0)\psi \big)\,\mathrm dx_0\mathrm dp_0
\end{align*}
for all $\phi\in \mathcal S'(\mathbb R)$ such that
$\int K_\theta^*(\Omega)\phi\big[ \mathcal D(-\Omega)(\cdot) \big]\,\mathrm d\psc\in\mathcal S'(\mathbb R)$,
and all $\psi\in\mathcal S(\mathbb R)$.
\end{lemma}
\begin{proof}
Because $\theta$ is admissible (i.e.~$\theta   =(2\pi\hbar )^{-1}\langle K_\theta^*, \F_\sigma(\cdot)\,\rangle$
for some $K_\theta:\mathbb R^2\to\mathbb C$) we compute
\begin{align*}
& \F_\sigma({}^*\circ\theta\circ{}^*)(a) 
=\big( \theta\big[\F_\sigma(a)^*\big] \big)^*=(2\pi\hbar)^{-1}\Big(\int 
K_\theta(\psc)\F_\sigma\big[\F_\sigma(a)^*\big](\psc)    \,\mathrm d\psc\Big)^*\\
&=(2\pi\hbar)^{-1}\Big(\int K_\theta(\psc)\F_\sigma\big[\F_\sigma(a)\big]^* (-\psc)   \,\mathrm d\psc\Big)^*
=(2\pi\hbar)^{-1}\int K_\theta^*(\psc)a (-\psc)   \,\mathrm d\psc
\end{align*}
 for all $a\in\mathcal S(\mathbb R^2)$. Here we used $\F_\sigma^2=\operatorname{id}$
 as well as the readily verified identity $\F_\sigma[a^*](\Omega)=\F_\sigma[a]^*(-\Omega)$.
If we denote the extension of $\Pi_\theta$ with respect to $\langle\,\cdot\,|$ by $\Pi_{\theta,L^2}$ this implies
 \begin{align*}
 (\Pi_{\theta,L^2}\phi)(\psi)=\tfrac12\phi\big(\F_\sigma(\;{}^*\circ\theta\circ{}^*\;)[\mathcal D\psi]\big)=(4\pi\hbar)^{-1}\int
 K_\theta^*(\psc)\phi\big(\mathcal D(-\psc)\psi\big)   \,\mathrm d\psc.
 \end{align*}
On the other hand the symplectic nature of $\F_\sigma$ yields
$\F_\sigma(a^{\wedge p})=(\F_\sigma(a))^{\wedge x}$ for all $a\in\mathcal S(\mathbb R^2)$
(where $a^{\wedge x}(x_0,p_0):=a(-x_0,p_0)$). Similarly let $\Pi_{\theta,\iota}$ denote the
extension of $\Pi_\theta$ with respect to $\iota$; this lets us compute
\begin{align*}
& (\Pi_{\theta,\iota}\phi)(\psi)=(4\pi\hbar)^{-1}\int K_\theta(\psc)\phi\Big(\F_\sigma\big[  \big(\F_\sigma[\mathcal D\psi]
\big)^{\wedge p}\big](\psc)\Big)   \,\mathrm d\psc\\
&=(4\pi\hbar)^{-1}\int K_\theta(\psc)\phi\Big(\F_\sigma\big[  \big(\F_\sigma[\mathcal D\psi]
\big)\big]^{\wedge x}(\psc)\Big)   \,\mathrm d\psc\\
& =(4\pi\hbar)^{-1}\int K_\theta(x_0,p_0)\phi\big(\mathcal D(-x_0,p_0)\psi\big) \,\mathrm dx_0\mathrm dp_0\\
&=(4\pi\hbar)^{-1}\int K_\theta(x_0,-p_0)\phi\big(\mathcal D(-x_0,-p_0)\psi\big) \,\mathrm dx_0\mathrm dp_0.
\end{align*}
Thus $\Pi_{\theta,\iota}\equiv\Pi_{\theta,L^2}$ is equivalent to $K_\theta^*\equiv K_\theta^{\wedge p}$ as claimed.\qed
\end{proof}
We emphasize that all filter functions used in practice satisfy $K_\theta^*\equiv K_\theta^{\wedge p}$
(cf.~Tables \ref{quanttable} and \ref{filterfunctionstable}), so it does not matter for applications 
whether one extends $\Pi_\theta$ with respect to $\iota$ or $\langle\,\cdot\,|$.
\section{Proofs of Lemma \ref{lemma_bounded_op_sufficient} and Theorem \ref{convolutionproposition}  \label{proofs_4_1}}

Before we dive into the proofs of the results in question we first need a lemma which relates convolutions
of the cross-Wigner transform with matrix elements of the generalized Grossmann-Royer operator.
\begin{lemma}\label{lemma_wign_gros_roy}
Given any $\theta\in\mathcal S'(\mathbb R^2)$ one finds
\begin{equation}\label{eq:wigner_grossmann_royer_connection}
\langle\phi,\mathcal D(\psc)\Pi_\theta\mathcal D^\dagger(\psc)\psi\rangle=\pi\hbar[\theta\ast W_{\phi,\psi}](\psc)
\end{equation}
for all $\phi,\psi\in\mathcal S(\mathbb R)$ and all $\psc\in\mathbb R^2$. If $\Pi_\theta\in\mathcal B(L^2(\mathbb R))$
then Eq.~\eqref{eq:wigner_grossmann_royer_connection} even holds for all $\phi,\psi\in L^2(\mathbb R)$.
\end{lemma}
\begin{proof}
Sums in the argument of the displacement operator decompose as (see Eq.~\eqref{eq:D_2} and \cite[Eq.~(1.10)]{thewignertransform}):
$$
\mathcal D(\psc{-}\psc')=e^{-\frac{i}{2\hbar}( xp'-x'p )}\mathcal D(\psc)\mathcal D(-\psc').
$$
This lets us connect the r.h.s.~of \eqref{eq:wigner_grossmann_royer_connection} with the Grossmann-Royer operator \eqref{defgr}:
\begin{align*}
& \langle \phi,[\mathcal T(\psc)\mathcal D]^\dagger(\psc')\Pi [\mathcal T(\psc)\mathcal D](\psc')\psi\rangle
=\langle\phi,\mathcal D(\psc{-}\psc')\Pi\mathcal D^\dagger(\psc{-}\psc')\psi\rangle\\
&=\langle\phi,\mathcal D(\psc)\mathcal D(-\psc')\Pi\mathcal D^\dagger(-\psc')\mathcal D^\dagger(\psc)\psi\rangle
=\langle\mathcal D(-\psc)\phi ,\tfrac12 \F_\sigma[\mathcal D(\cdot)\mathcal D(-\psc)\psi](\psc')\rangle
\end{align*}
Together with linearity and continuity of $\theta$ this implies
\begin{align*}
& \pi\hbar[\theta\ast W_{\phi,\psi}](\psc)=
\theta\big(  \langle\phi, [\mathcal T(\psc)\mathcal D]^\dagger(\cdot)\Pi[\mathcal T(\psc)\mathcal D](\cdot)\psi\rangle\big)\\
&=\theta\big(\langle\mathcal D(-\psc)\phi, \tfrac12 \F_\sigma[\mathcal D(\cdot)\mathcal D(-\psc)\psi]\rangle\big)
= \langle\phi,\mathcal D(\psc)  \tfrac12\theta\big[ \F_\sigma\big(  (\mathcal D(\cdot)\mathcal D(-\psc)\psi)  \big)  \big]\rangle\\
&=\langle\phi,\mathcal D(\psc)\Pi_\theta\mathcal D(-\psc)\psi\rangle=\langle\phi,\mathcal D(\psc)\Pi_\theta\mathcal D^\dagger(\psc)\psi\rangle.
\end{align*}
In the second-to-last step we used the general definition of $\Pi_\theta$ from Remark \ref{rem_general_def_parity}.
Now if $\Pi_\theta$ is bounded then the l.h.s.~of \eqref{eq:wigner_grossmann_royer_connection}
extends to all square-integrable functions by density of $\mathcal S(\mathbb R)$ in $L^2(\mathbb R)$.\qed
\end{proof}

Thus we have the (formal) equality $\langle\phi,\Pi_\theta\psi\rangle=\pi\hbar[\theta\ast W_{\phi,\psi}](0,0)$
for all $\phi,\psi:\mathbb R\to\mathbb C$ where this expression is well-defined. Using this we are ready to
prove Lemma \ref{lemma_bounded_op_sufficient}, in particular the equivalence of (i,a), (i,b) and (ii,a), (ii,b), (ii,c)
for general $\theta\in\mathcal S'(\mathbb R^2)$.

\subsection{Proof of Lemma \ref{lemma_bounded_op_sufficient}}\label{proofs_4_1_a}

\begin{proof}[Lemma \ref{lemma_bounded_op_sufficient}]
``(i,a) $\Rightarrow$ (i,b)'': Because $\Pi_\theta$ is well defined, $\psi\mapsto[\theta\ast W_\psi](0,0)
=(\pi\hbar)^{-1}\langle\psi|\Pi_\theta \psi\rangle$ is well defined on $L^2(\mathbb R)$ as well.
``(i,b) $\Rightarrow$ (i,a)'': Assume that $\psi\mapsto[\theta\ast W_\psi](0,0)$ is well defined on 
$L^2(\mathbb R)$. Then $\langle\psi|\Pi_\theta \psi\rangle=\pi\hbar
[\theta\ast W_{\psi,\psi}](0,0)=\pi\hbar[\theta\ast W_\psi](0,0)$ exists for all
$\psi\in L^2(\mathbb R)$ and the same is true for $\langle\psi|\Pi_\theta \phi\rangle$
using the parallelogram
law 
$4\langle\psi|\Pi_\theta\phi\rangle= \langle\psi{+}\phi|\Pi_\theta (\psi{+}\phi)\rangle -  
\langle\psi{-}\phi|\Pi_\theta (\psi{-}\phi)\rangle+i\langle\psi{-}i\phi|\Pi_\theta (\psi{-}i\phi)\rangle -i
\langle\psi{+}i\phi|\Pi_\theta (\psi{+}i\phi)\rangle$. 
Hence $\tilde\phi\mapsto\int\tilde\phi(x)(\Pi_\theta\phi)(x)\,\mathrm{d}x$ is a
well-defined linear functional on $L^2(\mathbb R)$ meaning---because it is a
functional ``of integral pairing form''---it is automatically continuous as we prove now: 
If $\langle f,\cdot\,\rangle:L^2(\mathbb R)\to\mathbb C$,
$g\mapsto\int f(x)g(x)\,\mathrm{d}x$ is well defined for some
$f:\mathbb R\to\mathbb C$, then $(\Re(fg))_\pm$, $(\Im(fg))_\pm$ 
are integrable by definition of the Lebesgue integral,
where $f_{+}(x) := \max(f(x),0)$ and $f_{-} := - \min(f(x),0)$.
But these can be expanded into 
$(\Re(f))_\pm(\Re(g))_\pm$, $(\Re(f))_\pm(\Im(g))_\pm$, $(\Im(f))_\pm(\Re(g))_\pm$, $(\Im(f))_\pm(\Im(g))_\pm$ 
meaning the linear functionals $g\mapsto\int(\Re(f))_\pm g$ and $g\mapsto\int(\Im(f))_\pm g$ are also well defined on 
$L^2(\mathbb R)$. Now each of these is a positive functional on $L^2(\mathbb R)$ which is well known to be 
continuous (one can prove this similar to \cite[Ch.~2, Lemma 2.1]{Davies76}). Therefore $\langle f,\cdot\,\rangle$ is continuous as it is
the linear combination of four continuous functionals.

Then the Riesz representation theorem
(cf., e.g., \cite[Supplementary Material, Thm.~S.4]{ReedSimon1}) yields
$f\in L^2(\mathbb R)$ such that $f^*(x)=(\Pi_\theta\phi)(x)$ for almost all 
$x\in\mathbb R$; in particular $\Pi_\theta\phi\in L^2(\mathbb R)$. But
$\phi\in L^2(\mathbb R)$ was chosen arbitrarily meaning $\Pi_\theta$ is a
well-defined linear operator on $L^2(\mathbb R)$. The equivalence 
``(ii,a) $\Leftrightarrow$ (ii,b)'' is obvious and ``(ii,a) $\Leftrightarrow$ (ii,c)'' follows at
once from \eqref{eq:wigner_grossmann_royer_connection} together with
\begin{equation*}
\sup_{\|\psi\|,\|\phi\|=1}| [\theta \ast W_{\phi,\psi}](0,0)|=\sup_{\|\psi\|,\|\phi\|=1}| \langle\phi|\Pi_\theta\psi\rangle
|=\sup_{\|\phi\|=1}\|\Pi_\theta\phi\|_{L^2}=\|\Pi_\theta\|_{\indsupnorm}.
\end{equation*}

Now assume that $\theta$ is admissible. Because ``(ii,c) $\Rightarrow$ (i,a)'' is trivial, all that remains to show is
``(i,a) $\Rightarrow$ (ii,c)'':  Our idea is to show that $\theta$ being admissible implies that $\Pi_\theta$ can be written 
as the linear combination of two well-defined symmetric operators on $L^2(\mathbb R)$. This would conclude the 
proof because every symmetric operator is bounded by the Hellinger-Toeplitz theorem \cite[p.~84]{ReedSimon1},
hence $\Pi_\theta$ is bounded as well.
Set $K_{\theta^*}(\Omega):=K_\theta^*(-\Omega)$ and define $\Pi_{\theta^*}$ to be the parity operator
generated by $K_{\theta^*}$. First we have to see whether $\Pi_\theta$ being well defined on $L^2(\mathbb R)$ implies that the 
same holds for $\Pi_{\theta^*}$. Given $\psi,\phi\in L^2(\mathbb R)$ we formally compute
\begin{align*}
\langle \psi | \parity_{\theta^*} \phi \rangle&=(4\pi\hbar)^{-1}\int 
K_\theta^*(-\Omega)\langle\psi|\mathcal D(\Omega)\phi\rangle\,\mathrm{d}\Omega
= (4\pi\hbar)^{-1}\int K_\theta^*(\tilde\Omega)\langle\psi|
\mathcal D^\dagger(\tilde\Omega)\phi\rangle\,d\tilde\Omega\\
&= (4\pi\hbar)^{-1}\int K_\theta^*(\tilde\Omega)\big(\langle\phi|
\mathcal D(\tilde\Omega)\psi\rangle\big)^*\,d\tilde\Omega=(\langle\phi|\Pi_\theta\psi\rangle)^*.
\end{align*}
Thus $\langle \psi | \parity_{\theta^*} \phi \rangle$ exists for all $\psi,\phi\in L^2(\mathbb R)$ so by the same argument we used 
above, $\Pi_{\theta^*}$ is well defined on $L^2(\mathbb R)$. This yields the decomposition
$
\Pi_\theta=(\Pi_\theta{+}\Pi_{\theta^*})/2+i\cdot (\Pi_\theta{-}\Pi_{\theta^*})/(2i)
$
meaning all that is left to show is that $\Pi_\theta{+}\Pi_{\theta^*}$, $i(\Pi_\theta{-}\Pi_{\theta^*})$
are symmetric operators: indeed given $\psi,\phi\in L^2(\mathbb R)$ one computes
$$
\langle \psi|(\Pi_\theta{+}\Pi_{\theta^*})\phi\rangle = \langle\psi|\Pi_\theta\phi\rangle+\big(\langle\phi|\Pi_{\theta}\psi\rangle\big)^*
=\big( \langle\phi|\Pi_{\theta^*}\psi\rangle+\langle\phi|\Pi_{\theta}\psi\rangle\big)^*= \langle (\Pi_\theta{+}\Pi_{\theta^*})\psi|\phi\rangle
$$
and analogously for $i(\Pi_\theta{-}\Pi_{\theta^*})$.
As stated above $\Pi_\theta{+}\Pi_{\theta^*}$, $i(\Pi_\theta{-}\Pi_{\theta^*})\in\mathcal B(L^2(\mathbb R))$
by the Hellinger-Toeplitz theorem so $\Pi_\theta\in\mathcal B(L^2(\mathbb R))$ as well.\qed
\end{proof}

\subsection{Proof of Theorem \ref{convolutionproposition}}\label{proofs_4_1_b}
Moreover Lemma \ref{lemma_wign_gros_roy} enables a simple proof of Theorem \ref{convolutionproposition}:

\begin{proof}[Theorem \ref{convolutionproposition}]
Using the spectral decomposition 
$\rho= \sum_{n = 1}^\infty \, p_n | \psi_n \rangle \langle \psi_n |$ as well as Definition \ref{inifnitedimdefinition}, 
we compute for equation \eqref{convolutioneq} that
\begin{align*}
\FF_\rho(\psc,\theta) & =(\pi \hbar)^{-1} \, \tr\,[ \, \rho \, \mathcal{D}(\psc)  \parity_\theta  \mathcal{D}^\dagger(\psc) ]
=\sum_{n=1}^\infty p_n(\pi \hbar)^{-1}\langle\psi_n |	\mathcal{D}(\psc)  \parity_\theta  \mathcal{D}^\dagger(\psc)  |\psi_n \rangle\\
&=\sum_{n=1}^\infty p_n  [\theta\ast W_{\psi_n} ](\psc)=\Big[\theta\ast\sum_{n=1}^\infty p_n  W_{\psi_n} \Big](\psc)= [\theta \ast W_\rho](\psc).
\end{align*}	
Now for Equation~\eqref{generalizedfourier}: if $\theta$ is admissible,
i.e.~$\theta=(2\pi\hbar)^{-1}\langle K_\theta^*, \F_\sigma(\cdot)\rangle=(2\pi\hbar)^{-1}\langle (\F_\sigma K_\theta^\vee)^*|$, 
then Lemma \ref{lemma_wign_gros_roy} verifies the desired equality of quadratic forms as
\begin{align*}
& \mathcal D(\psc)\Pi_\theta\mathcal D^\dagger(\psc) 
=[\theta\ast\mathcal D(\cdot)\Pi\mathcal D^\dagger(\cdot)](\psc)\\
&\overset{\hphantom{\eqref{defgr}}}
 =(2\pi\hbar)^{-1}[\langle (\F_\sigma K_\theta^\vee)^*|\ast\mathcal D(\cdot)\Pi\mathcal D^\dagger(\cdot)](\psc)
= (2\pi\hbar)^{-1}[\F_\sigma K_\theta^\vee\ast\mathcal D(\cdot)\Pi\mathcal D^\dagger(\cdot)](\psc)\\
&\overset{\hphantom{\eqref{defgr}}}
=\F_\sigma[K_\theta^\vee\cdot \F_\sigma(\mathcal D(\cdot)\Pi\mathcal D^\dagger(\cdot))](\psc)\\
&
\overset{\eqref{defgr}}=\tfrac12 \F_\sigma[K_\theta^\vee\cdot \mathcal D^\dagger](\psc)
=\tfrac12 \F_\sigma[K_\theta\cdot \mathcal D]^\vee(\psc)=\tfrac12 \F_\sigma[K_\theta\cdot \mathcal D](-\psc). \tag*{\qed}
\end{align*}
\end{proof}
\section{Phase-Space Distributions for Arbitrary Convolution Kernels}
\label{app_phase_space_extension}
Given arbitrary $\theta\in\mathcal S'(\mathbb R^2)$
one can make sense of 
the phase-space distribution function by restricting the domain of $\rho\mapsto 
\FF_\rho(\psc,\theta)$ to quantum states $\rho$ which, e.g., have a finite 
representation in the number state basis. 
Indeed let $\rho\in\mathcal B^1(L^2(\mathbb R))$ be given such that
$\rho=\sum_{m,n=1}^N\langle m|\rho n\rangle|m\rangle\langle n|$ for some $N\in\mathbb N_0$. Then \eqref{genpsdef} becomes
\begin{equation}\label{eq:def_F_rho_finite}
	\FF_{\rho}(\psc,\theta)=\sum_{m,n=0}^N(\pi\hbar)^{-1}\langle m\rho|n\rangle\langle\mathcal D^\dagger(\psc) n|\Pi_\theta D^\dagger(\psc)m\rangle,
\end{equation}
which is a well-defined expression \textit{regardless} of the chosen $\theta\in\mathcal S'(\mathbb R^2)$, cf.~the paragraph
right before Lemma \ref{lemma_bounded_op_sufficient} together with the simple fact that $\mathcal D$ is an automorphism
on $\mathcal S(\mathbb R)$.
To see that \eqref{eq:def_F_rho_finite} generalizes Definition \ref{def_phase_space_function}
note that if $\Pi_\theta\in\mathcal B(L^2(\mathbb R))$, then
$
\lim_{N\to\infty}\FF_{\rho_N}(\psc,\theta)=\FF_\rho(\psc,\theta)
$
uniformly in $\psc\in\mathbb R^2$ for all states $\rho$, where
$\rho_N:=\sum_{m,n=1}^N\langle m|\rho n\rangle|m\rangle\langle n|$ is just the ``upper left $N\times N$ block'' of
$\rho$. One sees this using Prop.~16.6.6 from \cite{MeiseVogt} as
\begin{align*}
|\FF_{\rho_N}(\psc,\theta)-\FF_\rho(\psc,\theta)|\leq \|\rho-\rho_N\|_1\|\mathcal{D}(\psc)
\parity_\theta  \mathcal{D}^\dagger(\psc)\|_\indsupnorm\leq \|\rho{-}\rho_N\|_1\|\parity_\theta\|_\indsupnorm\to 0.
\end{align*}
Here we used that $\mathcal D(\psc)$ is unitary so it has operator norm one, together with the fact that $\|\rho{-}\rho_N\|_1\to 0$ as $N\to\infty$
which is a simple consequence of Prop.~2.1 in \cite{Widom76}. 
This motivates the general definition
\begin{align*}
\FF(\psc,\theta):D_{\FF}&\to(\mathbb R^2\to\mathbb C)\\
\rho&\mapsto \FF_\rho(\psc,\theta):=\sum_{m,n=0}^\infty(\pi\hbar)^{-1}\langle
m\rho|n\rangle\langle\mathcal D^\dagger(\psc) n|\Pi_\theta D^\dagger(\psc)m\rangle
\end{align*}
with domain
\begin{equation*}
D_{\FF}:=\Big\{\,\rho\in\mathcal B^1(L^2(\mathbb R))
	\,\text{ s.t. }\,
	\Big(\sum_{m,n=0}^N(\pi\hbar)^{-1}\langle
	m\rho|n\rangle\langle\mathcal D^\dagger(\psc) n|\Pi_\theta D^\dagger(\psc)m\rangle\Big)_{N\in\mathbb N}\text{ converges }
	\Big\}.
\end{equation*}
In particular Equation \eqref{eq:def_F_rho_finite} shows that for \textit{all} $\theta\in S'(\mathbb R^2)$
the domain $D_{\FF}$ is a dense in the full trace class.
However, unlike in the bounded case, it may happen that $\psc\mapsto \FF_\rho(\psc,\theta)$ is not a
function of slow growth so $\FF(\psc,\theta)$ may not map to the phase-space distributions.
\section{Proofs of the Properties from Section~\ref{property_phase_space} \label{proofsofproperties}}

\subsection{Proof of Property~\ref{bound}}
Recall that the Hilbert-Schmidt norm of an operator $A$  is
defined as $\|A\|^2_{\indhsnorm} := 
\tr( A^\dagger A )$.
One obtains
\begin{equation*}
\|\parity_{\theta}\|^2_{\indhsnorm} = 
\tr( \parity_{\theta}^\dagger \parity_{\theta} )
=
(4 \pi \hbar)^{-2} \iint
K_\theta^*(\psc)  K_\theta(\psc')  \,
\tr[ \mathcal{D}^\dagger(\psc) \mathcal{D}(\psc') ]
\, \mathrm{d}\psc \, \mathrm{d}\psc'
\end{equation*}
by substituting $\parity_{\theta}$ with its definition from \eqref{genparitydef}.
We formally replace the trace
$\tr[ \mathcal{D}^\dagger(\psc) \mathcal{D}(\psc') ]$  with $2 \pi \hbar \,   \delta(\psc {-}\psc')$
\cite{Cahill68} and  it follows that
\begin{equation*}
\|\parity_{\theta}\|^2_{\indhsnorm} = 
(8 \pi \hbar)^{-1}  \int K_\theta^*(\psc)  K_\theta(\psc)  \, \mathrm{d}\psc
= (8 \pi \hbar)^{-1} \| K_\theta(\psc)\|^2_{\indLnorm}.
\end{equation*}
The inequality 
$\|\parity_{\theta} \|_\indsupnorm \leq 	 \|\parity_{\theta}\|_{\indhsnorm} =  
\| K_\theta(\psc)\|_{\indLnorm} / \sqrt{ 8 \pi \hbar}$
\cite[Thm.~VI.22.(d)]{ReedSimon1}
concludes the proof.

\subsection{Proof of Property~\ref{l2property}}
Recall that the Wigner function is square integrable as
$\tr( \rho_1^\dagger \, \rho_2 ) = \int W_{\rho_1}^* \,  W_{\rho_2} \, \mathrm{d} \psc$ and   
$|\langle W_{\rho_1} | W_{\rho_2} \rangle | \leq 1$ hold. Similarly, one obtains
for elements $\FF_\rho(\psc,\theta) = \theta(\psc) \ast W_\rho(\psc)$ of the Cohen class the 
scalar products
\begin{align*}
\int \FF_{\rho_1}^*(\psc,\theta) \, \FF_{\rho_2}(\psc,\theta)  \, \mathrm{d} \psc
&=
\int [\theta(\psc) \ast W_{\rho_1}(\psc)]^* [  \theta(\psc) \ast W_{\rho_2}(\psc)] \, \mathrm{d} \psc\\
&=
\int \F_\sigma [\theta(\cdot) \ast W_{\rho_1}(\cdot)]^*(\psc) \, 
\F_\sigma [\theta(\cdot) \ast W_{\rho_2}(\cdot)](\psc) \, \mathrm{d} \psc
\end{align*}
using the Plancherel formula
$\int a(\psc) b^*(\psc)\, \mathrm{d} \psc = \int a_\sigma(\psc) b^*_\sigma(\psc)\, \mathrm{d} \psc $.
One can simplify the integrands to
\begin{equation*}
\F_\sigma [ \theta(\cdot) \ast W_{\rho_2}(\cdot) ](\psc) = 2\pi \hbar 
\F_\sigma [\theta(\cdot) ](\psc) \, \F_\sigma [W_{\rho_2}(\cdot) ](\psc)
\end{equation*}
by applying the convolution formula from \eqref{convolutiondef}.
Theorem \ref{convolutionproposition} implies
$K_\theta(\psc) = 2\pi \hbar [\F_{\sigma} \theta(\cdot)](-\psc)$ 
which yields the simplified integral
\begin{equation} \label{simplified_integral}
\int \FF_{\rho_1}^*(\psc,\theta)\,  \FF_{\rho_2}(\psc,\theta)  \, \mathrm{d} \psc
=
\int  |K_\theta(-\psc)|^2  \,  \F_\sigma [W_{\rho_1}(\cdot) ]^*(\psc)  \,
\F_\sigma [W_{\rho_2}(\cdot) ](\psc) \, \mathrm{d} \psc.
\end{equation}
By assumption, $K_\theta(\psc) \in L^\infty(\mathbb{R}^2)$, i.e, 
there exists a constant $C\in \mathbb{R}$ such that $|K_\theta(\psc)| \leq C$ holds 
almost everywhere.
Applying this bound to Eq.~\eqref{simplified_integral}
after setting $\rho_1 = \rho_2 =: \rho$ yields 
\begin{equation*}
\int  |K_\theta(-\psc)|^2  \, | \F_\sigma [W_{\rho}(\cdot) ](\psc) |^2 \, \mathrm{d} \psc
\leq
C^2 \int   \,  | \F_\sigma [W_{\rho}(\cdot) ](\psc) |^2  \, \mathrm{d} \psc
=
C^2 \int   \,  | W_{\rho} (\psc) \, |^2  \, \mathrm{d} \psc
\end{equation*}
with the help of the  Plancherel formula.
The above mentioned result for the Wigner function
implies the square integrability of $\FF_{\rho}(\psc,\theta)$,
which concludes the proof.

\subsection{Proof of Property~\ref{translationalcovariance}}
	As in \eqref{densityop}, one considers the density operators 
	$
	\rho= \sum_{n} \, p_n | \psi_n \rangle \langle \psi_n |
	$
	and
	$
	\rho'= \sum_{n} \, p_n | \phi_n \rangle \langle \phi_n |
	$.
	The orthonormality of $| \phi_n \rangle = \mathcal{D}(\psc')| \psi_n \rangle$
	is used to evaluate the trace and this yields
	\begin{align*} 
	\tr\,[ \, \rho' \, \mathcal{D}(\psc)  \parity_\theta  \mathcal{D}^\dagger(\psc) ]
	&=
	\sum_n p_n \langle \phi_n | \mathcal{D}(\psc)  \parity_\theta  \mathcal{D}^\dagger(\psc)\,  \phi_n \rangle \\
	&=
	\sum_n p_n \langle \psi_n | \mathcal{D}^\dagger(\psc') \mathcal{D}(\psc)  \parity_\theta 
	\mathcal{D}^\dagger(\psc) \mathcal{D}(\psc')\, \psi_n \rangle.
	\end{align*}
	Computing the addition of products $\mathcal{D}(\psc) \mathcal{D}(\psc')$
	of displacement operators \cite[Eq.~(1.10)]{thewignertransform} concludes the proof
	by using $\mathcal{D}^\dagger(\psc') = \mathcal{D}(-\psc')$ and
	$
	\tr\,[ \, \rho' \, \mathcal{D}(\psc)  \parity_\theta  \mathcal{D}^\dagger(\psc) ]
	=
	\tr\,[ \, \rho \, \mathcal{D}(\psc {-} \psc')  \parity_\theta  \mathcal{D}^\dagger(\psc {-} \psc') ]
	$.

\subsection{Proof of Property~\ref{rotationalcovariance}}

First, we prove that the displacement operator is covariant under rotations,
i.e.~$U^\dagger_\phi  \mathcal{D}(\psc) U_\phi  = \mathcal{D}( \psc^{-\phi})$. This is conveniently shown 
in the coherent-state representation as detailed in Eq.~\eqref{D_alpha}. Note that
\begin{equation*}
U^\dagger_\phi  \mathcal{D}(\psc) U_\phi | 0 \rangle
=
 e^{-|\alpha|^2/2}   \, \sum_{n=0}^{\infty} \tfrac{\alpha^n}{\sqrt{n!}} U^\dagger_\phi   | n \rangle 
= e^{-|\alpha|^2/2}   \, \sum_{n=0}^{\infty} \tfrac{[\exp{(i \phi  )}  \alpha ]^n}{\sqrt{n!}}    | n \rangle 
=  \mathcal{D}( \psc^{-\phi}) | 0 \rangle,
\end{equation*}
where the eigenvalue equation $U^\dagger_\phi   | n \rangle  = \exp{( i n \phi  )}  | n \rangle$
was used together with its special case $U_\phi | 0 \rangle =  | 0 \rangle$. It follows
from \eqref{genparitydef} that
\begin{equation*} 
U_\phi  \parity_\theta U^\dagger_\phi
=
 (4 \pi \hbar)^{-1}\int K_\theta(\psc) U_\phi  \mathcal{D}(\psc) U^\dagger_\phi \, \mathrm{d}\psc
= (4 \pi \hbar)^{-1}\int K_\theta(\psc) \mathcal{D}(\psc^{\phi}) \, \mathrm{d}\psc
= \parity_\theta
\end{equation*}
and the last equation is a consequence of 
the invariance $K_\theta(\psc) = K_\theta(\psc^{\phi}) $. Now, 
considering the density operators $\rho$ and $\rho^{\phi} = U_\phi \rho U^\dagger_\phi$,
the traces can be evaluated as
\begin{gather*} 
\tr [   \rho^{\phi}   \mathcal{D}(\psc)  \parity_\theta  \mathcal{D}^\dagger(\psc) ]
=
\tr [   \rho^{\phi}   \mathcal{D}(\psc)  U_\phi  \parity_\theta U^{\dagger}_\phi \mathcal{D}^\dagger(\psc) ]
= \tr [   U_\phi \rho U^\dagger_\phi   \mathcal{D}(\psc)  U_\phi  \parity_\theta U^{\dagger}_\phi \mathcal{D}^\dagger(\psc) ]\\
= \tr [   \rho U^\dagger_\phi   \mathcal{D}(\psc)  U_\phi  \parity_\theta\, (U^{\dagger}_\phi \mathcal{D}(\psc) U_\phi)^\dagger ]
=
\tr [   \rho   \mathcal{D}(\psc^{-\phi})  \parity_\theta  \mathcal{D}^\dagger(\psc^{-\phi}) ],
\end{gather*}
which verifies that the displacement operator is covariant under rotations.
The diagonality of $\parity_\theta$ in the Fock basis can be shown as follows:
If $K_\theta(\psc)$ is invariant under rotations it must be a function of
the polar radius in the phase space, i.e.~$K_\theta(\psc) = K_\theta(|\alpha|^2)$.
The matrix elements can be calculated via \eqref{displacementopmatirx} as
\begin{equation*}
\langle n | \parity_\theta\, m \rangle =
(4 \pi \hbar)^{-1} \int K_\theta(|\alpha|^2) \,[\mathcal{D}(\alpha)]_{nm}  \, \mathrm{d}\psc 
\propto  
\int \alpha^{m-n} f(|\alpha|^2)
 \, \mathrm{d}\psc \propto \delta_{nm}
\end{equation*}
with $f(|\alpha|^2) = K_\theta(|\alpha|^2) e^{-|\alpha|^2/2} L_n^{(m-n)}(|\alpha|^2)$,
so the integral vanishes unless $n=m$.
	
\subsection{Proof of Property~\ref{reality}}

The expectation value
$
\langle \psi_n | \mathcal{D}(\psc)  \parity_\theta  \mathcal{D}^\dagger(\psc)\, \psi_n \rangle
=
\langle \phi_n |   \parity_\theta\, \phi_n \rangle
$
 is real if $\parity_\theta$  is self-adjoint, where
 the orthonormal bases
  $ \lbrace|  \psi_n \rangle\rbrace_{n=0,1,\ldots}$ and $ \lbrace| \phi_n \rangle\rbrace_{n=0,1,\ldots}$
of the considered Hilbert space have been applied. Assuming $K^*_\theta(-\psc)=K_\theta(\psc)$, this translates to 
\begin{equation*}
\parity^\dagger_\theta 
= \tfrac{1}{4 \pi \hbar}\int K^*_\theta(\psc) \mathcal{D}^\dagger(\psc) \, \mathrm{d}\psc
= \tfrac{1}{4 \pi \hbar}\int K^*_\theta(\psc) \mathcal{D}(-\psc) \, \mathrm{d}\psc 
= \tfrac{1}{4 \pi \hbar}\int K_\theta(\psc) \mathcal{D}(\psc) \, \mathrm{d}\psc 
= \parity_\theta.
\end{equation*}

\subsection{Proof of Property~\ref{cohencl}}
The phase-space integral
\begin{align*}
\int \FF_\rho(\psc,\theta) \,\mathrm{d} \psc &= 
(\pi\hbar)^{-1} \tr\,[ \, \rho \,
\int 
\mathcal{D}(\psc)  \parity_\theta  \mathcal{D}^\dagger(\psc) 
\,\mathrm{d} \psc 
]\\
&=
2 \, \tr\{ \, \rho \,
\F_\sigma[
\mathcal{D}(\cdot)  \parity_\theta  \mathcal{D}^\dagger(\cdot) 
](\psc)|_{\psc=0}
\}=
\tr\,[ \, \rho \, \mathcal{D}(0) K_\theta(0)  ]
=K_\theta(0) \tr(\rho) 
\end{align*}
is mapped to the trace of $\rho$ 
if $K_\theta(0) = 1$. The second equality applies the symplectic Fourier transform
of Eq.~\eqref{generalizedfourier} at the point $\psc = 0$.
Formally the trace of $\parity_{\theta}$ is given by
\begin{equation*}
\tr [ \parity_{\theta} ] =  (4 \pi \hbar)^{-1}\int K_\theta(\psc) \tr [ \mathcal{D}(\psc)  ] \, \mathrm{d}\psc
= (2)^{-1}\int K_\theta(\psc) \delta(\psc) \, \mathrm{d}\psc,
\end{equation*}
where we used
$\tr[ \mathcal{D}^\dagger(\psc) \mathcal{D}(\psc') ] = 2 \pi \hbar \,   \delta(\psc {-}\psc')$ \cite{Cahill68}.
Alternatively, this also follows from \eqref{quantizationofdelta} by formally computing the trace
\begin{equation*}
\tr[ \parity_{\theta} ]
=
(\pi \hbar)^{-1} \tr[  \textup{Op}_{\textup{Weyl}}( \theta )  ]
=
(\pi \hbar)^{-2} \int \theta(\psc) \, \tr[  \mathcal{D}(\psc)  \parity  \mathcal{D}^\dagger(\psc) ] \, \mathrm{d} \psc,
\end{equation*}
where the trace of the Grossmann-Royer operator from \eqref{defgr}
evaluates to 
$
\tr[  \mathcal{D}(\psc)  \parity  \mathcal{D}^\dagger(\psc) ] = \tr[  \parity  ] = 1/2
$, refer to (6.38) and the following text in \cite{Cahill68}.
Substituting its definition from \eqref{genparitydef}, the trace of $\parity_{\theta}$ is computed as
$
\tr[ \parity_{\theta} ]=
(2 \pi^2 \hbar^2)^{-1}
 \int  \theta(\psc) \,\mathrm{d}\psc
=(\pi \hbar)^{-1} \F_\sigma[\theta(\cdot)]|_{\psc = 0}
= K(0)/2
$.

\section{Proof of \eqref{sparametrizedparityform} \label{proofofsparametrized}}
Due to Property~\ref{rotationalcovariance}, the parity operator is
diagonal in the number-state representation
$\langle m | \parity_s  | n \rangle \propto \delta_{nm}$.
Its diagonal elements can be calculated
\begin{equation*} 
[ \parity_s ]_{nn}=
(4 \pi\hbar)^{-1} \int e^{s |\alpha|^2/2} \, [\mathcal{D}(\psc)]_{nn} \, \mathrm{d}\psc = 
(4 \pi\hbar)^{-1} \int e^{s |\alpha|^2/2} \,  e^{-|\alpha|^2/2} L_n(|\alpha|^2) \, \mathrm{d}\psc
\end{equation*}
where \eqref{displacementopmatirx} was used for $ [\mathcal{D}(\psc)]_{nn}$.
One applies the polar parametrization of the complex plane
via $\psc=\alpha=r \exp{(i \phi)}$ so that
$\mathrm{d}\psc = 2 \hbar \, \mathrm{d}\Re(\alpha) \,\mathrm{d}\Im(\alpha) = 2 \hbar \, r \, \mathrm{d}r \, \mathrm{d}\phi$. 
Then
\begin{align*} 
[ \parity_s ]_{nn} &=
(2 \pi)^{-1} \int_{0}^{2 \pi} \,\mathrm{d}\phi \int_{0}^{\infty} 
e^{s r^2/2} \,  e^{-r^2/2} L_n(r^2) \, r \, \mathrm{d}r \\
&= \tfrac{1}{2} \int_{0}^{\infty} e^{ y (s-1)/2 }  L_n(y) \, \mathrm{d}y 
= \tfrac{1}{2} \int_{0}^{\infty} e^{ -y } e^{ y (s+1)/2 }  L_n(y) \, \mathrm{d}y ,
\end{align*}
where the second equality is due to $ r \, \mathrm{d}r =  \mathrm{d}y/2$ with $y=r^2$
and the integral with respect to $\phi$ results in the multiplication by $2\pi$.
The Laguerre polynomial decomposition of the
exponential function 
$e^{-\gamma x}= \sum_{m=0}^\infty [\gamma^m/(1{+}\gamma)^{m+1}] L_m(x)$
with $\gamma = -(s{+}1)/2$ \cite[p.~90]{lebedev1972special}
and the orthogonality relation
$\int_{0}^{\infty} e^{-x} L_n(x) L_m(x) \, \mathrm{d}x = \delta_{nm}$
finally yields
$
[ \parity_s ]_{nn}= \gamma^n/[2 (1{+}\gamma)^{n+1}]
=  (-1)^n {(s{+}1)^n}/{(1{-}s)^{n+1}}
$,
which concludes the proof.

\section{Spectral Decomposition of the Squeezing Operator \label{appendixspectral}}

The eigenvectors from \eqref{eigenvectors} are orthogonal and normalized in terms of the delta function $\delta$
as detailed by
\begin{equation*}
\langle \psi^{E_1}_\pm | \psi^{E_2}_\pm \rangle = \int  [\psi^{E_1}_{\pm}(x)]^*  \,  \psi^{E_2}_{\pm}(x)  \, \mathrm{d} x
=
\delta(E_1{-}E_2).
\end{equation*}
The integral can be calculated using a change of variables $\mathrm{d} x = e^{v} \,\mathrm{d}v $
with $v=\ln(|x|)$. One obtains a complete basis 
\begin{equation*}
\int  [\psi^{E}_{\pm}(x)]^* \,  \psi^{E}_{\pm}(x')  \, \mathrm{d} E
=
\delta(x{-}x'),
\end{equation*}
by applying an integral of two different Fourier components indexed by $x$ and $x'$,
refer to \cite{chruscinski2003,chruscinski2004} for more details.
The eigenfunctions $\psi_\pm^E (x)$ are not square integrable, but they
can be decomposed into the number-state basis with finite coefficients.
The coefficients shrink to zero, but are not square summable.
The resulting integrals $\langle n | \psi_\pm^E  \rangle$ can be
specified in terms of a finite sum. In particular,
$\psi_+^0 = |x|^{- 1/2} /({2\sqrt{\pi}}) $ has the largest eigenvalue.
Its number-state representation is given by
\begin{equation*}
\langle n | \psi_+^0  \rangle=
\tfrac{1}{2\sqrt{\pi}}
\sum_{k=0}^{n/2}
\frac{\pi  2^{-3 k+n+\frac{1}{4}} \sqrt{n!}}{ k! (n{-}2 k)! \, \Gamma (k-\frac{n}{2}+\frac{3}{4})}
\; \text{ if } \;  n \bmod{4} =0,
\end{equation*}
where every fourth entry is non-zero and the entries decrease to zero for large $n$.

\section{Matrix Representation of the Born-Jordan Parity Operator \label{BJcomputation}}

The matrix elements of the parity operator can be computed via Theorem \ref{BJdistributionTheorem} as
\begin{equation*}
[\parity_{\textup{BJ}}]_{mn} 
=
(4 \pi\hbar)^{-1} \int K_{\textup{BJ}}(\psc) \, [\mathcal{D}(\psc)]_{mn} \, \mathrm{d}\psc
=
(4 \pi\hbar)^{-1} \int  \mathrm{sinc}( \tfrac{px}{2\hbar} 	)  \, [\mathcal{D}(\alpha)]_{mn} \, \mathrm{d}x  \, \mathrm{d}p.
\end{equation*}
It was discussed in Section~\ref{translations}
	 that one can substitute $\alpha=(\lambda x + i \lambda^{-1} p )/\sqrt{2 \hbar}$, 
	 which results in the integral
	 \begin{equation*}
	 [\parity_{\textup{BJ}}]_{mn} 
	 =
	 (4 \pi\hbar)^{-1} \int  \mathrm{sinc}( \tfrac{px}{2\hbar} 	)  \,
		  \{\mathcal{D}[(\lambda x + i \lambda^{-1} p )/\sqrt{2 \hbar}]\}_{mn} \, \mathrm{d}x  \, \mathrm{d}p.
	 \end{equation*}
Let us now apply a change of variables
$x\mapsto\lambda^{-1}\sqrt{\hbar}x$ and $p\mapsto\lambda\sqrt{\hbar}p$
which yields $\mathrm{d}x  \, \mathrm{d}p \mapsto \hbar \mathrm{d}x  \, \mathrm{d}p$
and the integral
 \begin{equation*}
[\parity_{\textup{BJ}}]_{mn} 
=
(4 \pi)^{-1} \int  \mathrm{sinc}( \tfrac{px}{2} 	)  \,
\{\mathcal{D}[(x {+} i  p )/\sqrt{2}]\}_{mn} \, \mathrm{d}x  \, \mathrm{d}p.
\end{equation*}
We now substitute the explicit form of 
$[\mathcal{D}(\alpha)]_{mn}$ with $\{\mathcal{D}[( x {+} i  p )/\sqrt{2}]\}_{mn}$
from \eqref{displacementopmatirx} and obtain
\begin{equation}
[ \parity_{\textup{BJ}}]_{mn}
=
(4\pi)^{-1}\sum_{k=0}^n c^k_{mn} \int \mathrm{sinc}( \tfrac{px}{2}) \,   
[\tfrac{x + ip}{\sqrt{2}}  ]^{m-n}    e^{-(x^2 {+} p^2 )/4} (x^2 {+} p^2 )^k
\, \mathrm{d}x \, \mathrm{d}p, \label{eq_int}
\end{equation}
where the Laguerre polynomials are expanded using the new coefficients
\begin{equation*}
c^k_{mn}:=\sqrt{\tfrac{n!}{m!}} (-1)^k 2^{-k}  \binom{m}{n{-}k} / k!.
\end{equation*}
One applies the expansion
\begin{equation*}
 [\tfrac{x + ip}{\sqrt{2}}  ]^{m-n}  =2^{-(m-n)/2} \sum_{\ell=0}^{m-n}
\binom{m{-}n}{\ell}
x^{m-n-\ell}  \, (i p)^{\ell}.
\end{equation*}
and the integral in \eqref{eq_int} vanishes for odd powers of $x$ and $p$ due to symmetry of the integrand. 
Therefore, all non-vanishing matrix elements have even $m-n$ values and
the summations can be restricted to
$\ell \in \{0,2,4, \dots, m{-}n\}$.
The integral is also invariant under a permutation of  $x$ and $p$ 
and certain terms in the sum cancel each other out:
every term $x^{m-n-\ell}  \, (ip)^{\ell}$ in the sum 
has a counterpart $(ip)^{m-n-\ell}  \, x^{\ell}$ which results in the
same integral and these two terms therefore cancel each other out 
after the integration 
if the condition $(i)^{\ell} = - (i)^{m-n-\ell}$ holds
(which occurs unless $m-n$ is a multiple of four).
This elementary argument shows that only matrix elements $[ \parity_{\textup{BJ}} ]_{mn}$
with $m-n$ multiples of four are non-zero.
Recall that we have been using an indexing scheme with $m \geq n$ on account of the Laguerre polynomials in
\eqref{displacementopmatirx}, but matrix elements with $m < n$ are trivially obtained
as $[ \parity_{\textup{BJ}} ]_{mn}=[ \parity_{\textup{BJ}} ]_{nm}$.
Introducing the coefficient [with $i^\ell=(-1)^{\ell/2}$]
\begin{equation*}
w^k_{mn\ell}:=(-1)^{\ell/2} 2^{-(m-n)/2} \binom{m{-}n}{\ell} c^k_{mn}
\end{equation*}
and denoting $a=(m-n-\ell)/2$ and $b=\ell/2$, one obtains
\begin{equation}
[ \parity_{\textup{BJ}} ]_{mn} =
(4\pi)^{-1} \sum_{k=0}^n \sum_{\substack{\ell=0 \\ \textrm{$\ell$ even}}}^{m-n}
w^k_{mn\ell} \int  \mathrm{sinc}( \tfrac{px}{2})  \, x^{2a}  \, p^{2b} \,  (x^2 {+} p^2 )^k \,  e^{-(x^2 + p^2 )/4}
\, \mathrm{d}x \, \mathrm{d}p. \label{specialForm}
\end{equation}
The integral in \eqref{specialForm} is simplified using 
new variables $\lambda,\mu\in[1-\varepsilon,1+\varepsilon]$ for some $\varepsilon\in(0,1/2)$ as
\begin{equation*}
(\partial^k_{\mu} [\partial^{a}_{\lambda}\partial^{b}_{\mu}
e^{-(\lambda x^2 + \mu p^2 )/4}]|_{\lambda=\mu})]|_{\mu=1}=(-1)^{a+b+k} 4^{-(a+b+k)} x^{2a}  \, p^{2b} \,  (x^2 {+} p^2 )^k \,  e^{-(x^2 + p^2 )/4}
\end{equation*}
for all $a,b,k\in\mathbb N_0:=\{0,1,\ldots\}$ and $x,p\in{\mathbb{R}}$. Considering the mapping
\begin{equation*}
g:\mathbb R^2\times [1-\varepsilon,1+\varepsilon]^2\to\mathbb R,\,
(x,p,\lambda,\mu)\mapsto \mathrm{sinc}( \tfrac{px}{2}) \, e^{-(\lambda x^2 + \mu p^2 )/4},
\end{equation*}
the corresponding partial derivatives can be bounded by 
\begin{equation*}
 |  \mathrm{sinc}( \tfrac{px}{2}) \, x^{2a}  \, p^{2b} \,  (x^2 {+} p^2 )^k \,  e^{-(\lambda x^2 + \mu p^2 )/4} |\leq
	x^{2a}  \, p^{2b} \,  (x^2 {+} p^2 )^k \,  e^{-(x^2 + p^2 )/8}
\end{equation*}
where the upper bound is independent of $\lambda,\mu$ and integrable as $e^{-( x^2 + p^2 )/8}\in\mathcal S(\mathbb R^2)$.
We now may interchange the partial derivatives
by a version of Lebesgue's dominated convergence theorem 
\cite[Thm.~2.27.b]{FollandAnalysis}. 
The integral in \eqref{specialForm} is then given by
\begin{align*}
& (4 \pi)^{-1} \int \mathrm{sinc}( \tfrac{px}{2})  \, x^{2a}  \, p^{2b} \,  (x^2 {+} p^2 )^k \,  e^{-(x^2 {+} p^2 )/4}
\, \mathrm{d}x \, \mathrm{d}p\\
& =
(-1)^{a+b+k} 4^{a+b+k}[  (\partial^k_{\mu} [\partial^{a}_{\lambda}\partial^{b}_{\mu} f(\lambda,\mu)]|_{\lambda=\mu})]|_{\mu=1},
\end{align*}
where
\begin{equation*}
f(\lambda,\mu) = (4 \pi)^{-1} \int \mathrm{sinc}( \tfrac{px}{2}) \, e^{-(\lambda x^2 + \mu p^2 )/4} \, \mathrm{d}x \, \mathrm{d}p = 
\mathrm{arcsinh}[(\lambda \mu)^{-1/2}].
\end{equation*}
Note that $\lambda$ now denotes the variable of the function $f(\lambda,\mu)$ and should not
be confused with the scaling parameter $\alpha=(\lambda x + i \lambda^{-1} p )/\sqrt{2 \hbar}$ 
from Section~\ref{translations}, which has also been used in the beginning of this section.
This finally results in
\begin{equation*}
[ \parity_{\textup{BJ}} ]_{mn}
=
\sum_{k=0}^n \sum_{\substack{\ell=0 \\ \textrm{$\ell$ even} }}^{m-n} w^k_{mn\ell}
(-1)^{k+(m-n)/2} 4^{k+(m-n)/2} \nonumber  
(\partial^k_{\mu} \{[\partial^{a}_{\lambda}\partial^{b}_{\mu} 
f(\lambda,\mu)]|_{\lambda=\mu}\})|_{\mu=1}.
\end{equation*}

\section{Calculating Derivatives for the Sum in Theorem \ref{BJParityMatrix}
 \label{derivativescalculation}}

The derivatives $\mathrm{\Phi}_{ab}^k =
	[  \partial^k_{\mu} [\partial^{a}_{\lambda}\partial^{b}_{\mu} f(\lambda,\mu)]|_{\lambda=\mu} ]|_{\mu=1}$ of the 
function (cf.\ \eqref{derivatives})
$$
f: (0,\infty)\times (0,\infty)\to\mathbb R,
(\lambda,\mu) \mapsto  \mathrm{arcsinh}[1/\sqrt{\lambda \mu}]
$$
can be computed recursively. Note that, obviously, $f$ is smooth.
The inner derivatives of $\mathrm{\Phi}_{ab}^k$ gives rise to the following lemma.
\begin{lemma}\label{app_e_lemma_1}
Let any $a,b\in\mathbb N_0:=\{0,1,\ldots\}$ with $a+b\geq 1$ (else we are not taking any derivative). Then
\begin{equation}\label{eq:app_e_1}
\partial_\lambda^a\partial_\mu^b f(\lambda,\mu)=
\frac{\sum_{j=0}^{a+b-1}c_j^{a{}b}\lambda^{j-b}\mu^{j-a}}{(-2)^{a+b}(\sqrt{\lambda\mu{+}1})^{2(a+b)-1}}
\end{equation}
where the coefficients $c_j^{a{}b}$ are defined recursively by
\begin{equation*}
c_j^{a{}b}=
\begin{cases}
c_j^{a{}b} = 0 \; \text{ if } \;  j<0 \; \text{ or } \;   j \geq a+b \\
c_0^{1\,0} = 1 \\
c_j^{a+1,b} =c_{j-1}^{a{}b} (4a + 2b + 1 -2j) - 2 \, c_j^{a{}b} (j{-}a)
\end{cases}
\end{equation*}
and have the symmetry $c_j^{a{}b} = c_j^{b{}a}$.
\end{lemma}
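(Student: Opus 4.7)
I would prove Lemma~\ref{app_e_lemma_1} by induction on the total order of differentiation $N = a+b$, since the claim is essentially that the rational-algebraic ansatz on the right-hand side of Eq.~\eqref{eq:app_e_1} is preserved under applying a single additional $\partial_\lambda$ or $\partial_\mu$, with the coefficient list evolving through the stated recursion.

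For the base case $N=1$ I would verify the formula by direct computation: writing $u=(\lambda\mu)^{-1/2}$ and using $\tfrac{d}{du}\operatorname{arcsinh}(u)=(1+u^2)^{-1/2}$, the identity $1+u^2 = (\lambda\mu+1)/(\lambda\mu)$ yields $(1+u^2)^{-1/2}=\sqrt{\lambda\mu}/\sqrt{\lambda\mu+1}$, which combined with $\partial_\lambda(\lambda\mu)^{-1/2}$ gives an expression matching \eqref{eq:app_e_1} with a single coefficient normalized to $1$, consistent with the initial data $c_0^{1,0}=1$. The case $\partial_\mu f$ is analogous and, together with the symmetry established below, fixes $c_0^{0,1}=1$.

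For the inductive step, assume \eqref{eq:app_e_1} holds for $a+b=N$ and apply $\partial_\lambda$ to both sides. Two contributions appear: (i) differentiating the monomial sum in the numerator produces terms $\sum_j c_j^{a,b}(\text{exponent})\,(\text{monomial of shifted degree})$; (ii) differentiating the denominator $(\sqrt{\lambda\mu{+}1})^{2N-1}$ introduces an extra factor $-\tfrac{1}{2}(2N{-}1)\mu/(\lambda\mu{+}1)$, which upon clearing the $(\lambda\mu{+}1)$ in the denominator by one power absorbs an extra $\lambda\mu+1$ into the numerator via (i) as well. Passing to the common denominator $(-2)^{N+1}(\sqrt{\lambda\mu{+}1})^{2(N+1)-1}$, expanding the factor $(\lambda\mu+1)$ where needed, and re-indexing $j\mapsto j-1$ on those terms that acquire one additional $\lambda$ reduces both contributions to a single monomial sum. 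Matching coefficients monomial by monomial then yields precisely $c_j^{a+1,b} = c_{j-1}^{a,b}(4a+2b+1-2j) - 2c_j^{a,b}(j-a)$, where the first summand originates from the denominator-plus-reindexing piece and the second from the direct numerator derivative. The boundary convention $c_j^{a,b}=0$ for $j<0$ or $j\geq a+b$ ensures the new sum runs correctly over $0\leq j\leq a+b$. The analogous argument with $\partial_\mu$ (or appeal to the symmetry below) supplies the recursion for $c_j^{a,b+1}$.

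The symmetry $c_j^{a,b}=c_j^{b,a}$ is inherited from the manifest invariance $f(\lambda,\mu)=f(\mu,\lambda)$: the identity $\partial_\lambda^a\partial_\mu^b f(\lambda,\mu)=\partial_\mu^a\partial_\lambda^b f(\mu,\lambda)$ combined with uniqueness of the coefficients in the rational ansatz forces the two coefficient families to agree index-by-index. The main obstacle I anticipate is purely bookkeeping: one must align the two contributions from step (ii) above with the re-indexed numerator derivative from step (i), and check that the prefactors $(4a+2b+1-2j)$ and $-2(j-a)$ emerge with the exact integer weights. Once the $j$-shift from the $\lambda\mu$ cross-term is tracked cleanly, the recursion falls out by comparing coefficients of each monomial $\lambda^{\cdot}\mu^{\cdot}$ separately.
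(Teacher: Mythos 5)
Your proposal is correct and follows essentially the same route as the paper: induction on $a+b$, with the symmetry $c_j^{a,b}=c_j^{b,a}$ obtained from $f(\lambda,\mu)=f(\mu,\lambda)$ via Schwarz's theorem, the base case $a+b=1$ checked directly, and the inductive step carried out by differentiating the general term, passing to the common denominator $(\sqrt{\lambda\mu+1})^{2(a+b)+1}$, re-indexing $j\mapsto j-1$ on the cross term, and matching coefficients to read off the recursion. The only cosmetic difference is that the paper increments $a$ by applying $\partial_\mu$ (which is what the exponent pattern $\lambda^{j-b}\mu^{j-a}$ in the statement actually requires), whereas you apply $\partial_\lambda$; by the stated symmetry this is immaterial.
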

\begin{proof}
Note that the symmetry of the $c_j^{a{}b}$ holds due to Schwarz's theorem \cite[pp.~235-236]{rudin1976principles} as $f$ is smooth. 
Then this statement is readily verified via induction over $n=a+b$. First, $n=1$ corresponds to $a=1,b=0$ so
$
\partial_\mu \mathrm{arcsinh}[(\lambda \mu)^{-1/2}]=1/(-2\mu\sqrt{\lambda\mu{+}1})
$
which reproduces \eqref{eq:app_e_1}. For $n\mapsto n+1$ it is enough to 
consider $(a,b)\mapsto (a{+}1,b)$ due to the stated symmetry. The key result here is that
\begin{align*}
\partial_\mu \frac{\mu^{j-a}}{(\sqrt{\lambda\mu{+}1})^{2(a+b)-1}}
=\frac{\mu^{j-a-1}}{2(\sqrt{\lambda\mu{+}1})^{2(a+b)+1}}[\lambda\mu (2j-4a-2b+1)+2(j{-}a)]
\end{align*}
which is readily verified.  Straightforward calculations conclude the proof.\qed
\end{proof}

For $a+b\geq 1$, the above result immediately yields
$$
\mathrm{\Phi}_{ab}^k =
	 \partial^k_{\mu} [\,\frac{\sum_{j=0}^{a+b-1}c_j^{a{}b}\mu^{2j-b-a}}{(-2)^{a+b}(\sqrt{\mu^2{+}1})^{2(a+b)-1}}\,]|_{\mu=1}.
$$
Now the $c_j^{a{}b}$ are used to initialize the
recursion of the coefficients $\xi_j^{a{}b{}k}$ for $a+b\geq 1$,
the sum of which determines the resulting derivatives as we will see now.
\begin{lemma}\label{app_e_lemma_2}
Let any $a,b,k\in\mathbb N_0$ with $a+b+k\geq 1$. Then
\begin{equation}\label{eq:app_e_2}
\partial^k_{\mu} [\partial^{a}_{\lambda}\partial^{b}_{\mu} 
f(\lambda,\mu)]|_{\lambda=\mu} ]=\frac{\sum_{j=0}^{a+b+k-1}
\xi_j^{a{}b{}k}\mu^{2j}}{ (-2)^{a+b}\mu^{a+b+k}(\sqrt{\mu^2{+}1})^{2(a+b+k)-1} }
\end{equation}
where the coefficients $\xi_j^{a{}b{}k}$ 
have the symmetry $\xi_j^{a{}b{}k}=\xi_j^{b{}a{}k}$ and
are defined by
\begin{equation}\label{eq:88b}
\xi_j^{a{}b{}k}=
\begin{cases}
\xi_j^{a{}b{}k} = 0 \; \text{ if } \;  j<0 \; \text{ or } \;   j \geq a+b+k \\
\xi_0^{0\,0\,1} = -1\\
\xi_j^{a{}b\,0} = c_j^{a{}b}\;\text{ if }\; a+b\geq 1 \\
\xi_j^{a{}b, k+1} =\xi_{j-1}^{a{}b{}k} (2j - 1 - 3a - 3b - 3k) + \xi_j^{a{}b{}k} (2j - a - b - k).
\end{cases}
\end{equation}
\end{lemma}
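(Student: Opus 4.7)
I propose proving the lemma by induction on $k$, taking Lemma~\ref{app_e_lemma_1} as the starting point. For the base case $k=0$ with $a+b\geq 1$, setting $\lambda=\mu$ in \eqref{eq:app_e_1} yields
\[
[\partial_\lambda^a\partial_\mu^b f(\lambda,\mu)]|_{\lambda=\mu}=\frac{\sum_{j=0}^{a+b-1}c_j^{a{}b}\mu^{2j-a-b}}{(-2)^{a+b}(\sqrt{\mu^2+1})^{2(a+b)-1}},
\]
which, after absorbing the factor $\mu^{-(a+b)}$ into the denominator, exactly reproduces \eqref{eq:app_e_2} at $k=0$ with $\xi_j^{a{}b\,0}=c_j^{a{}b}$. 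The remaining seed, corresponding to the $a=b=0$ branch, requires starting one step later at $k=1$: directly differentiating $f(\mu,\mu)=\mathrm{arcsinh}(1/\mu)$ gives $-1/(\mu\sqrt{\mu^2+1})$, which matches \eqref{eq:app_e_2} precisely when $\xi_0^{0\,0\,1}=-1$, as declared.

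For the induction step $k\mapsto k+1$ with $a+b+k\geq 1$, I would set $N:=a+b+k$ and $P(\mu):=\sum_{j=0}^{N-1}\xi_j^{a{}b{}k}\mu^{2j}$, and apply $\partial_\mu$ to $P(\mu)\mu^{-N}(\mu^2+1)^{-(N-1/2)}$. The product and chain rules give
\[
\frac{\mu P'(\mu)(\mu^2+1)-N P(\mu)(\mu^2+1)-(2N-1)\mu^2 P(\mu)}{\mu^{N+1}(\mu^2+1)^{N+1/2}},
\]
whose denominator is exactly the one appearing in \eqref{eq:app_e_2} with $k$ replaced by $k+1$. It therefore remains to check that the numerator equals $\sum_j \xi_j^{a{}b,k+1}\mu^{2j}$. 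Using $P'(\mu)=\sum_j 2j\,\xi_j^{a{}b{}k}\mu^{2j-1}$ and collecting the coefficient of $\mu^{2j}$ in the three terms of the numerator, one obtains $(2j-N)\xi_j^{a{}b{}k}+(2(j-1)-N-(2N-1))\xi_{j-1}^{a{}b{}k}=(2j-a-b-k)\xi_j^{a{}b{}k}+(2j-1-3a-3b-3k)\xi_{j-1}^{a{}b{}k}$, reproducing exactly the recursion \eqref{eq:88b}. Inspecting the top power confirms the new contribution $-(N+1)\xi_{N-1}^{a{}b{}k}\mu^{2N}$, so the sum terminates at $j=N=a+b+k$, consistent with the declared range.

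The claimed symmetry $\xi_j^{a{}b{}k}=\xi_j^{b{}a{}k}$ then follows inductively: the base value $c_j^{a{}b}=c_j^{b{}a}$ is inherited from Schwarz's theorem (the smoothness of $f$ having been established earlier), and the $k$-recursion depends on $a,b$ only through the combination $a+b$. The main obstacle in this proof is essentially bookkeeping: correctly handling the index shift $j\mapsto j-1$ caused by multiplying $P(\mu)$ by $\mu^2$, and making sure the two distinct seeds (the $c_j^{a{}b}$ initial values at $k=0$ for $a+b\geq 1$, and the lone seed $\xi_0^{0\,0\,1}=-1$ at $a=b=0$, $k=1$) together reach every triple $(a,b,k)$ with $a+b+k\geq 1$. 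Once the differentiation identity above is in hand, the rest reduces to a direct matching of coefficients against the prescribed recursion \eqref{eq:88b}.
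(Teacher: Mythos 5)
Your proof is correct and follows essentially the same route as the paper: induction over $k$ with the two seeds ($\xi_j^{a{}b\,0}=c_j^{a{}b}$ from Lemma~\ref{app_e_lemma_1} for $a+b\geq 1$, and $\xi_0^{0\,0\,1}=-1$ from differentiating $\operatorname{arcsinh}(1/\mu)$ once), the only cosmetic difference being that the paper differentiates each monomial $\mu^{2j-\beta}(\sqrt{1+\mu^2})^{-(2\beta-1)}$ termwise via a prepared identity with $\beta=a+b+k$, whereas you differentiate the whole rational function and collect coefficients — the resulting coefficient matching is identical. Your verification of the top-degree term and the symmetry argument via $c_j^{a{}b}=c_j^{b{}a}$ and the $(a+b)$-dependence of the recursion also mirror the paper.
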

\begin{proof}
The key result here is
\begin{equation}\label{eq:app_e_3}
\partial_\mu \frac{\mu^{2j-\beta}}{(\sqrt{1{+}\mu^2})^{2\beta-1}}
=\frac{\mu^{2j-\beta-1}}{(\sqrt{1{+}\mu^2})^{2\beta+1}}[\mu^2(2j-3\beta+1)+(2j{-}\beta)]
\end{equation}
for any $\beta,j\in\mathbb N$ which can be easily seen. We have to distinguish 
the cases $a+b=0$ and $a+b\geq 1$. First, let $a+b=0$ so $a=0,b=0$ and the expression in question boils down to
$$
\partial_\mu^k f(\mu,\mu)=\partial_\mu^k \mathrm{arcsinh}[\mu^{-1}]= 
\frac{\sum_{j=0}^{k-1}\xi_j^{0{}0{}k}\mu^{2j-k}}{(\sqrt{1{+}\mu^2})^{2k-1}} 
$$
as can be shown via induction over $k\in\mathbb N$. Here, setting $\beta=k$ in \eqref{eq:app_e_3} yields
$$
\xi_j^{0{}0,k+1}=\xi_{j-1}^{0{}0{}k} (2j - 1 - 3k) + \xi_j^{0{}0{}k} (2j {-} k)
$$
which recovers the recursion formula of $\xi_j^{a{}b{}k}$ for $a=0$ and $b=0$. 
Now assume $a+b\geq 1$ such that we can carry out the proof via induction over 
$k\in\mathbb N_0$ (where $k=0$ is obvious as it is simply Lemma \ref{app_e_lemma_1}). 
Using \eqref{eq:app_e_3} in the inductive step for $\beta=a+b+k$ recovers the 
recursion formula of the $\xi_j^{a{}b{}k}$ by straightforward computations.\qed
\end{proof}

Finally, evaluating \eqref{eq:app_e_2} at $\mu=1$ for any 
$a,b,k\in\mathbb N_0$ with $a+b+k\geq 1$ readily implies 
Eq.~\eqref{xi_expansion}.

\section{Proof of Proposition \ref{prop_diag_ele}}\label{proof_prop_diag_ele}
The proof which is given below was informed by a discussion 
on MathOverflow\cite{MO295019} and its idea was provided GH and M.~Alekseyev.
We consider the generating function of the entries $[ \parity_{\textup{BJ}}]_{nn}$.
\begin{lemma}\label{Mnn_generation_function}
For all $|t|<1$ one has 
\begin{align*}
\sum_{n=0}^\infty [ \parity_{\textup{BJ}}]_{nn}t^n=\frac{1}{1-t}\operatorname{arcsinh}\Big(\frac{1{-}t}{1{+}t}\Big)
\end{align*}
where
\begin{align*}
[ \parity_{\textup{BJ}}]_{nn}=\sum_{k=0}^n\binom{n}{k} \frac{2^kc_k}{k!}
\;\text{ and }\; c_n=\frac{d^n}{\mathrm{d}x^n}\operatorname{arcsinh}\Big(\frac{1}{x}\Big)\Big|_{x=1}
\end{align*} 
for all $n\in\mathbb N_0$.
\end{lemma}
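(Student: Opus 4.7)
The plan is to prove the two claims in sequence: first the closed-form expression for the diagonal entries $[\parity_{\textup{BJ}}]_{nn}$, then the generating function identity. For the first, I would specialize Theorem~\ref{BJParityMatrix} to $m=n$. In this case the outer sum over $\ell$ in \eqref{BJparitymatrix} collapses to the single term $\ell=0$ (since $m-n=0$ forces $\binom{m-n}{\ell}=\delta_{\ell 0}$), the coefficient $d_{nn}^{k0}$ from \eqref{dcoefficient} reduces to $2^k\binom{n}{k}/k!$, and $\Phi_{00}^k$ is by construction the scalar $c_k$ of the lemma (differentiating $f(\mu,\mu)=\operatorname{arcsinh}(1/\mu)$). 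Combining these observations gives the stated closed form directly.

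For the generating function identity, the idea is to interchange the order of summation and apply the standard expansion $\sum_{n\geq k}\binom{n}{k}t^n=t^k/(1-t)^{k+1}$ to obtain
\begin{equation*}
\sum_{n=0}^\infty [\parity_{\textup{BJ}}]_{nn}t^n
=\sum_{k=0}^\infty\frac{2^k c_k}{k!}\cdot\frac{t^k}{(1-t)^{k+1}}
=\frac{1}{1-t}\sum_{k=0}^\infty\frac{c_k}{k!}\Big(\frac{2t}{1-t}\Big)^k.
\end{equation*}
Setting $h:=2t/(1-t)$, the algebraic identity $1+h=(1+t)/(1-t)$ lets me recognize the remaining power series as the Taylor expansion of $g(x):=\operatorname{arcsinh}(1/x)$ around $x=1$ evaluated at $x=1+h$. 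Since $1/(1+h)=(1-t)/(1+t)$, this sum equals $\operatorname{arcsinh}((1-t)/(1+t))$, which yields the claim.

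The main obstacle is making these manipulations rigorous on the full disk $|t|<1$. The Taylor series of $g$ at $x=1$ has radius of convergence exactly $1$ (limited by the logarithmic singularity of $\operatorname{arcsinh}(1/x)$ at $x=0$), so the formal computation above is only justified where $|h|<1$, corresponding to a strictly smaller neighborhood of the origin in $t$ (e.g.\ $|t|<1/3$ on the real axis). I would bridge this gap by analytic continuation: the left-hand side defines a holomorphic function on $\{|t|<1\}$ because the bound $|[\parity_{\textup{BJ}}]_{nn}|\le \|\parity_{\textup{BJ}}\|_\indsupnorm\le \pi/2$ from Proposition~\ref{BJparityBound} ensures convergence of the power series there, while the right-hand side is holomorphic on $\{|t|<1\}$ because the Möbius map $t\mapsto(1-t)/(1+t)$ sends the open unit disk onto the open right half-plane, on which $\operatorname{arcsinh}$ is single-valued and analytic (its branch cuts lie on the imaginary axis outside $(-i,i)$). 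The identity theorem then extends the equality from a small disk around $t=0$ to all $|t|<1$.
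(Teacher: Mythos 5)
Your proof is correct, and the second half takes a genuinely different route from the paper. The first step (specializing Theorem~\ref{BJParityMatrix} to $m=n$, so that only $\ell=0$ survives, $d_{nn}^{k0}=2^k\binom{n}{k}/k!$, and $\mathrm{\Phi}_{00}^k=c_k$) is exactly what underlies the paper's identification of the finite sum with $[\parity_{\textup{BJ}}]_{nn}$. For the generating function, however, the paper works backwards from the right-hand side: it expands $\operatorname{arcsinh}(1/(1{+}2w))$ around $w=0$, uses the generalized Leibniz rule to show $[w^n](1{+}w)^n\operatorname{arcsinh}(1/(1{+}2w))=[\parity_{\textup{BJ}}]_{nn}$, and then invokes the Lagrange--B\"urmann formula with $\phi(w)=1{+}w$ to convert this into the $n$th Taylor coefficient of $\tfrac{1}{1-t}\operatorname{arcsinh}((1{-}t)/(1{+}t))$; since that function is holomorphic on the unit disk, the identity holds for all $|t|<1$ in one stroke. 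You instead start from the left-hand side, interchange the double sum, use $\sum_{n\ge k}\binom{n}{k}t^n=t^k/(1-t)^{k+1}$, and resum the Taylor series of $\operatorname{arcsinh}(1/x)$ at $x=(1{+}t)/(1{-}t)$. This is more elementary (no Lagrange inversion), but, as you correctly note, the interchange and resummation are only justified where $|2t/(1-t)|<1$, so you must pay for the elementary computation with an analytic-continuation step. Your continuation argument is sound: the coefficient bound $|[\parity_{\textup{BJ}}]_{nn}|\le\|\parity_{\textup{BJ}}\|_\indsupnorm\le\pi/2$ from Proposition~\ref{BJparityBound} makes the left side holomorphic on $|t|<1$, the M\"obius map $t\mapsto(1{-}t)/(1{+}t)$ sends the disk into the right half-plane where $\operatorname{arcsinh}$ is analytic, and the identity theorem finishes the job. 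In short: the paper's Lagrange--B\"urmann route identifies Taylor coefficients directly and needs no continuation; yours trades that machinery for a Fubini argument plus the identity theorem. Both are complete proofs.
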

\begin{proof}
Obviously,
$
\operatorname{arcsinh}({1}/{w})=\sum_{n=0}^\infty({c_k}/{k!})(w{-}1)^k
$
for all $|w{-}1|<1$, so changing $w$ to $1+2w$ yields
\begin{align}\label{eq_arcsinh_1}
\operatorname{arcsinh}\Big(\frac{1}{1{+}2w}\Big)=\sum_{n=0}^\infty\frac{2^kc_k}{k!}w^k
\end{align}
for all $|w|<1/2$. By the generalized Leibniz rule,
\begin{gather}
[w^n](1{+}w)^n \operatorname{arcsinh}\Big(\frac{1}{1{+}2w}\Big)
= \frac{1}{n!}\frac{d^n}{dw^n}(1{+}w)^n\operatorname{arcsinh}\Big(\frac{1}{1{+}2w}\Big)\Big|_{w=0} \notag\\
= \frac{1}{n!}\sum_{k=0}^n
\binom{n}{k}\underbrace{\frac{d^k}{dw^k}
\operatorname{arcsinh}\Big(\frac{1}{1{+}2w}\Big)\Big|_{w=0}}_{=2^kc_k\text{ by }
\eqref{eq_arcsinh_1}}\underbrace{\frac{d^{n-k}}{dw^{n-k}}(1{+}w)^n\Big|_{w=0}}_{=n!/k!}
=\sum_{k=0}^n\binom{n}{k} \frac{2^kc_k}{k!}= [ \parity_{\textup{BJ}}]_{nn} \label{eq_arcsinh_2}
\end{gather}
for all $n\in\mathbb N_0$.
Here, $[t^n]g(t)={g^{(n)}(0)}/{n!}$ denotes the $n$th coefficient in the Taylor series of $g(t)$ around $0$. 
Now we apply the Lagrange–Bürmann formula 
\cite[3.6.6]{abramowitz1965handbook} to $\phi(w)=1+w$ [so $w/\phi(w)=t$ for $|t|<1$ has the unique solution 
$w={t}/(1{-}t)$] and $H(w)=(1{+}w)\operatorname{arcsinh}[{1}/(1{+}2w)]$ which concludes the proof via
\begin{align*}
[t^n]\frac{1}{1{-}t}\operatorname{arcsinh}\Big(\frac{1{-}t}{1{+}t}\Big)=
[t^n]H\Big(\frac{t}{1{-}t}\Big)&=[w^n]H(w)\phi(w)^{n-1}[\phi(w)-w\phi'(w)]\\
&=[w^n](1{+}w)^n \operatorname{arcsinh}\Big(\frac{1}{1{+}2w}\Big)\overset{\eqref{eq_arcsinh_2}}=[ \parity_{\textup{BJ}}]_{nn}.
\tag*{\qed}
\end{align*}
\end{proof}
\begin{lemma}\label{inf_sum_conv}
The following sum converges:
\begin{align}\label{eq:inf_sum_conv_1}
\sum_{k=0}^\infty\frac{(-1)^k}{k{+}1}\sum_{m=0}^{\lfloor \frac{k}{2}\rfloor}\binom{2m}{m}\Big(\frac{-1}{4}\Big)^m
\end{align}
\end{lemma}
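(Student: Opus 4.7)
The plan is to establish convergence by grouping the outer summation into consecutive pairs, exploiting the fact that the inner sum depends only on $\lfloor k/2\rfloor$.

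First I would deal with the inner sum. Writing $a_m := \binom{2m}{m}(-1/4)^m$, the standard asymptotics $\binom{2m}{m} \sim 4^m/\sqrt{\pi m}$ show that $|a_m| = \binom{2m}{m}/4^m$ is a decreasing null sequence (decrease follows from $\binom{2(m+1)}{m+1}/\binom{2m}{m} = (2m+1)(2m+2)/(m+1)^2 \cdot 1/4 = (2m+1)/(2m+2) < 1$ after dividing by $4$). Since the signs alternate, the alternating series test yields convergence of $\sum_{m=0}^\infty a_m$, and in particular the partial sums $S_j := \sum_{m=0}^j a_m$ are uniformly bounded, say $|S_j|\leq C$ for all $j \in \mathbb N_0$. (Identifying the limit with $1/\sqrt 2$ via Abel's theorem applied to $(1-4x)^{-1/2}$ at $x=-1/4$ is not needed for convergence.)

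Next I would regroup the outer sum. Since $\lfloor 2j/2\rfloor = \lfloor(2j+1)/2\rfloor = j$, the inner sum in \eqref{eq:inf_sum_conv_1} at index $k=2j$ coincides with the one at $k=2j+1$, both equal to $S_j$. Hence the partial sums of \eqref{eq:inf_sum_conv_1} satisfy
\begin{equation*}
T_{2J+1} \;:=\; \sum_{k=0}^{2J+1}\frac{(-1)^k}{k+1}S_{\lfloor k/2\rfloor}
\;=\; \sum_{j=0}^{J} S_j\Big(\frac{1}{2j+1}-\frac{1}{2j+2}\Big)
\;=\; \sum_{j=0}^{J}\frac{S_j}{(2j+1)(2j+2)}.
\end{equation*}

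Since $|S_j|\leq C$ and $1/[(2j+1)(2j+2)] = O(j^{-2})$, the series on the right converges absolutely, so $\lim_{J\to\infty}T_{2J+1}$ exists. Finally $T_{2J+2}-T_{2J+1} = S_{J+1}/(2J+3)\to 0$, so the even partial sums share the same limit; this establishes convergence of \eqref{eq:inf_sum_conv_1}. The only genuine content is the alternating series test for $\sum_m a_m$; once boundedness of $S_j$ is in hand, the pairing argument is elementary and is in fact the main obstacle only insofar as one must notice the coincidence $\lfloor 2j/2\rfloor = \lfloor(2j+1)/2\rfloor$ that enables Abel summation.
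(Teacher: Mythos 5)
Your proof is correct and rests on the same key observation as the paper's: the inner sum agrees at $k=2j$ and $k=2j+1$, so consecutive terms nearly cancel and everything reduces to boundedness of the inner partial sums. The paper packages this as bounded partial sums of $b_k:=(-1)^k S_{\lfloor k/2\rfloor}$ plus Dirichlet's test (identifying $\lim S_j=1/\sqrt2$ along the way), whereas you carry out the pairing explicitly and land on an absolutely convergent series — a minor, if slightly more self-contained, repackaging of the same argument.
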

\begin{proof}
For arbitrary $k\in\{0,1,2,\ldots\}$ we define
\begin{align*}
b_k:=(-1)^k\sum_{m=0}^{\lfloor \frac{k}{2}\rfloor}
\binom{2m}{m}\Big(\frac{-1}{4}\Big)^m.
\end{align*}
Due to the summation limit $\lfloor \frac{k}{2}\rfloor$, one has $b_{2k}=-b_{2k+1}$ for all $k\in\{0,1,2,\ldots\}$ and thus
$$
\Big(\sum_{k=0}^n b_k\Big)_{n=0,1,2,\ldots}=(b_0,0,b_2,0,b_4,0,\ldots).
$$
Therefore $(\sum_{k=0}^n b_k)_{n=0,1,2,\ldots}$ consists of the null sequence and $(b_{2n})_{n=0,1,2,\ldots}$, so it is bounded due to
$$
\lim_{n\to\infty} b_{2n}=\sum_{m=0}^\infty \binom{2m}{m}(-1/4)^m={1}/{\sqrt{2}}.
$$
In total, \eqref{eq:inf_sum_conv_1} then converges due to Dirichlet's test \cite[p.~328]{hardy2015course}.\qed
\end{proof}

With these intermediate results we can finally prove the proposition in question.

\begin{proof}
Again using the generalized Leibniz rule, Lemma \ref{Mnn_generation_function}  yields that
\begin{gather*}
[ \parity_{\textup{BJ}}]_{nn}=[t^n]\frac{1}{1{-}t}\operatorname{arcsinh}\Big(\frac{1{-}t}{1{+}t}\Big)
=\frac{1}{n!}\frac{d^n}{dt^n}\frac{1}{1{-}t}\operatorname{arcsinh}\Big(\frac{1{-}t}{1{+}t}\Big)\Big|_{t=0}\\
=\frac{1}{n!}\sum_{k=0}^n\binom{n}{k}\frac{d^k}{dt^k}
\operatorname{arcsinh}\Big(\frac{1{-}t}{1{+}t}\Big)\Big|_{t=0}\underbrace{\frac{d^{n-k}}{dt^{n-k}}\frac{1}{1{-}t}\Big|_{t=0}}_{=(n-k)!}
= \operatorname{arcsinh}(1)+\sum_{k=1}^n \frac{1}{k!}\frac{d^k}{dt^k}\operatorname{arcsinh}\Big(\frac{1{-}t}{1{+}t}\Big)\Big|_{t=0}
\end{gather*}
holds for any $n\in {\mathbb{N}}_0$. It follows that
\begin{gather*}
\frac{d}{dt}\operatorname{arcsinh}
\Big(\frac{1{-}t}{1{+}t}\Big)=\frac{-\sqrt{2}}{(1{+}t)\sqrt{1{+}t^2}}
=-\sqrt{2}\Big[\sum_{m=0}^\infty (-t)^m\Big]\Big[\sum_{m=0}^\infty 
\underbrace{\binom{-1/2}{m}}_{=\big(\frac{-1}{4}\big)^m \big(\substack{2m\\m}\big)}t^{2m}\Big]\\
= -\sqrt{2} \sum_{m=0}^\infty \Big[\sum_{n=0}^m (-t)^{m-n}\binom{2n}{n}\Big(\frac{-1}{4}\Big)^nt^{2n} \Big]
= -\sqrt{2}\sum_{m=0}^\infty (-1)^mt^m\sum_{n=0}^m\binom{2n}{n}\Big(\frac{t}{4}\Big)^n
\end{gather*}
for any $|t|<1$ by taking the Cauchy product. Thus the $k$th derivative of 
$\operatorname{arcsinh}[(1{-}t)/(1{+}t)]$ at $t=0$ only consists of the coefficients with exponent $n+m=k-1$ of $t$. Explicitly,
\begin{align*}
\frac{d^k}{dt^k}\operatorname{arcsinh}\Big(\frac{1{-}t}{1{+}t}\Big)
\Big|_{t=0}&=-\sqrt{2}\underbrace{\frac{d^{k-1}}{dt^{k-1}}t^{k-1}\Big|_{t=0}}_{=(k-1)!}
\sum_{\substack{m=0\\0\leq k-m-1\leq m}}^\infty \hspace*{-15pt}(-1)^m
\begin{pmatrix}2k-2m-2\\{k-m-1}\end{pmatrix}\Big(\frac{1}{4}\Big)^{k-m-1}
\end{align*}
for all $k\in\mathbb N$ as $n\in\lbrace 0,\ldots,m\rbrace$. The condition $0\leq k-m-1\leq m$ 
translates to $m\leq k-1\leq 2m$, so $(k{-}1)/2\leq m\leq k-1$ and thus
\begin{align*}
\frac{d^k}{dt^k}\operatorname{arcsinh}\Big(\frac{1{-}t}{1{+}t}\Big)\Big|_{t=0}&=-\sqrt{2}(k{-}1)!
\sum_{m=\lceil \frac12(k{-}1)\rceil}^{k-1} (-1)^m\binom{2{k-2m-2}}{k-m-1}\Big(\frac{1}{4}\Big)^{k-m-1}\\
&=\sqrt{2}(-1)^k(k{-}1)!\sum_{m=0}^{\lfloor \frac12(k-1)\rfloor} (-1)^m\binom{2m}{m}\Big(\frac{1}{4}\Big)^{m},
\end{align*}
where the second equality follows by substituting $m$ with $k-1-m$.
One then obtains
\begin{align*}
[ \parity_{\textup{BJ}}]_{nn}&= \operatorname{arcsinh}(1)+
\sum_{k=1}^n \frac{1}{k!}\frac{d^k}{dt^k}\operatorname{arcsinh}\Big(\frac{1{-}t}{1{+}t}\Big)\Big|_{t=0}\\
&=\operatorname{arcsinh}(1)+\sqrt{2}\sum_{k=1}^n \frac{(-1)^k}{k} 
\sum_{m=0}^{\lfloor \frac12(k-1)\rfloor}\binom{2m}{m}\Big(\frac{-1}{4}\Big)^{m}.
\end{align*}
To get \eqref{eq:M_nn_finite_sum_1} we shift $k$ to $k+1$. 
Due to \eqref{eq:M_nn_finite_sum_1} and Lemma \ref{inf_sum_conv}, 
the limit $\lim_{n\to\infty}[ \parity_{\textup{BJ}}]_{nn}$ exists. Now consider $\operatorname{arcsinh}[(1{-}t)/(1{+}t)]$ 
and its Taylor series $\sum_{k=0}^\infty a_kt^k$ around $t_0=0$ for any $|t|<1$. By Lemma \ref{Mnn_generation_function},
\begin{align*}
\sum_{k=0}^\infty a_kt^k=\operatorname{arcsinh}\Big(\frac{1{-}t}{1{+}t}\Big)
&=(1{-}t)\frac{1}{1{-}t}\operatorname{arcsinh}\Big(\frac{1{-}t}{1{+}t}\Big)=\sum_{n=0}^\infty [ \parity_{\textup{BJ}}]_{nn}(1{-}t)t^n\\
&=[ \parity_{\textup{BJ}}]_{00}+\sum_{n=1}^\infty ([ \parity_{\textup{BJ}}]_{nn}-[ \parity_{\textup{BJ}}]_{(n-1)(n-1)}) t^n,
\end{align*}
thus $\sum_{k=0}^n a_k=[ \parity_{\textup{BJ}}]_{nn}$ for 
any $n\in\mathbb N_0$. By Lemma \ref{inf_sum_conv}, 
$\sum_{k=0}^\infty a_k=\lim_{n\to\infty}[ \parity_{\textup{BJ}}]_{nn}$ exists so 
Abel's theorem \cite[Th.~8.2]{hardy2015course} yields
$
\lim_{n\to\infty}[ \parity_{\textup{BJ}}]_{nn}=\sum_{k=0}^\infty a_k
=\lim_{t\to 1^-}\operatorname{arcsinh}[(1{-}t)/(1{+}t)]=\operatorname{arcsinh}(0)=0
$
as claimed.\qed
\end{proof}

\section{Direct Recursive Calculation of the Matrix Elements \label{app:directrecursive}}
The non-zero matrix elements  are defined by 
a set of rational numbers 
\begin{equation}
\label{BJoprtorationalnumbers}
M_{k\ell} := [ \parity_{\textup{BJ}} ]_{k+4\ell,k} / (\Gamma_{k\ell}) - \delta_{\ell0}\mathrm{arcsh}(1)/\sqrt{2},
\end{equation}
where the indexing $k,\ell \in \{0,1,2, \dots\}$ is now relative to the diagonal
(where $\ell=0$) and $\Gamma_{k\ell} := \gamma_{k+4\ell,k}= 2^{-2\ell + {1}/{2}} \sqrt{{k!}/{(k{+}4\ell)!}}$.
Here, $\delta_{nm}$ is the Kronecker delta and 
note the symmetry $[ \parity_{\textup{BJ}} ]_{k, k+4\ell}  = [ \parity_{\textup{BJ}} ]_{k+4\ell,k} $.
For example, the values $M_{k0}$
define the diagonal of the Born-Jordan parity operator  $[ \parity_{\textup{BJ}} ]_{kk}$
up to the constants $\Gamma_{k0} =  \sqrt{2}$ and $\mathrm{arcsh}(1)/\sqrt{2}$, 
compare to Fig.~\ref{BJmatrixPlot}.
These rational numbers appear to satisfy the following recursive relations
\begin{equation*}
M_{k{+}4,\ell} = 
\tfrac{1}{k+4} M_{k{+}3,\ell}
+
\tfrac{4 \ell+2 k+5}{(k+3) (k+4)}  M_{k{+}2,\ell} 
 +
\tfrac{4 \ell+k+2}{(k+3) (k+4)}  M_{k{+}1,\ell} 
+
\tfrac{(4 \ell+k+1) (4 \ell+k+2)}{(k+3) (k+4)} M_{k\ell},
\end{equation*}
i.e.~each element in a column is determined by the previous four values.
Calculating a column requires, however, the first four elements
$M_{0\ell}, M_{1\ell}, M_{2\ell}, M_{3\ell}$ as initial conditions.
Surprisingly, the first four rows appear to satisfy the following recursive
relations
\begin{align*}
M_{0,\ell+2} &= 4 [ (27+56 \ell+32 \ell^2) M_{0,\ell+1} - 16 \ell (1 {+} 4 \ell) (2 {+} 4 \ell) (3 {+} 4 \ell) M_{0\ell} ] \\
M_{1,\ell+2} &= 4 [ (39 + 72 \ell + 32 \ell^2) M_{1,\ell+1} - 16 \ell (2 {+} 4 \ell) (3 {+} 4 \ell) (5 {+} 4 \ell) M_{1\ell} ] \\
M_{2,\ell+2} &= 4 [ (55 + 88 \ell + 32 \ell^2)  M_{2,\ell+1} - 16 \ell (3 {+} 4 \ell) (5 {+} 4 \ell) (6 {+} 4 \ell)  M_{2\ell} ] \\
M_{3,\ell+2} &= 4 [ (75 + 104 \ell + 32 \ell^2) M_{3,\ell+1} - 16 \ell (5 {+} 4 \ell) (6 {+} 4 \ell) (7 {+} 4 \ell)  M_{3\ell} ].
\end{align*}
Ultimately, eight initial values
$M_{0,0}=0$, $M_{0,1}=4$, $M_{1,0}=-1$, $M_{1,1}=-8$, $M_{2,0}=-1/2$, $M_{2,1}=6$, $M_{3,0}=-2/3$, and $M_{3,1}=-4$
appear to determine the Born-Jordan parity operator via the above recursion relations for the elements $M_{k\ell}$.


\end{document}